\newtheorem{THE}{Theorem}
\newtheorem{COR}{Corollary}
\newtheorem{LEM}{Lemma}
\newtheorem{CLM}{Claim}
\newtheorem{problem}{Problem}
\newcommand{\naturals}{\mathbb{N}}
\newcommand{\cc}[1]{{\mbox{\textnormal{\textbf{#1}}}}}  
\newcommand{\NP}{\cc{NP}}
\newcommand{\paraNP}{\cc{para-NP}}
\newcommand{\FPT}{\cc{FPT}}
\newcommand{\Weft}{{\cc{W}}}
\newcommand{\W}[1]{{\Weft}{{[#1]}}}
\newcommand{\pn}[1]{\textsc{#1}}
\newcommand{\tuple}[1]{\langle{#1}\rangle}  
\newcommand{\instance}[1]{{\mathbb{#1}}}
\newcommand{\insti}{\instance{I}}
\newcommand{\hy}{\hbox{-}\nobreak\hskip0pt}
\newcommand{\SB}{\{\,}%
\newcommand{\SM}{\;{|}\;}%
\newcommand{\SE}{\,\}}%
\newcommand{\es}{;}
\newcommand{\ies}{-}
\newcommand{\gs}{\#}
\newcommand{\vertt}{\textbf{vt}}
\newcommand{\rep}{\textbf{rp}}
\newcommand{\enu}{\textbf{enu}}
\newcommand{\edges}{\textbf{e}}
\newcommand{\tij}{\textbf{t}}
\newcommand{\tvj}{\textbf{t}}
\newcommand{\tv}{\textbf{t}}
\newcommand{\pij}{\textbf{p}}
\newcommand{\pijs}{\textbf{pe}}
\begin{document}

\title{A Parameterized Study of Maximum Generalized Pattern Matching Problems}

\author{Sebastian Ordyniak\thanks{Research funded by 
      Employment of Newly Graduated Doctors of Science for
      Scientific Excellence (CZ.1.07/2.3.00/30.0009).} \and Alexandru Popa\\[0.1cm]
    Faculty of Informatics, Masaryk University, Brno, Czech Republic,\\ 
    \texttt{sordyniak@gmail.com, popa@fi.muni.cz}}

\maketitle
\date{ }

\begin{abstract} 
  The generalized function matching (GFM) problem  
  has been intensively studied starting with [Ehrenfeucht and Rozenberg, 1979].
  Given a pattern p and a text t, the goal is to find
  a mapping from the letters of p to non-empty substrings of t,
  such that applying the mapping to p results in t. 
  Very recently, the problem has been investigated within the framework
  of parameterized complexity [Fernau, Schmid, and Villanger, 2013].

  In this paper we study the parameterized complexity of
  the optimization variant of GFM (called Max-GFM), which has been introduced in
  [Amir and Nor, 2007]. Here, one is 
  allowed to replace some of the pattern letters with some special
  symbols ``?'', termed wildcards or don't cares, which can be mapped to
  an arbitrary substring of the text. The goal is to minimize the number
  of wildcards used. 

  We give a complete classification of the parameterized complexity 
  of Max-GFM and its variants under a wide range of parameterizations,
  such as, the number of occurrences of a letter in the text,
  the size of the text alphabet, the number of occurrences of a letter in the pattern, the size of the pattern
  alphabet, the maximum length of a string matched to any pattern letter,
  the number of wildcards and the maximum size of a string
  that a wildcard can be mapped to.
\end{abstract}

\section{Introduction}

In the generalized function matching problem one is given a text $t$ and a pattern $p$ and the goal is to decide whether there is a match between $p$ and $t$, where a
single letter of the pattern is allowed to match multiple
letters of the text (we say that $p$ GF-matches $t$). For example, if the text is $t = xyyx$ and the
pattern is $p = aba$, then a generalized function match (on short, GF-match)
is $a \rightarrow x, b \rightarrow yy$, but if $t = xyyz$ and $p =
aba$, then there is no GF-match. If, moreover, the matching is required to be injective, then we term the problem generalized parameterzied matching (GPM). In~\cite{AmirN07}, Amir and Nor describe applications of GFM in various areas such as software engineering, image searching, DNA analysis, poetry and music analysis, or author validation.  GFM is also related to areas such as (un-)avoidable patterns~\cite{Salomaa94}, word equations~\cite{MateescuS94} and the ambiguity of morphisms~\cite{frey05}.

GFM has a long history starting from 1979. Ehrenfeucht and Rozenberg~\cite{Ehrenfreucht1979} show that GFM is NP-complete. Independently, Angluin~\cite{Angluin79, Angluin80} studies a more general variant of GFM where the pattern may  contain also letters of the text alphabet. Angluin's paper received a lot of attention, especially in the learning theory community~\cite{Shinohara83,Reidenbach06,Reidenbach08} (see~\cite{Shinohara08} for a survey) but also in many other areas. 

Recently, a systematic study of the classical complexity of a number
of variants of GFM and GPM under various restrictions has been
carried out~\cite{FernauSchmid13,Schmid13}. It was
shown that GFM and GPM remain \NP\hy complete for many natural
restrictions. Moreover, the study of GFM and its variants within
the framework of parameterized complexity has recently been initiated~\cite{FernauSchmidVillanger13}.

In this paper we study the parameterized complexity of the
optimization variant of GFM (called Max-GFM) and its variants, where
one is allowed to replace some of the pattern letters with some special
symbols ``?'', termed wildcards or don't cares, which can be mapped to
an arbitrary substring of the text. The goal is to minimize the number
of wildcards used. The problem was first introduced to the pattern
matching community by Amir and Nor~\cite{AmirN07}. They show that if the pattern
alphabet has constant size, then a polynomial algorithm can be found,
but that the problem is \NP\hy complete otherwise. Then,
in~\cite{CliffordHPS09}, it is shown the \NP\hy hardness of the GFM (without
wildcards) and the \NP\hy hardness of the GFM when the function $f$ is
required to be an injection (named GPM). More specifically, 
 GFM is \NP\hy hard even if the text alphabet is binary and each
letter of the pattern is allowed to map to at most two letters
of the text~\cite{CliffordHPS09}. In the same paper it is given
a $\sqrt{OPT}$ approximation algorithm for the optimization
variant of GFM where the goal is to search for a pattern $p'$ that
GF-matches $t$ and has the smallest Hamming distance to $p$. In~\cite{CliffordP10} the
optimization versions of GFM and GPM are proved to be {\bf APX}-hard.



\paragraph*{\sc Our results}


Before we discuss our results, we give  formal definitions of the
problems. In the following let $t$ be a text over an alphabet
$\Sigma_t$ and let $p=p_1\dots p_m$
be a pattern over an alphabet $\Sigma_p$. We say that $p$
\emph{GF-matches} $t$ if there is a function $f : \Sigma_p \rightarrow
\Sigma_t^+$ such that $f(p_1)\dots f(p_m)=t$. To improve the
presentation we will sometimes abuse notation by writing $f(p)$
instead of $f(p_1)\dots f(p_m)$.
%
%
Let $k$ be a natural number. We say that a pattern $p$
\emph{$k$-GF-matches} $t$ if there is a text $p'$ over alphabet
$\Sigma_p \cup \{?_1, \dots,?_k\}$ of Hamming distance
at most $k$ from $p$ such that $p'$ GF-matches $t$.
\begin{problem}[\pn{Maximum Generalized Function Matching}] Given a
  text $t$, a pattern $p$, and an integer $k$, decide whether $p$ $k$-GF-matches $t$.
\end{problem}
The Max-GFM can be seen as the optimization variant of GFM in which we want
to replace some of the pattern letters with special wildcard
symbols, i.e., the symbols $?_1,\dots,?_k$, which can be mapped to any
non-empty substring of the text.

We  also study the Max-GPM
problem. The only difference between Max-GPM and Max-GFM is that for Max-GPM the function $f$ is
required to be injective. The notions of GP-matching and
$k$-GP-matching are defined in the natural way, e.g., we say a pattern
$p$ \emph{GP-matches} a text $t$ if $p$ \emph{GF-matches} $t$ using an
injective function. 

In this paper we study the parameterized complexity of the two
problems using a wide range of parameters: maximum number of occurrences of a letter in the text $\# \Sigma_t$,
maximum number of occurrences of a letter in the pattern $\# \Sigma_p$, size of the text alphabet
$|\Sigma_t|$, size of the pattern alphabet $|\Sigma_p|$, the maximum
length of a substring of the text that a letter of the pattern alphabet can be
mapped to (i.e., $\max_i | f(p_i) |$), the number of wildcard letters $\# ?$, and the maximum
length of a substring of the text that a wildcard can be
mapped to, denoted by $\max | f(?) |$.

Our results are summarized in Table~\ref{table:tab2}. We verified
the completeness of our results using a simple computer program.
In particular, the program checks for every of the $128$ possible
combinations of parameters $\mathcal{C}$ that the table contains either: i) a
superset of $\mathcal{C}$ under which Max-GFM/GPM is hard (and
thus, Max-GFM/GPM is hard if parameterized by $\mathcal{C}$); or ii) a
subset of $\mathcal{C}$ for which Max-GFM/GPM is fpt (and then
we have an fpt result for the set of parameters $\mathcal{C}$). Since
some of our results do not hold for both Max-GFM and Max-GPM, we carried out two separate checks, one for Max-GFM and one for Max-GPM.

\begin{table}[ht]
  \center
  \begin{tabular}{c|c|c|c|c|c|c|c}
    $\#\Sigma_t$ & $|\Sigma_t|$ & $\#\Sigma_p$ & $|\Sigma_p|$ & $\max_i |f(p_i)|$ & $\# ?$ & $\max |f(?)|$ & Complexity  \\
    \hline
    par & par & -- & -- & -- & -- & -- & \FPT\ (Cor.~\ref{cor:fpt-occt-sigt}) \\
    -- & par & -- & par & par & -- & -- & \FPT\ (Th.~\ref{the:fpt-max-gfm-text}) \\
    -- & par & -- & -- & par & -- & --  & \FPT\ only GPM (Cor.~\ref{cor:fpt-max-gpm}) \\
    -- & -- & par & par & par & -- & par & \FPT\ (Cor.~\ref{cor:fpt-max-gfm-wild}) \\
    -- & -- & -- & par & par & par & par & \FPT\ (Th.~\ref{the:fpt-max-gfm-wild}) \\    
    par & -- & par & par & par & par & -- & \W{1}\hy h (Th.~\ref{the:GFM-mobile-2}) \\
    par & -- & par & par & -- & par & par & \W{1}\hy h (Th.~\ref{the:GFM-mobile-1}) \\
    par & -- & par & -- & par & par & par & \W{1}\hy h (Th.~\ref{the:GFM-occt-max}) \\
    -- & par & par & par & -- & par & par & \W{1}\hy\ h (\cite[Th. 2.]{FernauSchmidVillanger13}) \\
    -- & -- & par & par & par & par & -- & \W{1}\hy\ h (Th.~\ref{the:GFM-questionmarksize-hard}) \\
    -- & -- & -- & par & par & -- & par & \W{1}\hy\ h (Th.~\ref{the:GFM-questionmark-hard}) \\
    -- & par & par & -- & par & par & par & \paraNP\hy h (\cite[Cor. 1]{AmirN07}), \\
    -- & par & par & -- & par & -- & -- & \paraNP\hy h only GFM~\cite{FernauSchmid13} \\
    -- & -- & par & -- & par & -- & -- & \paraNP\hy h only GPM~\cite{FernauSchmid13} \\
%
  \end{tabular}
  \caption{Parameterized Complexity of Max-GFM and Max-GPM .}
  \label{table:tab2}
\end{table}
%

The paper is organized as follows. In Section~\ref{sec:prel} we give preliminaries, in Section~\ref{sec:fpt} we present our fixed-parameter algorithms and in Section~\ref{sec:hardness} we show our hardness results.

\section{Preliminaries}
\label{sec:prel}

We define the basic notions of Parameterized Complexity and
refer to other sources~\cite{DowneyFellows99,FlumGrohe06} 
for an in-depth treatment. 
A \emph{parameterized problem} is a set of pairs 
$\tuple{\insti,k}$,
the \emph{instances}, where $\insti$ is the main part and $k$ 
the \emph{parameter}. The parameter is usually a non-negative integer.
A parameterized problem is \emph{fixed-parameter tractable (fpt)} if
there exists an algorithm that solves any instance $\tuple{\insti,k}$ of
size $n$ in time $f(k)n^{c}$ where $f$ is an arbitrary computable
function and $c$ is a constant independent of both $n$ and $k$. 
\FPT\ is the class of all fixed-parameter
tractable decision problems. Because we focus on fixed-parameter
tractability of a problem we will sometimes use the notation $O^*$ to
suppress exact polynomial dependencies, i.e.,
a problem with input size $n$ and parameter $k$ can be solved in 
time $O^*(f(k))$ if it can be solved in time $O(f(k)n^c)$ for some
constant $c$. 

Parameterized complexity offers a completeness theory, similar
to the theory of NP-completeness, that allows the accumulation of
strong theoretical evidence that some parameterized problems
are not fixed-parameter tractable. This theory is based on a
hierarchy of complexity classes
$\FPT \subseteq \W{1} \subseteq \W{2} \subseteq \W{3} \subseteq \cdots$
where all inclusions are believed to be strict. 
An \emph{fpt-reduction} from a parameterized problem $P$ to a
parameterized problem $Q$
is a mapping $R$ from instances of $P$ to instances of $Q$ such
that
(i)~$\tuple{\insti,k}$ is a {\sc Yes}-instance of $P$ if and only if $\tuple{\insti',k'}=R(\insti,k)$
is a {\sc Yes}-instance of $Q$,
(ii)~there is a computable function $g$ such that $k' \leq g(k)$, and
(iii) there is a computable function $f$ and a constant $c$ such that $R$ can be
computed in time $O(f(k) \cdot n^c)$, where $n$ denotes the size
of $\tuple{\insti,k}$.

For our hardness results we will often reduce from the following
problem, which is well-known to be \W{1}\hy
complete~\cite{Pietrzak03}.
\begin{quote}
  \noindent
  \pn{Multicolored Clique}\\
  \noindent  
  \emph{Instance:} A $k$-partite graph $G=\tuple{V,E}$ with 
  a partition $V_1,\dots,V_k$ of $V$.
  \\
  \noindent  
  \emph{Parameter:} The integer $k$.\\  
  \noindent
  \emph{Question:} Are there nodes $v_1,\dots,v_k$ such that
  $v_i\in V_i$ and $\{v_i,v_j\}\in E$ for all $i$ and $j$ with $1 \leq
  i < j \leq k$ 
  (i.e. the subgraph of $G$ induced by $\{v_1,\dots,v_k\}$ is a clique
  of size $k$)?
\end{quote}
For our hardness proofs we will often make the additional assumptions
that (1) $|V_i|=|V_j|$ for every $i$ and $j$ with $1 \leq i < j \leq
k$ and (2) $|E_{i,j}|=|E_{r,s}|$ for every $i$, $j$, $r$, and $s$ with
$1 \leq i < j \leq k$ and $1 \leq r < s \leq k$, where $E_{i,j}=\SB
\{u,v\} \in E \SM u \in V_i \textup{ and } v \in V_j \SE$ for every
$i$ and $j$ as before. To see that \pn{Multicolored Clique}
remains $\W{1}$\hy hard under these additional restrictions we can
reduce from \pn{Multicolored Clique} to its more restricted version
using a simple padding construction as
follows. Given an instance $\tuple{G,k}$ of \pn{Multicolored Clique}
we construct an instance of its more restricted version by adding 
edges (whose endpoints are new vertices) between parts
(i.e. $V_1,\dotsc,V_k$) that do not already have the maximum number of
edges between them and then adding isolated vertices to parts that do
not already have the maximum number of vertices.

Even stronger evidence that a parameterized problem is not
fixed-parameter tractable can be obtained by showing that the problem 
remains \NP\hy complete even if the parameter is a constant. The class of
these problems is called \paraNP.

A \emph{square} is a string consisting of two copies of the same
(non-empty) string. We say that a string is \emph{square-free} if it
does not contain a square as a substring.

\section{Fixed-parameter Tractable Variants}
\label{sec:fpt}

In this section we show our fixed-parameter tractability results 
for Max-GFM and Max-GPM. In particular, we show that Max-GFM and
Max-GPM are fixed-parameter tractable parameterized by $|\Sigma_t|$, $|\Sigma_p|$, and $\max_i |f(p_i)|$, and
also parameterized by $\# ?$, $\max|f(?)|$,
$|\Sigma_p|$, and $\max_i |f(p_i)|$. We start by showing
fixed-parameter tractability for the parameters $|\Sigma_t|$,
$|\Sigma_p|$, and $\max_i |f(p_i)|$. We need the following lemma.
\begin{LEM}\label{lem:dyn}
  Given a pattern $p=p_1\dots p_m$ over an alphabet $\Sigma_p$, a
  text $t=t_1\dots t_n$ over an alphabet $\Sigma_t$, a natural number $q$, and a
  function $f : \Sigma_p \rightarrow \Sigma_t^+$, then there is a
  polynomial time algorithm deciding whether $p$ $q$-GF/GP-matches $t$
  using the function $f$.
\end{LEM}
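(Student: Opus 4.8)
The plan is to solve the problem by dynamic programming over prefixes of $p$ and $t$. I would introduce a Boolean table $D[i][j][r]$, for $0 \le i \le m$, $0 \le j \le n$, and $0 \le r \le q$, whose value is \emph{true} exactly when the prefix $p_1\dots p_i$ can be matched to the prefix $t_1\dots t_j$ consistently with $f$ using $r$ wildcards. The base case is $D[0][0][0]=\textit{true}$, and the instance is a \textsc{Yes}-instance precisely when $D[m][n][r]$ holds for some $r \le q$. The transitions mimic the two ways of extending a partial match. From a true entry $D[i][j][r]$ I would either (i)~match the next pattern letter $p_{i+1}$ through $f$: if $j+|f(p_{i+1})| \le n$ and the block $t_{j+1}\dots t_{j+|f(p_{i+1})|}$ equals $f(p_{i+1})$, then set $D[i+1][\,j+|f(p_{i+1})|\,][r]$ to true; or (ii)~spend a wildcard on position $i+1$: for every length $\ell \ge 1$ with $j+\ell \le n$, set $D[i+1][j+\ell][r+1]$ to true, which is legitimate because a wildcard may map to any non-empty substring. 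The table has $O(mnq)$ entries and each is processed in $O(n)$ time, so the whole computation runs in $O(mn^2q)$ time; this already settles the GF-matching case.

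For GP-matching I additionally have to guarantee that the extended function $f'$ (which equals $f$ on $\Sigma_p$ and sends each used wildcard symbol to its substring) is injective. The observation that keeps the algorithm polynomial is that we are free to choose how many distinct wildcard symbols to use: two wildcard positions that receive the same substring may simply be assigned the same symbol $?_j$, so the wildcard substrings need \emph{not} be pairwise distinct. Consequently injectivity of $f'$ reduces to two conditions: that $f$ is injective on the pattern letters surviving in $p'$, and that every substring assigned to a wildcard differs from $f(a)$ for each surviving letter $a$. Assuming, as in the intended use where $f$ is a candidate injective function, that $f$ is injective on $\Sigma_p$, the first condition is free, and the second is enforced locally by adding to transition (ii) the test $t_{j+1}\dots t_{j+\ell} \notin \{\, f(a) \mid a \in \Sigma_p \,\}$.

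I expect the injectivity requirement of the GP variant to be the only real obstacle. The naive reading forces the wildcard substrings to be mutually distinct, which couples the global decomposition of $t$ and appears to break the clean per-position recurrence; the resolution is exactly the symbol-reuse observation above, which turns a global distinctness constraint into a local comparison against the fixed set of $f$-images. A secondary, minor point is that the conditions on the ``surviving'' letters formally depend on which positions end up being wildcarded; this causes no trouble once we restrict to injective $f$ and note that replacing a position by a wildcard only relaxes these constraints, so the monotone recurrence remains correct.
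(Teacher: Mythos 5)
Your proof is correct, and its core is the same prefix dynamic program as the paper's: the paper's table $g(i,j)$ stores the maximum number of non-wildcarded positions over all ways of matching $p_1\dots p_i$ onto $t_1\dots t_j$, which encodes exactly the same information as your Boolean table $D[i][j][r]$, and the recurrences coincide. The genuine difference is in the GP case, and there your version is the more careful one. The paper deals with injectivity solely by answering \textsc{No} when the given $f$ is not injective and then runs the unmodified recurrence, so it places no constraint whatsoever on the substrings assigned to wildcards; but a wildcard mapped to $f(a)$ for some $a\in\Sigma_p$ destroys injectivity of the extended function. Concretely, for $p=ab$, $t=xx$, $f(a)=x$, $f(b)=y$ and $q=1$, the paper's algorithm answers \textsc{Yes} (it wildcards $b$ and maps the wildcard to $x$), yet no injective extension of $f$ gives a $1$-GP-matching, since the only viable $p'$ forces $f'(?_1)=x=f(a)$. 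Your two observations---that wildcard positions receiving equal substrings may share one symbol $?_j$, so pairwise distinctness of wildcard images is never a constraint, and that the only genuine constraint is the local test $t_{j+1}\dots t_{j+\ell}\notin\{\,f(a)\mid a\in\Sigma_p\,\}$---are precisely what is needed to make the GP case go through; the paper's own proof is silent on this point. (Under the alternative reading in which injectivity is required only of letters actually occurring in $p'$, your test is mildly conservative, but, as you note, this is harmless, and it is immaterial for the intended use in Theorems~\ref{the:fpt-max-gfm-text} and~\ref{the:fpt-max-gfm-wild}, where all candidate functions $f$ are enumerated.) A further small point in your favour: your base case $D[0][0][0]$ is the right one, whereas the paper's stipulation that $g(0,j)=0$ for all $j$ would let the empty pattern prefix ``match'' a non-empty text prefix; it must be read as $g(0,0)=0$ and $g(0,j)=-\infty$ for $j>0$ for the recurrence to be sound.
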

\begin{proof}
  If we are asked whether $p$ $q$-GP-matches
  $t$ and $f$ is not injective, then we obviously provide a negative answer. 
  Otherwise, we use a dynamic programming algorithm that is similar in spirit to
  an algorithm in~\cite{CliffordHPS09}. Let $\Sigma_p = \{a_1, \dots
  a_k\}$. For every $0 \leq i \leq j \leq n$, we define the function
  $g(i,j)$ to be the Hamming GFM/GPM-similarity (i.e., $m$ minus the minimum number of wildcards needed)
  between $t_1t_2 \dots t_j$ and $p_1p_2 \dots p_i$. 
  Then, we obtain the Hamming GFM/GPM-similarity between $p$ and $t$ as
  $g(m,n)$. Consequently, if $m-g(m,n)>q$, we return \textsc{No},
  otherwise we return \textsc{Yes}.

  We now show how to recursively compute $g(i,j)$. If $i=0$, we set $g(i,j)=0$ 
  and if $i \leq j$, we set:

$$g(i,j) = \max_{1 \leq k \leq j} \{ g(i-1,j-k) + I(t_{j-k+1} \dots
  t_j, f(p_i) \}$$

  \noindent where $I(s_1, s_2)$ is $1$ if the strings $s_1$, and $s_2$
  are the same, and $0$ otherwise.
  
  We must first show that the dynamic programming procedure computes the
  right function and then that it runs in polynomial time. We can see
  immediately that $g(0,i) = 0$ for all $i$ because in this case the
  pattern is empty. The recursion step of $g(i,j)$ has two cases:
  If $t_{j-|f(p_i)|+1} \dots t_j = f(p_i)$, then it is possible to map $p_i$ to $f(p_i)$, 
  and we can increase the number of mapped letters by one. Otherwise,
  we cannot increase the Hamming GFM/GPM-similarity. However, we
  know that $p_i$ has to be set to a wildcard and therefore we find
  the maximum of the previous results for different length
  substrings that the wildcard maps to. 
  
  It is straightforward to check that $g(i,j)$ can be computed in
  cubic time.
\end{proof}
\begin{THE}\label{the:fpt-max-gfm-text}\sloppypar
  Max-GFM and Max-GPM parameterized by 
  $|\Sigma_t|$, $|\Sigma_p|$, and $\max_i |f(p_i)|$ are fixed-parameter
  tractable.
\end{THE}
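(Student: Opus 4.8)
The plan is to combine a brute-force enumeration of candidate functions with the polynomial-time tester of Lemma~\ref{lem:dyn}. Set $L = \max_i |f(p_i)|$, one of our three parameters. The key observation is that once we fix the images of the non-wildcard pattern letters, the task of optimally placing the wildcards is exactly what Lemma~\ref{lem:dyn} solves. Hence I do not need to guess which letters become wildcards; I only need to guess the underlying function $f : \Sigma_p \rightarrow \Sigma_t^+$ whose images all have length at most $L$.

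Concretely, I would enumerate every function $f$ assigning to each letter of $\Sigma_p$ a string over $\Sigma_t$ of length between $1$ and $L$. The number of admissible strings for a single pattern letter is $\sum_{\ell=1}^{L} |\Sigma_t|^\ell \leq L\cdot |\Sigma_t|^{L}$, so the number of candidate functions is at most $\bigl(L\cdot |\Sigma_t|^{L}\bigr)^{|\Sigma_p|}$, which is bounded by a computable function of the three parameters $|\Sigma_t|$, $|\Sigma_p|$, and $L=\max_i|f(p_i)|$ alone. For each candidate $f$ I invoke Lemma~\ref{lem:dyn} with $q=k$, obtaining a polynomial-time test of whether $p$ $k$-GF-matches (respectively $k$-GP-matches) $t$ using $f$. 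I accept if and only if at least one candidate $f$ passes. The total running time is therefore the product of the (parameter-bounded) number of candidates and a polynomial, i.e.\ of the form $g(|\Sigma_t|,|\Sigma_p|,L)\cdot n^{c}$, which is exactly an fpt running time.

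For correctness I would argue both directions. If some enumerated $f$ yields a $k$-match via Lemma~\ref{lem:dyn}, then by definition $p$ $k$-GF/GP-matches $t$, so a \textsc{Yes} answer is sound. Conversely, suppose $p$ $k$-GF/GP-matches $t$; then there is a witnessing base function whose images each have length at most $L$ (by definition of the parameter $\max_i|f(p_i)|$) together with a choice of at most $k$ wildcard positions. The base function agrees with one of the enumerated candidates on the non-wildcard letters, and since Lemma~\ref{lem:dyn} is free to override any letter by a wildcard, that candidate will pass the test with $q=k$. The GPM case needs no extra work: non-injective candidates are simply rejected inside Lemma~\ref{lem:dyn}.

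The only point requiring care, rather than a genuine obstacle, is verifying that the three parameters jointly bound the search space: the bound $L$ controls the length of each image, $|\Sigma_t|$ controls the per-position alphabet, and $|\Sigma_p|$ controls how many images must be fixed. Dropping any one of these would break the count (e.g.\ without bounding $L$ the number of strings per letter is unbounded), which is consistent with the hardness rows of Table~\ref{table:tab2}. Everything else is routine, as the heavy lifting—optimal wildcard placement for a fixed $f$—is already delivered by Lemma~\ref{lem:dyn}.
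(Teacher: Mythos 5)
Your proposal is correct and follows essentially the same route as the paper's own proof: enumerate all candidate functions $f : \Sigma_p \rightarrow \Sigma_t^+$ with image lengths bounded by $\max_i |f(p_i)|$ (a count bounded by a function of the three parameters), and test each one with the dynamic program of Lemma~\ref{lem:dyn}. Your write-up is somewhat more explicit about the two directions of correctness and the injectivity check for Max-GPM, but the algorithm and the counting argument are the same.
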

\begin{proof}
  Let $p$, $t$, and $q$ be an instance of Max-GFM or
  Max-GPM, respectively. The pattern $p$ $q$-GF/GP-matches $t$ if and only if there is a function 
  $f : \Sigma_p \rightarrow \Sigma_t^+$ such that $p$ $q$-GF/GP-matches $t$
  using $f$. Hence, to solve
  Max-GFM/Max-GPM, it is sufficient
  to apply the algorithm from Lemma~\ref{lem:dyn}
  to every function $f : \Sigma_p \rightarrow \Sigma_t^+$ that could
  possible constitute to a $q$-GF/GP-matching from
  $p$ to $t$. Because there
  are at most ${(|\Sigma_t|)^{\max_i|f(p_i)|}}^{|\Sigma_p|}$ such
  functions $f$ and the algorithm from Lemma~\ref{lem:dyn} runs in
  polynomial time, the running time of this algorithm is
  $O^*({(|\Sigma_t|)^{\max_i|f(p_i)|}}^{|\Sigma_p|})$, and hence
    fixed-parameter tractable in $|\Sigma_t|$, $|\Sigma_p|$, and
    $\max_i |f(p_i)|$.
\end{proof}

Because in the case of Max-GPM it holds that if $|\Sigma_t|$ and
$\max_i |f(p_i)|$ is bounded then also $\Sigma_p$ is bounded by
$|\Sigma_t|^{\max_i |f(p_i)|}$, we obtain the following corollary.
\begin{COR}\label{cor:fpt-max-gpm}\sloppypar
  Max-GPM parameterized by 
  $|\Sigma_t|$ and $\max_i |f(p_i)|$ is fixed-parameter
  tractable.
\end{COR}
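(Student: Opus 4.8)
The plan is to derive the corollary directly from Theorem~\ref{the:fpt-max-gfm-text} by showing that, for Max-GPM, the third parameter $|\Sigma_p|$ is already controlled by the other two. Write $L=\max_i|f(p_i)|$ and $s=|\Sigma_t|$, and let $T_{\le L}$ denote the set of non-empty strings over $\Sigma_t$ of length at most $L$. Since every letter of $\Sigma_p$ that is actually matched (i.e.\ not replaced by a wildcard) is sent by $f$ to some element of $T_{\le L}$, the first step is to record the crude bound
\[
  |T_{\le L}| \;=\; \sum_{\ell=1}^{L} s^{\ell} \;\le\; s^{\,L+1},
\]
so the number of distinct substrings available as images of pattern letters depends only on the two parameters $s$ and $L$.

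The key observation, and the place where injectivity is used, is the second step: in any GP\hy matching the function $f$ is injective, so distinct kept letters of $\Sigma_p$ must be sent to distinct strings of $T_{\le L}$. Hence at most $|T_{\le L}|\le s^{\,L+1}$ distinct letters of the pattern can be simultaneously matched without a wildcard. I would use this to argue that we may assume $|\Sigma_p|\le s^{\,L+1}$: any pattern letter beyond this budget cannot participate in the injective part of a solution and must, in every occurrence, be turned into a wildcard, so deleting such letters from $\Sigma_p$ (and pre\hy marking their positions as wildcards) yields an equivalent instance whose pattern alphabet has size bounded by a function of $s$ and $L$ alone.

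With $|\Sigma_p|$ thus bounded by $g(s,L)=s^{\,L+1}$, the final step is simply to invoke Theorem~\ref{the:fpt-max-gfm-text}: Max-GPM is fixed\hy parameter tractable in $|\Sigma_t|$, $|\Sigma_p|$, and $\max_i|f(p_i)|$, and substituting the bound on $|\Sigma_p|$ turns its running time $O^*\!\bigl((s^{L})^{|\Sigma_p|}\bigr)$ into a bound of the form $O^*(h(s,L))$, which is \FPT\ in $s$ and $L$.

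I expect the main obstacle to be making the reduction in the second step fully rigorous rather than merely counting. Because wildcards are assigned per position (Hamming distance), a single letter may be kept in some occurrences and wildcarded in others, and when $|\Sigma_p|$ exceeds the budget one must verify that the surplus letters can indeed be forced entirely onto wildcards without affecting solvability---equivalently, one should check that restricting the enumeration of Theorem~\ref{the:fpt-max-gfm-text} to partial injective functions $f$ defined only on a set of at most $|T_{\le L}|$ kept letters still captures every $q$\hy GP\hy matching, and that Lemma~\ref{lem:dyn} extends to such partial $f$. Once this is confirmed, the counting bound on $|\Sigma_p|$ is legitimate and the corollary follows immediately.
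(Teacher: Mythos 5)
Your overall route is the same as the paper's: use injectivity to bound $|\Sigma_p|$ in terms of $|\Sigma_t|$ and $\max_i|f(p_i)|$, then invoke Theorem~\ref{the:fpt-max-gfm-text}. In fact the paper's entire proof is the single observation that for Max-GPM the injective function $f$ must send distinct letters of $\Sigma_p$ to distinct strings over $\Sigma_t$ of length at most $\max_i|f(p_i)|$, so that $|\Sigma_p|\le|\Sigma_t|^{\max_i|f(p_i)|}$; your first and third steps reproduce exactly this.

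The deviation is your second step, and it is where a genuine gap appears. The paper reads its parameterization literally: $f$ is defined (and injective) on all of $\Sigma_p$, and $\max_i|f(p_i)|$ bounds the image length of every letter occurring in $p$, whether or not some of its occurrences are wildcarded. Under that reading, an instance with $|\Sigma_p|$ exceeding the number of available strings is simply a \textsc{No}\hy instance and is rejected outright; no reduction is needed. You instead allow a letter to escape the length bound by being wildcarded in \emph{every} occurrence, and then try to restore the bound by ``deleting such letters and pre\hy marking their positions as wildcards.'' As stated this is not well defined: there is no canonical set of ``surplus'' letters, since different solutions may keep different subsets of $\Sigma_p$, so you cannot decide in advance which letters to delete. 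Your own proposed repair --- enumerating partial injective functions whose domain is some set of at most $|T_{\le L}|$ kept letters --- does not close the gap either: choosing the domain alone already gives on the order of $\binom{|\Sigma_p|}{|T_{\le L}|}$, i.e.\ $n^{O(s^{L+1})}$, candidates, which is an XP bound rather than an \FPT\ one in $s$ and $L$ (extending Lemma~\ref{lem:dyn} to partial functions is easy, but it is not the bottleneck). So under your lenient reading of the problem your argument does not yield the corollary; under the paper's reading the complication you worry about is vacuous, and your second step should simply conclude ``reject if $|\Sigma_p|>|T_{\le L}|$'' instead of performing a reduction.
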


We continue by showing our second tractability result for the
parameters $|\Sigma_p|$, $\max_i |f(p_i)|$, $\# ?$, and $\max|f(?)|$.
\begin{THE}\sloppypar
  \label{the:fpt-max-gfm-wild}
  Max-GFM and Max-GPM parameterized by
  $|\Sigma_p|$, $\max_i|f(p_i)|$, $\# ?$, $\max|f(?)|$,  are
  fixed-parameter tractable.
\end{THE}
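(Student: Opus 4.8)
The plan is to avoid enumerating functions $f$ directly (which would cost a factor depending on $|\Sigma_t|$, and $|\Sigma_t|$ is \emph{not} among our parameters) and instead to reduce to Theorem~\ref{the:fpt-max-gfm-text}. The key structural observation is that whenever a solution exists, the number of distinct letters occurring in $t$ is already bounded by a function of the four parameters. Concretely, suppose $p$ $k$-GF-matches (resp.\ $k$-GP-matches) $t$, witnessed by a pattern $p'=p'_1\dots p'_m$ obtained from $p$ by replacing at most $\# ?$ letters with wildcards, together with a function $f$ so that $f(p'_1)\dots f(p'_m)=t$. Let $A\subseteq\Sigma_t$ be the set of text letters occurring in the images $f(a)$ ranging over the non-wildcard letters $a\in\Sigma_p$. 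Since there are at most $|\Sigma_p|$ distinct pattern letters and each image $f(a)$ has length at most $\max_i|f(p_i)|$, we get $|A|\leq |\Sigma_p|\cdot\max_i|f(p_i)|$.

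Next I would exploit the block structure of a match. The text $t$ is the concatenation $f(p'_1)\dots f(p'_m)$, so every position of $t$ lies in exactly one block $f(p'_i)$. A block arising from a non-wildcard letter contributes only letters of $A$, whereas the blocks arising from wildcards together contain at most $\# ?\cdot\max|f(?)|$ positions and therefore at most $\# ?\cdot\max|f(?)|$ distinct letters. Hence every letter of $\Sigma_t$ either belongs to $A$ or occurs inside a wildcard block, which yields the bound $|\Sigma_t|\leq |\Sigma_p|\cdot\max_i|f(p_i)|+\# ?\cdot\max|f(?)|$ whenever a solution exists.

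The algorithm then follows immediately. After discarding any letters of $\Sigma_t$ that do not actually occur in $t$ (such letters are useless and cannot help a match), we compute $|\Sigma_t|$. If $|\Sigma_t|$ exceeds $|\Sigma_p|\cdot\max_i|f(p_i)|+\# ?\cdot\max|f(?)|$, then by the observation above no match can exist and we return \textsc{No}. Otherwise $|\Sigma_t|$ is bounded by a function of $|\Sigma_p|$, $\max_i|f(p_i)|$, $\# ?$, and $\max|f(?)|$, so we invoke the algorithm of Theorem~\ref{the:fpt-max-gfm-text}, whose running time is fixed-parameter tractable in $|\Sigma_t|$, $|\Sigma_p|$, and $\max_i|f(p_i)|$; composing the bound on $|\Sigma_t|$ with that running time makes it fixed-parameter tractable in our four parameters. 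The argument never uses injectivity of $f$ and Theorem~\ref{the:fpt-max-gfm-text} covers both problems, so it applies uniformly to Max-GFM and Max-GPM.

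The only real content is the alphabet-bounding observation; once it is in place the statement is a direct reduction to the previous theorem. The step I would be most careful about is the accounting in the second paragraph: I must remove letters not occurring in $t$ beforehand, and I must charge the at most $\# ?$ wildcard blocks at most $\max|f(?)|$ positions each, so that the count of ``extra'' letters contributed by wildcards is genuinely at most $\# ?\cdot\max|f(?)|$ and does not secretly depend on $m$ or $n$.
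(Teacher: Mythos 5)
Your proposal is correct, but it reaches the theorem by a genuinely different route than the paper. The paper argues directly about the candidate functions: for each letter $c\in\Sigma_p$, in any $q$-GF/GP-matching the image $f(c)$ is a substring of $t$ determined by its length and its start position $c_s=\sum_{b}\#(b,\overline{p}_{i-1}')|f(b)|$, and it bounds the number of possible values of $c_s$ (prefix counts of pattern letters can deviate from those in $p$ by at most $\#?$, wildcards contribute at most $\#?$ blocks of length at most $\max|f(?)|$, and all image lengths are bounded by the parameters); this yields a parameter-bounded family of candidate functions, each of which is tested with the dynamic program of Lemma~\ref{lem:dyn}. You instead prove a structural fact --- on any Yes-instance at most $|\Sigma_p|\cdot\max_i|f(p_i)|+\#?\cdot\max|f(?)|$ distinct letters occur in $t$ --- and then invoke Theorem~\ref{the:fpt-max-gfm-text} as a black box. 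Both arguments rest on the same decomposition of $t$ into at most $|\Sigma_p|$ distinct non-wildcard images plus at most $\#?$ wildcard blocks, but your packaging is shorter, more modular, and the alphabet bound is of independent interest, whereas the paper's version explicitly identifies the candidate images, which is exactly what its enumeration needs. Two caveats, both minor and both shared with the paper's own conventions: first, your black-box call to Theorem~\ref{the:fpt-max-gfm-text} must be to an algorithm that also enforces the input bound $\max|f(?)|$, which holds once the dynamic program of Lemma~\ref{lem:dyn} is (trivially) modified to cap the length of a wildcard jump --- the paper's own proof of this theorem tacitly relies on the same reading; second, for Max-GPM you should note that discarding letters of $\Sigma_t$ not occurring in $t$ is harmless because injectivity for pattern letters that never occur in $p'$ can always be met by assigning them long, pairwise distinct strings, so shrinking the text alphabet cannot turn a Yes-instance into a No-instance.
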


\begin{proof}
  Let $p$, $t$, and $q$ be an instance of Max-GFM or Max-GPM,
  respectively.

  Observe that if we could go over all possible functions $f :
  \Sigma_p \rightarrow \Sigma_t^+$ that could
  possible constitute to a $q$-GF/GP-matching from
  $p$ to $t$, then we could again apply Lemma~\ref{lem:dyn} as we did in the
  proof of Theorem~\ref{the:fpt-max-gfm-text}. Unfortunately, because
  $|\Sigma_t|$ is not a parameter, the number of these functions cannot
  be bounded as easily any more. However, as we will show next it is
  still possible to bound the number of possible functions solely in
  terms of the parameters. In particular, we will show that the number
  of possible substrings of $t$ that any letter of the pattern
  alphabet can be mapped to is bounded by a function of the
  parameters. Because also $|\Sigma_p|$ is a parameter this immediately
  implies a bound (only in terms of the given parameters) on the total
  number of these functions.

  Let $c \in \Sigma_p$ and consider any $q$-GF/GP-matching from $p$ to $t$, i.e., a text $p'=p_1'\dots p_m'$ of
  Hamming distance at most $q$ to $p$ and a
  function $f : \Sigma_p \cup \{?_1,\dots,?_q\} \rightarrow \Sigma_t^+$
  such that $f(p_1')\dots f(p_m')=t$.
  Then either $c$ does not occur in $p'$ or $c$ occurs in $p'$. In the first case
  we can assign to $c$ any non-empty substring over the alphabet
  $\Sigma_t$
  (in the case of Max-GPM one additionally has to ensure that the
  non-empty substrings over $\Sigma_t$ that one chooses for distinct
  letters in $\Sigma_p$ are distinct).
  In the second case let $p_i'$ for some $i$ with $1 \leq i \leq m$ be
  the first
  occurrence of $c$ in $p'$, let $\overline{p}_{i-1}'=p_1'\dots
  p_{i-1}'$, and let $\overline{p}_{i-1}=p_1\dots p_{i-1}$. 
  Furthermore, for every $b \in \Sigma_p \cup \{?_1,\dots,?_q\}$ 
  and $w \in (\Sigma_p \cup \{?_1,\dots,?_q\})^*$, we denote by
  $\#(b,w)$ the number of times $b$ occurs in $w$. Then $f(c)=t_{c_s+1}\dots t_{c_s+|f(c)|}$
  where $c_s=\sum_{j=1}^{i-1}|f(p_j')|$, which implies that the value of $f(c)$
  is fully determined by $c_s$ and $|f(c)|$. Because the number of
  possible values for $|f(c)|$ is trivially bounded by the parameters
  (it is bounded by $\max_i|f(p_i)|$), it remains to show that also
  $c_s$ is bounded by the given parameters.

  Because
  $c_s=\sum_{j=1}^{i-1}|f(p_j')|=(\sum_{b \in
    \Sigma_p \cup \{?_1,\dots,?_q\}}\#(b,\overline{p}_{i-1}')|f(b)|)$,
  we obtain that the value of $c_s$ is fully determined by the values of
  $\#(b,\overline{p}_{i-1}')$ and $|f(b)|$ for every $b \in \Sigma_p \cup
  \{?_1,\dots,?_q\}$. For every $? \in \{?_1,\dots,?_q\}$ there are at
  most $2$ possible values for $\#(?,\overline{p}_{i-1}')$ (namely $0$
  and $1$) and there are at
  most $\max |f(?)|$ possible values for $|f(?)|$. Similarly, for every
  $b \in \Sigma_p$ there are at most $q+1$ possible values for
  $\#(b,\overline{p}_{i-1}')$ (the values $\#(b,\overline{p}_{i-1})-q,
  \dots, \#(b,\overline{p}_{i-1})$) and there are at most
  $\max_i|f(p_i)|$ possible values for $|f(b)|$. Hence, the number of
  possible values for $c_s$ is bounded in terms of the parameters, as
  required.
\end{proof}

Since $|\Sigma_p|$ and $\#\Sigma_p$ together bound $\# ?$, we obtain the following
corollary.
\begin{COR}\sloppypar
  \label{cor:fpt-max-gfm-wild}
  Max-GFM and Max-GPM parameterized by
  $\#\Sigma_p$, $|\Sigma_p|$, $\max_i|f(p_i)|$, and $\max|f(?)|$ are fixed-parameter tractable.
\end{COR}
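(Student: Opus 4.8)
The plan is to obtain the corollary directly from Theorem~\ref{the:fpt-max-gfm-wild} by a parameter-domination argument: I would show that the one extra parameter required by that theorem, namely $\#?$, is bounded by a computable function of the parameters $|\Sigma_p|$ and $\#\Sigma_p$ that are available here. Once this is established, every parameter occurring in Theorem~\ref{the:fpt-max-gfm-wild} is itself bounded in terms of $\#\Sigma_p$, $|\Sigma_p|$, $\max_i|f(p_i)|$, and $\max|f(?)|$, and the claim follows without constructing any new algorithm.

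First I would bound the length $m$ of the pattern $p$. Since every letter of $\Sigma_p$ occurs at most $\#\Sigma_p$ times in $p$ and there are at most $|\Sigma_p|$ distinct such letters, we obtain $m \leq |\Sigma_p|\cdot\#\Sigma_p$. Next I would observe that in any $q$-GF/GP-matching the witnessing text $p'$ has Hamming distance at most $q$ from $p$, and that the positions at which $p'$ differs from $p$ are precisely the positions carrying wildcard symbols; since there are only $m$ positions in $p'$ in total, at most $m$ wildcards can ever be used. Hence we may assume $\#? \leq m \leq |\Sigma_p|\cdot\#\Sigma_p$, so that $\#?$ is indeed bounded by a function of $|\Sigma_p|$ and $\#\Sigma_p$ alone.

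With this bound in hand, the three parameters $|\Sigma_p|$, $\max_i|f(p_i)|$, and $\max|f(?)|$ are common to both parameterizations, and the remaining parameter $\#?$ of Theorem~\ref{the:fpt-max-gfm-wild} is dominated as just shown. Therefore the fpt algorithm of Theorem~\ref{the:fpt-max-gfm-wild}, applied unchanged to the same instance, runs in time $O^*(g(|\Sigma_p|,\max_i|f(p_i)|,\#?,\max|f(?)|))$ for some computable $g$; after substituting the bound $\#? \leq |\Sigma_p|\cdot\#\Sigma_p$ this becomes a function of $\#\Sigma_p$, $|\Sigma_p|$, $\max_i|f(p_i)|$, and $\max|f(?)|$ only, which establishes fixed-parameter tractability for the claimed parameterization.

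There is essentially no hard step here: the entire content is the elementary inequality $\#? \leq |\Sigma_p|\cdot\#\Sigma_p$. The only point requiring a moment's care is justifying that it never helps to introduce more wildcards than there are pattern positions, so that the budget $q$ (which is part of the input but is \emph{not} a parameter) can safely be replaced by the bounded quantity $\#? \leq m$ when invoking Theorem~\ref{the:fpt-max-gfm-wild}.
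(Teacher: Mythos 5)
Your proof is correct and takes essentially the same approach as the paper: the paper derives this corollary from Theorem~\ref{the:fpt-max-gfm-wild} with the single observation that $|\Sigma_p|$ and $\#\Sigma_p$ together bound $\# ?$, which is exactly the inequality $\# ? \leq m \leq |\Sigma_p|\cdot\#\Sigma_p$ you establish. Your write-up merely makes explicit the details (bounding the pattern length and capping the wildcard budget at $m$) that the paper leaves implicit.
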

Furthermore, because all considered parameters can be bounded in terms of the
parameters $\#\Sigma_t$ and $|\Sigma_t|$, we obtain the following
corollary as a consequence of any of our above fpt-results.
\begin{COR}\label{cor:fpt-occt-sigt}
  Max-GFM and Max-GPM parameterized by
  $\#\Sigma_t$ and $|\Sigma_t|$ are fixed-parameter tractable.
\end{COR}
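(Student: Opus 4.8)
The plan is to derive this corollary directly from one of the already-established fpt-results, namely Theorem~\ref{the:fpt-max-gfm-text} (which gives fixed-parameter tractability for the combined parameters $|\Sigma_t|$, $|\Sigma_p|$, and $\max_i|f(p_i)|$), by showing that each of these three parameters can be bounded by a computable function of $\#\Sigma_t$ and $|\Sigma_t|$ alone. Since fixed-parameter tractability with respect to a set of parameters is inherited by any larger set of parameters, and more strongly, if every parameter in a set $\mathcal{A}$ is bounded by a function of the parameters in a set $\mathcal{B}$ then fpt in $\mathcal{A}$ implies fpt in $\mathcal{B}$, it suffices to establish these bounds.

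First I would observe that $|\Sigma_t|$ is trivially bounded by itself, so nothing is required there. Next, I would bound the total length of the text: since each letter of $\Sigma_t$ occurs at most $\#\Sigma_t$ times and there are $|\Sigma_t|$ distinct letters, the length $n$ of $t$ satisfies $n \leq \#\Sigma_t \cdot |\Sigma_t|$, a function of the two parameters. From this, bounding $\max_i|f(p_i)|$ is immediate: any substring that a pattern letter maps to is a substring of $t$, so $\max_i|f(p_i)| \leq n \leq \#\Sigma_t \cdot |\Sigma_t|$. The remaining quantity to control is $|\Sigma_p|$, the size of the pattern alphabet. Here I would argue that we may assume without loss of generality that every letter of $\Sigma_p$ actually occurs in $p$ (letters not occurring in $p$ are irrelevant and can be discarded), and each such letter must be mapped to a non-empty substring of $t$, so the length $m$ of $p$ satisfies $m \leq n \leq \#\Sigma_t \cdot |\Sigma_t|$; consequently $|\Sigma_p| \leq m \leq \#\Sigma_t \cdot |\Sigma_t|$.

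Having bounded all three parameters of Theorem~\ref{the:fpt-max-gfm-text} by the computable function $\#\Sigma_t \cdot |\Sigma_t|$ of the two given parameters, I would conclude by invoking that theorem: the running time $O^*\bigl((|\Sigma_t|)^{\max_i|f(p_i)| \cdot |\Sigma_p|}\bigr)$ becomes a function purely of $\#\Sigma_t$ and $|\Sigma_t|$ times a polynomial, establishing fixed-parameter tractability for both Max-GFM and Max-GPM under the parameterization by $\#\Sigma_t$ and $|\Sigma_t|$.

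I do not expect any genuine obstacle here, since the corollary is a routine consequence of the earlier results; the only point requiring mild care is the preprocessing assumption that unused pattern-alphabet letters are removed, which is needed to bound $|\Sigma_p|$ in terms of $m$ rather than leaving it unbounded. Everything else follows from elementary counting of occurrences in a text whose length is itself bounded by the product of the two parameters.
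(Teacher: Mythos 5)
Your proposal is correct and follows the same route as the paper: the paper derives this corollary in one line by noting that all of the other considered parameters can be bounded in terms of $\#\Sigma_t$ and $|\Sigma_t|$ (the text has length at most $\#\Sigma_t \cdot |\Sigma_t|$), and then invoking any of the earlier fpt-results, which is precisely your argument with Theorem~\ref{the:fpt-max-gfm-text} as the chosen result. Your explicit treatment of the details the paper leaves implicit --- discarding unused pattern letters, and the observation that $m \leq n$ in any matchable instance so that $|\Sigma_p|$ is also bounded --- is exactly the right way to fill in that one-line justification.
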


\section{Hardness Results}
\label{sec:hardness}

In this subsection we give our hardness results for Max-GFM and
Max-GPM. 

\begin{THE}\label{the:GFM-questionmark-hard}
  Max-GFM and Max-GPM are \W{1}\hy hard parameterized by $|\Sigma_p|$, 
  $\max_i |f(p_i)|$, and $\max|f(?)|$  (even if $\max_i |f(p_i)|=1$
  and $\max|f(?)|=2$).
\end{THE}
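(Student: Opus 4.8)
The plan is to give an fpt-reduction from \pn{Multicolored Clique} to
Max-GFM/Max-GPM in which the parameter $k$ of the clique instance controls
both $|\Sigma_p|$ and the number of wildcards, while the text is constructed so
that each pattern letter maps to a single text letter ($\max_i|f(p_i)|=1$) and
each wildcard maps to a string of length at most $2$ ($\max|f(?)|=2$). The
rough idea is to use the $k$ vertex-selections and the $\binom{k}{2}$
edge-selections as the information that the matching must commit to, and to
force these selections through the structure of $t$ rather than through a rich
pattern alphabet. Since $|\Sigma_t|$ and $\#\Sigma_t$ are \emph{not} parameters
here, I can afford a text alphabet and letter-frequencies that grow with the
instance size, which is exactly the freedom the earlier fpt-results (Theorem~\ref{the:fpt-max-gfm-text}, Corollary~\ref{cor:fpt-occt-sigt}) tell us is
necessary for hardness.

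Concretely, I would build the text $t$ as a concatenation of $k$
vertex-gadgets followed by $\binom{k}{2}$ edge-gadgets, separated by unique
delimiter symbols that appear exactly where structure must be enforced. A
vertex-gadget for colour class $V_i$ lists encodings of all candidate vertices
of $V_i$; an edge-gadget for the pair $(i,j)$ lists encodings of all edges in
$E_{i,j}$. The pattern $p$ is then designed so that, reading it against $t$,
the matching is forced to ``select'' one vertex per class and one edge per
pair. The wildcards are the mechanism for skipping over the unselected
candidates: within each gadget the pattern matches exactly one candidate block
letter-by-letter (each pattern letter to one text letter, giving
$\max_i|f(p_i)|=1$), and a constant number of wildcards per gadget absorb the
remaining text. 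To keep $\max|f(?)|=2$, the text must be laid out so that the
leftover material between two consecutive matched positions can always be
consumed in chunks of length at most two; this typically means inserting
padding letters so that every ``skip'' is a skip over a length-one or
length-two unit, with the wildcard count bounded by $O(k)$ and hence a function
of the parameter only. The encodings of vertices and edges are chosen so that
the consistency constraint — the vertex selected in class $i$ must be an
endpoint of the edge selected for every pair $(i,j)$ — is expressed by reusing
the \emph{same} pattern letters across the vertex-gadget and the relevant
edge-gadgets, so that a valid $k$-GF/GP-match exists if and only if the
selections describe a genuine $k$-clique.

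The two directions of correctness are then routine in outline. For the forward
direction, a $k$-clique $v_1,\dots,v_k$ yields a match by selecting the
corresponding vertex block in each vertex-gadget and the edge block for each
pair, with the shared pattern letters consistently assigned to the text symbols
encoding the chosen vertices, and with the prescribed number of wildcards
swallowing everything else; I must check this match uses at most the budgeted
$k' = O(k)$ wildcards and respects injectivity for the Max-GPM case. For the
backward direction, I argue that any $k$-GF/GP-match with the wildcard budget
is forced by the delimiters and by the $\max|f(?)|\le 2$ restriction to select
exactly one block per gadget, and that the letter-sharing then guarantees the
selected vertices and edges are mutually consistent, i.e.\ form a clique.
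Finally I verify the parameter bounds: $|\Sigma_p|$, $\#?$, $\max_i|f(p_i)|=1$
and $\max|f(?)|=2$ are all bounded by functions of $k$ alone, and the
construction runs in polynomial time, so the reduction is an fpt-reduction.

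The main obstacle I expect is simultaneously enforcing the structural
``select exactly one block per gadget'' behaviour while honouring the severe
metric restrictions $\max_i|f(p_i)|=1$ and $\max|f(?)|=2$: with each pattern
letter pinned to a single text letter and each wildcard limited to length two,
there is very little slack to skip over an unselected candidate block whose
length grows with the instance. The delicate part of the gadget design is
therefore the padding and delimiter scheme that breaks every ``skip'' into
length-$\le 2$ units absorbable by a bounded number of wildcards, and the
accompanying argument that no cheaper or inconsistent match can sneak in under
the wildcard budget. Handling the injective case for Max-GPM adds the extra
bookkeeping of ensuring the many equal-length blocks can be distinguished so
that the required injective assignment of pattern letters to distinct text
substrings still exists whenever a clique does.
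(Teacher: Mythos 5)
Your proposal contains a fatal, structural flaw: you insist on keeping the number of wildcards bounded by $O(k)$ \emph{in addition to} $|\Sigma_p|$ bounded, $\max_i|f(p_i)|=1$, and $\max|f(?)|=2$. Under these metric constraints this is impossible. Every non-wildcard pattern letter consumes exactly one text letter and every wildcard at most two, so the total amount of text that can be ``skipped'' (i.e., not accounted for by structurally matched letters) is at most $2\cdot\#?$; but in any gadget scheme of the kind you describe, the unselected candidate blocks that must be absorbed have total length growing with $n$ and $m$, so the number of wildcards must grow with the instance, not with $k$. Your padding idea does not rescue this: padding only splits a long skip into length-$\le 2$ pieces, and each piece still costs one wildcard. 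There is also a knock-down argument that no construction along your lines can exist: if your reduction were correct, the instances it produces would have $|\Sigma_p|$, $\#?$, $\max_i|f(p_i)|$ and $\max|f(?)|$ all bounded by functions of $k$, and Theorem~\ref{the:fpt-max-gfm-wild} gives an \FPT\ algorithm for exactly that parameter combination; composing the two would place \pn{Multicolored Clique} in \FPT, i.e., prove $\FPT=\W{1}$. The statement you are proving deliberately omits $\#?$ from the parameter list, and any correct proof must exploit that freedom.

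For comparison, the paper's reduction does exactly this. It uses only edge gadgets (no vertex gadgets): consistency across pairs is carried by a single pattern letter $V_i$ per color class, reused in every block $\es V_i \ies V_j \es$, so that the clique vertices are read off as $f(V_1),\dots,f(V_k)$. Each pattern block is padded with $4(m-1)$ copies of a dummy letter $\Box$ on both sides, and the number of wildcards is $r=8\binom{k}{2}(m-1)$ --- polynomial in the instance, not a function of $k$. The wildcards are \emph{forced} onto precisely these $\Box$'s: a long prefix $\rep(\Box,2r+1)\rep(\es,2r+1)\rep(\ies,2r+1)\rep(\gs,2r+1)$ pins $f(\Box)=\Box$, $f(\es)=\es$, $f(\ies)=\ies$, $f(\gs)=\gs$, and since $\Box$ never reappears in the text, every $\Box$ occurring in the main part of the pattern must become a wildcard, which exhausts the budget $r$. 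Each such wildcard is then mapped to a string of length $0$, $1$, or $2$ (the paper works in the variant where wildcards may map to the empty string and notes the adaptation), which lets the substring $\es V_i\ies V_j\es$ slide to the position of the selected edge, while equality of the $\gs$-counts in $p$ and $t$ pins blocks to blocks. If you want hardness with few wildcards, that is a different theorem of the paper (Theorem~\ref{the:GFM-questionmarksize-hard}), and there the parameter $\max|f(?)|$ is necessarily dropped --- the two restrictions you tried to enforce simultaneously are provably incompatible.
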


We will show the theorem by a  parameterized reduction from \textsc{Multicolored
 Clique}. To simplify the proof we will reduce to the variant of
Max-GFM and Max-GPM, where we are allowed to map wildcards to the empty
string. It is however straightforward to adapt the proof to the
original versions of Max-GFM and Max-GPM. Hence, in the following,
whenever we refer to Max-GFM and Max-GPM, we mean the version of
Max-GFM and Max-GPM, where wildcards can be mapped to the empty string.

Let $G = (V, E)$ be a $k$-partite graph with partition
$V_1,\dots,V_k$ of $V$. Let $E_{i,j}=\SB \{u,v\} \in E \SM  u \in
V_i \textup{ and } v \in V_j \SE$ for every $i$ and $j$ with $1 \leq
i < j \leq k$. 
Again, as we stated in the preliminaries we can assume that $|V_i|=n$ and
$|E_{i,j}|=m$ for every $i$ and $j$ with $1 \leq i < j \leq k$.

Let $V_i=\{v_1^i,\dotsc,v_n^i\}$ and $E_{i,j}=\{e_1^{i,j},\dotsc,e_m^{i,j}\}$.
We construct a text $t$ and a pattern $p$ from $G$ and $k$ such that
$p$ $r$-GF/GP-matches $t$ with $r=\binom{k}{2}(8(m-1))$ if and only if $G$ has a $k$-clique. We set
$\Sigma_t=\{\es,\ies,\gs,\Box\} \cup \SB v_i^j \SM 1 \leq i \leq n
\textup{ and }1 \leq j \leq k \SE$ and 
$\Sigma_p=\{\es,\ies,\gs, \Box\} \cup \SB V_i \SM 1 \leq i \leq k \SE$.

For an edge $e \in E$ between $v_l^i$ and $v_k^j$ where $1 \leq i
< j \leq k$ and $1 \leq l,k \leq n$, we write $\vertt(e)$ to
denote the text $v_l^i \ies v_k^j$. For $l \in \Sigma_p \cup \Sigma_t$ and $i \in
\naturals$ we write $\rep(l,i)$ to denote the text consisting of
repeating the letter $l$ exactly $i$ times.
We first define a preliminary text $t'$ as follows.
\begin{quote}
 \begin{center}
   $\gs \es \vertt(e_1^{1,2}) \es \dotsb \es \vertt(e_m^{1,2}) \es
   \gs \dotsb \gs \es \vertt(e_1^{1,k}) \es \dotsb \es
   \vertt(e_m^{1,k}) \es$
   
   $\gs \es \vertt(e_1^{2,3}) \es \dotsb \es \vertt(e_m^{2,3}) \es
   \gs \dotsb \gs \es \vertt(e_1^{2,k}) \es \dotsb \es \vertt(e_m^{2,k})\es$
   
   $\dotsb$
   
   $\gs \es \vertt(e_1^{k-1,k}) \es \dotsb \es
   \vertt(e_m^{k-1,k}) \es \gs$
 \end{center}
\end{quote}
We also need to define a preliminary pattern $p'$ as follows.
\begin{quote}
 \begin{center}
   $\gs \rep(\Box,4(m-1)) \es V_1 \ies V_2 \es \rep(\Box,4(m-1)) \gs
   \dotsc$

   $\gs \rep(\Box,4(m-1)) \es V_1 \ies V_k \es \rep(\Box,4(m-1))$
   
   $\gs \rep(\Box,4(m-1)) \es V_2 \ies V_3 \es \rep(\Box,4(m-1)) \gs
   \dotsc$

   $\gs \rep(\Box,4(m-1)) \es V_2 \ies V_k \es \rep(\Box,4(m-1))$
   
   $\dotsb$
   
   $\gs \rep(\Box,4(m-1)) \es V_{k-1} \ies V_k \es \rep(\Box,4(m-1)) \gs$
 \end{center}
\end{quote}
We obtain $t$ from $t'$ and $p$
from $p'$ by preceding $t'$ and $p'$, respectively, with the
following text or pattern, respectively.
\begin{quote}
 \begin{center}
   $\rep(\Box,2r+1)\rep(\es,2r+1)\rep(\ies,2r+1)\rep(\gs,2r+1)$
 \end{center}
\end{quote}
This completes the construction of $t$ and $p$. Clearly, $t$ and $p$
can be constructed from $G$ and $k$ in fpt-time (even polynomial
time). Furthermore, $|\Sigma_p|=k+4$, as required. It remains to show
that $G$ has a $k$-clique if and only if $p$ $r$-GF/GP-matches $t$. 
\begin{LEM}\label{lem:GFM-questionmarksize-hard-1}
 If $G$ has a $k$-clique then $p$ $r$-GF/GP-matches $t$.
\end{LEM}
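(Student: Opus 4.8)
The plan is to read the required match off a $k$-clique directly. Assume $G$ has a $k$-clique and let $c_i\in V_i$ be its vertex in part $i$. Since the $c_i$ are pairwise adjacent, for every pair $i<j$ the edge $\{c_i,c_j\}$ lies in $E_{i,j}$, say as $e^{i,j}_{a}$ for some index $a=a(i,j)$ with $1\le a\le m$. I would define the matching function $f$ by sending every structural letter to itself, $f(\gs)=\gs$, $f(\es)=\es$, $f(\ies)=\ies$, $f(\Box)=\Box$, and by setting $f(V_i)=c_i$ for all $i$. The one place the clique hypothesis enters is \emph{consistency}: the letter $V_i$ occurs in the core $\es V_i \ies V_j \es$ of every block involving part $i$, so a single value $f(V_i)=c_i$ must serve all of them, and this is possible precisely because the selected edges $\{c_i,c_j\}$ all share the endpoint $c_i$. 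The pattern $p'$ witnessing the match is obtained from $p$ by turning every $\Box$ of $p'$ (i.e.\ outside the common prefix) into a wildcard; since $p'$ contains exactly $\binom{k}{2}\cdot 8(m-1)=r$ occurrences of $\Box$, this uses the whole wildcard budget and the Hamming distance from the modified pattern to $p$ is exactly $r$.

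Next I would verify $f(p')=t$ block by block. On the common prefix $\rep(\Box,2r+1)\rep(\es,2r+1)\rep(\ies,2r+1)\rep(\gs,2r+1)$ every letter maps to itself, so it matches verbatim and spends no wildcard. In the block for $(i,j)$ the leading $\gs$'s align, and the core $\es V_i \ies V_j \es$ maps to $\es\,c_i\,\ies\,c_j\,\es=\es\,\vertt(e^{i,j}_{a})\,\es$, i.e.\ to the selected edge flanked by its two separators. The remaining $a-1$ edges to its left and $m-a$ edges to its right form stretches of lengths $4(a-1)$ and $4(m-a)$, each flanked by a run of $4(m-1)$ wildcards; because $a-1\le m-1$ and $m-a\le m-1$ there are always enough wildcards to tile these stretches, and the surplus wildcards map to the empty string. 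For Max-GFM the cleanest choice maps each used wildcard to a single text letter, so here even $\max|f(?)|=1$ would do.

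For Max-GPM the extra demand is that $f$ be injective, and this is where the real work, and the bound $\max|f(?)|=2$, lies. The images $\gs,\es,\ies,\Box,c_1,\dots,c_k$ are already pairwise distinct single letters, so it remains to give the wildcards pairwise distinct images that also avoid these letters. The difficulty is quantitative: each block offers $8(m-1)$ wildcards but only $4(m-1)$ text positions to cover, so a naive cover leaves about half of the wildcards mapping to the empty string, which injectivity forbids unless empty images are read as deletions. I would therefore adopt the empty-string convention under which an $\epsilon$-wildcard deletes its position and does not count towards injectivity, and cover every non-core stretch by length-$2$ pieces; since the text inside a block strictly alternates separators and vertex letters, every such piece has the form (separator, vertex) or (vertex, separator) and never coincides with a single letter. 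The main obstacle I expect is choosing the cut points so that these length-$2$ pieces are pairwise distinct across all $\binom{k}{2}$ blocks: collisions can only arise between pieces built from the same vertex (each name $v^i_l$ carries its part index), and I would argue they can be broken by locally adjusting cuts and exploiting the abundance of distinct short substrings, but carrying out this distinctness bookkeeping simultaneously in every block while staying within the wildcard budget is the delicate heart of the Max-GPM direction.
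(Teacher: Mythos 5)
Your construction of the match is the same as the paper's: the identity on the structural letters ($f(\gs)=\gs$, $f(\es)=\es$, $f(\ies)=\ies$, $f(\Box)=\Box$), $f(V_i)=c_i$ for the clique vertices, wildcards placed on exactly the $r$ occurrences of $\Box$ coming from $p'$, and a block-by-block verification in which the core $\es V_i\ies V_j\es$ sits on $\es\,\vertt(e^{i,j}_a)\,\es$ while the two runs of $4(m-1)$ wildcards absorb the $4(a-1)$ letters to the left and the $4(m-a)$ letters to the right of it. Your length bookkeeping (lengths $0$ and $1$ only) is correct, and in fact cleaner than the paper's own: the paper maps the last $4(m-h_{i,j})$ wildcards of each block to strings of length $2$ and the middle ones to length $1$, so its second half produces $4(h_{i,j}-1)+8(m-h_{i,j})$ letters where only $4(m-h_{i,j})$ are available --- an arithmetic slip that your $0/1$ scheme avoids. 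So the Max-GFM half of your proposal is complete and matches the paper's intent.

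The genuine gap is in your Max-GPM half, and it is a gap of approach, not of bookkeeping. The plan you call the ``delicate heart'' --- make all non-empty wildcard images pairwise distinct strings of length $2$ --- cannot be carried out, and no local re-cutting will rescue it. Every length-$\le 2$ substring of the regions the wildcards must cover is a single letter or a pair of the form (separator, vertex letter) or (vertex letter, separator), so $t$ contains only $O(nk)$ distinct candidate images; but the wildcards must jointly cover $\binom{k}{2}\cdot 4(m-1)=\Theta(k^2 m)$ text letters, and with each distinct image usable once and covering at most $2$ letters, pigeonhole makes distinctness impossible as soon as $m$ is large (e.g.\ $m\approx n^2$). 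The right conclusion is that your reading of injectivity is not the one under which the lemma (and the theorem) can hold: in this wildcard variant the injectivity demand of GP-matching is imposed on the map restricted to $\Sigma_p$, and the paper's proof accordingly uses the very same wildcard assignment for GF- and GP-matching, relying only on the fact that $\gs,\es,\ies,\Box,c_1,\dots,c_k$ are pairwise distinct letters of $\Sigma_t$. Under that reading your Max-GFM argument already finishes the GPM case with no extra work; under your stricter reading the forward direction would simply be false. (The paper is admittedly inconsistent here --- in the constructions for Theorems~\ref{the:GFM-questionmarksize-hard} and~\ref{the:GFM-mobile-2} it does arrange distinct wildcard images --- but in the present construction that is provably unachievable, so exempting wildcards is the only viable interpretation.)
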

\begin{proof}
 Let $\{v_{h_1}^1, \dotsc,v_{h_k}^k\}$ be the vertices and 
 $\SB e^{i,j}_{h_{i,j}} \SM 1 \leq i < j \leq k \SE$ be the edges of
 a $k$-clique of $G$ with $1 \leq h_j \leq n$  and $1 \leq h_{i,j}
 \leq m$ for every $i$ and $j$ with $1 \leq i < j \leq k$. 

 The function $f$ that $r$-GF/GP-matching $p$ to $t$
 is defined as follows: 
 $f(\Box)=\Box$, $f(\es)=\es$, $f(\ies)=\ies$, $f(\gs)=\gs$, 
 $f(V_i)=v_{h_1}^i$, 
 for every $i$ with $1 \leq i \leq k$. 
 
 We put $r$ wildcards on the last $r$ occurrences of $\Box$ in
 $p$, i.e., every occurrence of $\Box$ that corresponds to an
 occurrence in $p'$. Then length of the text the wildcards are
 mapped to is determined as follows. For an edge $e^{i,j}_{h_{i,j}}$
 look at the ``block'' in $p$ that corresponds to the edge, i.e., the
 block:
 \begin{center}
   \begin{quote}
     $\gs \rep(\Box,4(m-1)) \es V_i \ies V_j \es \rep(\Box,4(m-1))$ 
   \end{quote}
 \end{center}
 The first $4(m-h_{i,j})$ occurrences of $\Box$ (in this block) are
 replaced with a
 wildcard which is mapped to a text of length $0$, the last
 $4(m-h_{i,j})$ occurrences of $\Box$ are replaced with a
 wildcard which is mapped to a text of length $2$, and all other
 occurrences of $\Box$ are replaced with a wildcard that is
 mapped to a text of length $1$. It is straightforward to check that 
 $f$ together with the mapping of the wildcards
 maps the pattern $p$ to the text $t$.
\end{proof}
For the reverse direction we need the following intermediate claims.
\begin{CLM}\label{clm:GFM-questionmarksize-hard-s21}
 For any function $f$ that $r$-GF/GP-matches $p$ to $t$ it holds that:
 $f(\Box)=\Box$, $f(\es)=\es$, $f(\ies)=\ies$, and $f(\gs)=\gs$.
\end{CLM}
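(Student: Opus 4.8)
The plan is to use the long, ordered runs $\rep(\Box,2r+1)\rep(\es,2r+1)\rep(\ies,2r+1)\rep(\gs,2r+1)$ prepended identically to both $p$ and $t$ as a rigid ``alignment gadget'' that pins the four special letters to themselves. Throughout I would use two features of the instance: every pattern letter is mapped to a single text letter (since $\max_i|f(p_i)|=1$ here), and at most $r$ positions of $p$ are turned into wildcards, each mapped to a string of length at most $2$ (possibly empty). The hard part will be that wildcards of different lengths let the image positions ``drift'', so I cannot immediately read off which text letter a given pattern position aligns with; the counting step below is what removes this drift.

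First I would establish $f(\Box)=\Box$. Since at most $r$ of the first $2r+1$ positions of $p$ (all of them $\Box$) are wildcards, the first non-wildcard position among them has some index $i\le r+1$. The $i-1$ wildcards preceding it contribute an initial segment of $f(p)$ of length at most $2(i-1)\le 2r$, so the single letter $f(\Box)$ is aligned with a letter of $t$ at position at most $2r+1$. As $t$ begins with the run $\rep(\Box,2r+1)$ and $|f(\Box)|=1$, this forces $f(\Box)=\Box$.

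Next I would pin down where the wildcards sit by a global count of $\Box$. The text contains exactly $2r+1$ occurrences of $\Box$ (all in the prefix, as $t'$ uses only $\gs,\es,\ies$ and the vertex letters), whereas $p$ contains $3r+1$ of them: $2r+1$ in the prefix and, since each of the $\binom{k}{2}$ edge-blocks of $p'$ carries $8(m-1)$ of them, a further $r$ inside $p'$. With at most $r$ wildcards there remain at least $2r+1$ non-wildcard $\Box$-positions, each contributing exactly one $\Box$ to $f(p)=t$ by the previous step. Since $t$ has only $2r+1$ copies of $\Box$, all these estimates are tight: there are exactly $r$ wildcards, each placed on a $\Box$ of $p$, no other pattern letter is mapped to $\Box$, and no wildcard image contains $\Box$. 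Moreover a non-wildcard $\Box$ inside $p'$ would be mapped to a $\Box$ lying beyond the image of the prefix, hence at a text position exceeding $2r+1$ (the prefix alone has image length at least $(8r+4)-r>2r+1$), which is impossible since $t$ carries no $\Box$ past position $2r+1$. Consequently all $r$ wildcards fall on the $r$ letters $\Box$ occurring in $p'$, and the entire prefix of $p$ is wildcard-free.

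Finally, with the prefix wildcard-free, its block $\rep(\Box,2r+1)$ maps exactly onto the first $2r+1$ letters of $t$, so the following block $\rep(\es,2r+1)$, all of whose positions are non-wildcard, maps to $2r+1$ consecutive copies of $f(\es)$ aligned with text positions $2r+2,\dots,4r+2$, which are all $\es$; hence $f(\es)=\es$. Iterating the same reasoning (the images of the subsequent blocks now start exactly at positions $4r+3$ and $6r+4$) yields $f(\ies)=\ies$ and $f(\gs)=\gs$. Since this argument never invokes injectivity, it applies verbatim to Max-GPM as well.
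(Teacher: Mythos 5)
Your proof is correct. Its opening step --- $f(\Box)=\Box$ because at most $r$ wildcards, each mapped to at most two letters, can precede the first unchanged $\Box$ of the prefix --- is exactly the paper's argument, but from there the two proofs genuinely diverge. The paper dismisses $\es$, $\ies$, and $\gs$ with ``the remaining cases are similar,'' whereas you first determine the wildcard placement completely (counting the $2r+1$ occurrences of $\Box$ in $t$ against the $3r+1$ in $p$, and using the length bound $(8r+4)-r>2r+1$ to force every $\Box$ of $p'$ to be a wildcard), and only then read off $f(\es)=\es$, $f(\ies)=\ies$, $f(\gs)=\gs$ from the exact position-by-position alignment of the now wildcard-free prefix. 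In effect you have also proved the paper's Claim~\ref{clm:GFM-questionmarksize-hard-s22} along the way; its published proof is essentially your ``moreover'' step. Your longer route buys real rigor: repeating the drift argument verbatim for $\es$ only places the first unchanged $\es$ at a text position in the window $[r+2,4r+2]$, which straddles both the $\Box$-run and the $\es$-run of $t$, so for GFM (where $f$ need not be injective) one must still exclude $f(\es)=\Box$; your counting of the $\Box$-occurrences does exactly that, while the paper's ``similar'' leaves it implicit (for GPM, injectivity together with $f(\Box)=\Box$ would close this gap). One small economy you could make: the tightness part of your counting step is not actually needed, since once every $\Box$ in $p'$ is forced to be a wildcard, the budget of at most $r$ wildcards already makes the prefix wildcard-free, and your final alignment step goes through unchanged.
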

\begin{proof}
 We show that $f(\Box)=\Box$ since the remaining cases are similar. Because
 the pattern $p$ starts with $2r+1$ repetitions of the letter $\Box$,
 it follows that at least $1$ of these occurrences of $\Box$
 is not replaced with a wildcard. Because every letter of $p$
 is replaced by at most $2$ letters of the text the first occurrence
 of $\Box$ that is not replaced by a wildcard is mapped to a
 letter of the text at position at most $2r$, i.e., a $\Box$.
 This concludes the proof of the claim.
\end{proof}
\begin{CLM}\label{clm:GFM-questionmarksize-hard-s22}
 Any $r$-GF/GP-matching of $p$ to $t$ replaces exactly the last $r$
 occurrences of $\Box$ in $p$ with wildcards.
\end{CLM}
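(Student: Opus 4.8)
The plan is to combine a global count of the symbol $\Box$ with the order-preserving nature of any GF/GP-matching. The crucial structural observation is that $\Box$ occurs in the text $t$ \emph{only} inside the prefix $\rep(\Box,2r+1)\rep(\es,2r+1)\rep(\ies,2r+1)\rep(\gs,2r+1)$, so $t$ contains exactly $2r+1$ copies of $\Box$ (the part $t'$ contains none). On the pattern side, $p$ contains $2r+1$ copies of $\Box$ in its prefix together with the $\binom{k}{2}\cdot 8(m-1)=r$ copies contributed by the blocks of $p'$, for a total of $3r+1$ occurrences.

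First I would pin down the number of wildcards sitting on $\Box$-positions. By Claim~\ref{clm:GFM-questionmarksize-hard-s21} we have $f(\Box)=\Box$, a single letter, so every occurrence of $\Box$ in $p$ that is \emph{not} replaced by a wildcard must be matched to one of the $2r+1$ occurrences of $\Box$ in $t$; in particular at most $2r+1$ of the $\Box$'s of $p$ can avoid being wildcards. Since $p$ has $3r+1$ of them, at least $r$ are replaced by wildcards. As an $r$-GF/GP-matching may use at most $r$ wildcards in total, I conclude that \emph{exactly} $r$ occurrences of $\Box$ are replaced by wildcards and, crucially, that no non-$\Box$ position of $p$ is replaced by a wildcard. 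In particular every occurrence of $\es$, $\ies$, and $\gs$ in $p$ is a genuine letter mapped by $f$ to the corresponding single text letter.

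The main obstacle is then to \emph{localize} these $r$ wildcards, i.e.\ to show they sit on the last $r$ occurrences of $\Box$ (those belonging to $p'$) rather than on arbitrary occurrences. Here I would exploit the separator block $\rep(\gs,2r+1)$ that sits between the prefix and $p'$: by the previous paragraph each of these $\gs$'s is non-wildcard, and every occurrence of $\gs$ in $t$ lies strictly beyond the $\Box$-prefix (the first $\gs$ of $t$ occurs at position $6r+4 > 2r+1$). Consequently, by order-preservation, the entire suffix $p'$ --- which follows $\rep(\gs,2r+1)$ in $p$ --- is matched to positions of $t$ lying strictly after the last $\Box$ of $t$. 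Since $t$ has no $\Box$ beyond its prefix, none of the $r$ occurrences of $\Box$ in $p'$ can be matched to a text $\Box$, so by Claim~\ref{clm:GFM-questionmarksize-hard-s21} each of them must be a wildcard. Thus all $r$ wildcards fall on the $r$ occurrences of $\Box$ in $p'$, which are precisely the last $r$ occurrences of $\Box$ in $p$, as claimed. The only delicate point, which I expect to state carefully, is this order argument: the counting in the second paragraph is routine, but it alone does not distinguish the prefix $\Box$'s from the $p'$-$\Box$'s, and it is the non-wildcard separators together with order-preservation that force the localization.
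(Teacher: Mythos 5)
Your proof is correct, but it is organized differently from the paper's. The paper argues directly by positional arithmetic: among the $8r+5$ pattern letters preceding the first $\Box$ of $p'$, at most $r$ are wildcards and the rest map to exactly one text letter each (here $\max_i|f(p_i)|=1$ is used), so that $\Box$ --- if not a wildcard --- would land at text position at least $7r+5 > 2r+1$, where $t$ has no $\Box$; hence all $r$ occurrences of $\Box$ in $p'$ are wildcards, and the budget of $r$ finishes the claim. You instead first run a global counting argument ($3r+1$ occurrences of $\Box$ in $p$ versus $2r+1$ in $t$, each non-wildcard $\Box$ consuming a distinct text $\Box$) to conclude that exactly $r$ wildcards are used and all of them sit on $\Box$-positions, and only then localize them via the non-wildcard separator block $\rep(\gs,2r+1)$ together with order preservation, since every $\gs$ of $t$ lies past position $6r+3$. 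Both arguments are sound and both rest on Claim~\ref{clm:GFM-questionmarksize-hard-s21} plus the fact that $t$ has no $\Box$ beyond its prefix; the paper's is shorter because the wildcard budget is injected directly into the position bound, while yours buys something the paper only gets implicitly --- an explicit intermediate statement that \emph{no} non-$\Box$ letter of $p$ is ever replaced by a wildcard --- and is also more robust in that the counting step is completely insensitive to the lengths of the wildcard images. You are right to flag the localization as the delicate point: the counting alone cannot distinguish prefix $\Box$'s from $p'$-$\Box$'s, and your separator/order-preservation step (or the paper's arithmetic) is genuinely needed there.
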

\begin{proof}
 It follows from the previous claim that $f(\Box)=\Box$ for any
 function that $r$-GF/GP-matches $p$ to $t$. Because every letter of $p$
 that is not replaced with a wildcard is replaced with exactly
 $1$ letter from the text, it follows that the first occurrence of
 $\Box$ in $p$ that corresponds to an occurrence of $\Box$ in $p'$ is mapped
 to (if it is not replaced with a wildcard) to a letter of the
 text at position at least $7r+5$. However, since the text $t$ does
 not contain the letter $\Box$ after position $2r+1$, this occurrence
 of $\Box$ in $p$ (and all other occurrences of $\Box$ in $p$ that
 follow) has to be replaced with a wildcard. Since $p'$ contains
 exactly $r$ occurrences of $\Box$ the only letters of $p$ that
 are replaced with wildcards are these occurrences of $\Box$.
\end{proof}
\begin{LEM}\label{lem:GFM-questionmarksize-hard-2}
 If $p$ $r$-GF/GP-matches $t$ then $G$ has a $k$-clique.
\end{LEM}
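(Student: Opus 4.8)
The plan is to leverage Claims~\ref{clm:GFM-questionmarksize-hard-s21} and~\ref{clm:GFM-questionmarksize-hard-s22} to force a rigid block structure on any $r$-GF/GP-matching, and then read off a $k$-clique from the values $f(V_1),\dots,f(V_k)$. First I would record the consequences of the two claims: since $f(\Box)=\Box$, $f(\es)=\es$, $f(\ies)=\ies$, $f(\gs)=\gs$ and the only wildcards are the $r$ occurrences of $\Box$ lying inside $p'$, every pattern letter outside $p'$ is literal and, because $\max_i|f(p_i)|=1$, is mapped to a single text symbol. Hence the preamble $\rep(\Box,2r+1)\rep(\es,2r+1)\rep(\ies,2r+1)\rep(\gs,2r+1)$ of $p$ occupies exactly the first $4(2r+1)$ positions of $t$, i.e.\ the preamble of $t$, so that $f(p')=t'$. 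The same length bound also forces each $f(V_i)$ to be a single letter of $\Sigma_t$.

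The hard part, and the heart of the argument, will be to show that the blocks of $p'$ line up exactly with the blocks of $t'$, so that no wildcard can drift across a block separator. I would argue this by counting occurrences of $\gs$. Both $p'$ and $t'$ contain exactly $\binom{k}{2}+1$ occurrences of $\gs$ (one before the first block, one after the last, and one between each pair of consecutive blocks). Under $f$ the $\binom{k}{2}+1$ literal $\gs$'s of $p'$ already produce $\binom{k}{2}+1$ occurrences of $\gs$ in $t'$; since $t'$ contains no more, neither a wildcard nor any $f(V_i)$ can contribute a $\gs$. As $f$ is a concatenation homomorphism its $\gs$-images appear in increasing order, so the $\ell$-th $\gs$ of $p'$ matches the $\ell$-th $\gs$ of $t'$. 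Consequently, for every pair $(i,j)$ the pattern block $\rep(\Box,4(m-1))\,\es\,V_i\,\ies\,V_j\,\es\,\rep(\Box,4(m-1))$ is mapped exactly onto the text block $\es\,\vertt(e_1^{i,j})\,\es\dotsb\es\,\vertt(e_m^{i,j})\,\es$.

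Next I would localize $f(V_i)$ and $f(V_j)$ inside a single copy of $\vertt$. The literal $\ies$ in the pattern block produces one genuine occurrence of $\ies$ in $t'$, which must sit inside some $\vertt(e)=v_l^i\,\ies\,v_{l'}^j$ with $e\in E_{i,j}$; around any such $\ies$ the text reads $\es\,v_l^i\,\ies\,v_{l'}^j\,\es$. Since the five consecutive pattern letters $\es,V_i,\ies,V_j,\es$ all have length-one images and are centered at this $\ies$, their images are precisely these five consecutive text symbols, forcing $f(V_i)=v_l^i\in V_i$ and $f(V_j)=v_{l'}^j\in V_j$ with $\{v_l^i,v_{l'}^j\}\in E_{i,j}$.

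Finally I would assemble the clique. Because $f$ is a function on $\Sigma_p$, the value $f(V_i)$ is one fixed vertex of $V_i$, the same in every block in which the letter $V_i$ appears. The previous step shows that for every pair $(i,j)$ with $1\le i<j\le k$ the pair $\{f(V_i),f(V_j)\}$ is an edge of $E_{i,j}\subseteq E$. Hence $\{f(V_1),\dots,f(V_k)\}$ is a set of $k$ pairwise adjacent vertices, one from each part, i.e.\ a $k$-clique of $G$. The same reasoning applies verbatim to Max-GPM, since injectivity of $f$ is never invoked in this direction.
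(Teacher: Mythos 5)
Your proof is correct and follows essentially the same route as the paper's: it invokes Claims~\ref{clm:GFM-questionmarksize-hard-s21} and~\ref{clm:GFM-questionmarksize-hard-s22}, aligns the $\gs$-delimited blocks of $p$ and $t$ by counting occurrences of $\gs$ (exactly the paper's argument that equal numbers of $\gs$ force corresponding occurrences to match), and then reads the clique off the single-letter images $f(V_1),\dots,f(V_k)$. The only difference is presentational: you spell out the localization of $\es\, V_i \ies V_j \es$ around the matched $\ies$ inside a single $\vertt(e^{i,j}_l)$, a step the paper states tersely.
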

\begin{proof}
 Let $f$ be a function that $r$-GF/GP-matches $p$ to $t$. Because of
 Claim~\ref{clm:GFM-questionmarksize-hard-s21}, it holds that
 $f(\Box)=\Box$, $f(\es)=\es$, $f(\ies)=\ies$, and
 $f(\gs)=\gs$. Furthermore, because of
 Claim~\ref{clm:GFM-questionmarksize-hard-s21} the only
 letters in $p$ that are replaced with wildcards are the last $r$
 occurrences of $\Box$ in $p$. Because the number of occurrences of
 the letter $\gs$ is the same in $t$ and $p$ each occurrence of $\gs$
 in $p$ has to be mapped to its corresponding occurrence in $t$. It
 follows that for every $i$ and $j$ with $1 \leq i < j \leq k$ the
 ``block''
 \begin{center}
   \begin{quote}
     $\rep(\Box,4(m-1)) \es V_i \ies V_j \es \rep(\Box,4(m-1))$ 
   \end{quote}
 \end{center}
 in $p$ has to be mapped to the corresponding ``block''
 \begin{center}
   \begin{quote}
     $\es \vertt(e_1^{i,j}) \es \dotsb \es \vertt(e_m^{i,j}) \es$
   \end{quote}
 \end{center}
 in $t$. Hence, the part $V_i \ies V_j$ has to be mapped to
 $\vertt(e_{l}^{i,j})$ for every $1 \leq l \leq m$. Consequently, the
 set $\SB f(V_i) \SM 1 \leq i \leq k \SE$ is a $k$-clique of
 $G$. 
\end{proof}
This concludes the proof of Theorem~\ref{the:GFM-questionmark-hard}.



\begin{THE}\label{the:GFM-mobile-2}
  Max-GFM and Max-GPM are \W{1}\hy hard parameterized by $\#\Sigma_t$,
  $\# \Sigma_p$, $|\Sigma_p|$, $\max_i |f(p_i)|$, and $\# ?$.
\end{THE}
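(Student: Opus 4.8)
The plan is to give a parameterized reduction from \pn{Multicolored Clique}, reusing the overall geometry of the construction for Theorem~\ref{the:GFM-questionmark-hard} but dualising the role of the wildcards. The guiding observation is that once $|\Sigma_p|$, $\#\Sigma_p$ and $\#?$ are all parameters, the pattern $p$ has length bounded by a function of $k$: it is a short ``query'', whereas the text $t$---whose alphabet size $|\Sigma_t|$ is \emph{not} a parameter---may be a long encoding of the input graph $G$. All of the solver's freedom therefore has to live in two places: the short images $f(V_i)$ of the pattern letters (which, because $|\Sigma_t|$ is unbounded, can range over arbitrarily many one-letter ``vertex codes'' while keeping $\max_i|f(p_i)|=1$), and the wildcard lengths (which, because $\max|f(?)|$ is \emph{not} a parameter, may be large). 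So the intended solution encodes the choice of one vertex per colour class in the contents $f(V_1),\dots,f(V_k)$ and uses only a handful of long wildcards to navigate $t$.

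Concretely, I would keep one pattern letter $V_i$ for each class $V_i$, so that $|\Sigma_p|=k+O(1)$ and, since $V_i$ need only occur in the $k-1$ blocks belonging to pairs that touch $i$, also $\#\Sigma_p=O(k)$; each $V_i$ is mapped to the one-letter code of its chosen vertex, giving $\max_i|f(p_i)|=1$. The many short ``$\Box$''-wildcards that Theorem~\ref{the:GFM-questionmark-hard} used to step over individual edges are collapsed into a constant number of \emph{long} wildcards per pair---one to jump to the selected edge and one to jump past it---so that $\#?=O(k^2)$ while $\max|f(?)|$ may grow with $m$. Exactly as before, a square-free padding prefix (cf.\ the notion of square-freeness in Section~\ref{sec:prel}) pins the fixed letters of $p$ to unique text positions, so that the forward direction follows by choosing the wildcard lengths to land on the clique edges, and the backward direction follows from anchoring arguments analogous to Claims~\ref{clm:GFM-questionmarksize-hard-s21} and~\ref{clm:GFM-questionmarksize-hard-s22}, which force each block of $p$ onto its matching block of $t$ and let one read off a $k$-clique from the images $f(V_i)$.

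The hard part is the new requirement that $\#\Sigma_t$ be a parameter, i.e.\ that \emph{every} text letter occur at most $f(k)$ times; this is exactly what the construction of Theorem~\ref{the:GFM-questionmark-hard} fails to satisfy, since there the separators $\es$ and $\ies$ occur once per edge and a vertex code occurs once per \emph{incident} edge, so a high-degree vertex's code would occur $\Theta(n)$ times. I therefore expect the technical heart of the proof to be a text layout in which no letter repeats more than $O(k)$ times while vertex consistency across the $\binom{k}{2}$ pairs is still enforced: the inter- and intra-edge separators must be made \emph{fresh} (pairwise distinct) letters that are simply swallowed by the long wildcards, and---most delicately---the edge blocks must be arranged so that each vertex code is referenced only $O(k)$ times, yet $f(V_i)$ still pins the \emph{same} vertex in all $k-1$ pairs incident to class $i$. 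Reconciling bounded letter-multiplicity with this cross-pair consistency, using only $O(k^2)$ wildcards of bounded content but unbounded length, is the step I expect to demand the most care; once the layout is fixed, verifying the parameter bounds and the two directions of correctness should be routine.
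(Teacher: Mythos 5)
Your reading of the parameter regime matches the paper's: with $|\Sigma_p|$, $\#\Sigma_p$, $\#?$ bounded the pattern is short, while $|\Sigma_t|$ and $\max|f(?)|$ are free, so wildcards become long jumps and a repetition-based prefix forces the frequently occurring pattern letter. (The paper's version of your ``padding'' is $t_0=\gs\rep(+,r)$, $p_0=\gs\rep(D,r)$ with $r=2(k'+1)$: the \emph{rest} of the text is square-free, so two adjacent non-wildcard $D$'s must map inside $t_0$, forcing $f(D)=+$, which in turn forces all $k'$ later occurrences of $D$ to be wildcards and $f(\gs)=\gs$; note your appeal to Claims~\ref{clm:GFM-questionmarksize-hard-s21} and~\ref{clm:GFM-questionmarksize-hard-s22} would not work as stated, since those arguments use $\max|f(?)|\leq 2$, which is unbounded here.) However, your proposal stops exactly where the real work is, and the mechanism you commit to for that part cannot succeed. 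You encode the chosen vertex of class $i$ as a one-letter ``vertex code'' $f(V_i)$ that must be matchable at the selected edge in each block incident to class $i$. For the reduction to be complete, \emph{every} edge of $E_{i,j}$ must be selectable, so the text block for the pair $(i,j)$ must contain, for each of its $m$ edges, occurrences of both endpoints' codes at the positions where $V_i$ and $V_j$ can land. Hence a vertex of degree $d$ has its code occurring at least $d$ times in $t$, and $d$ is not bounded by any function of $k$. With single-letter vertex codes as the consistency device, bounded $\#\Sigma_t$ and completeness are irreconcilable; flagging this as ``the step demanding the most care'' leaves the theorem unproved.

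The paper resolves it by abandoning vertex codes entirely and representing vertices \emph{positionally}. Besides the edge part $t_1$, the text has a part $t_2$ listing, for each class $i$ and each vertex $v^i_j$, an adjacency-list region $\gs_j\,\edges(v_j^i)\,\gs_j$ built from \emph{unique edge letters} $a_e$ (each $a_e$ occurs only three times in all of $t$: once in $t_1$ and once in each endpoint's region), delimited by separators $\gs_1,\dots,\gs_n$ occurring $2k$ times each. The pattern correspondingly has, besides the selection blocks $\gs D E_{i,j} D \gs$, one block per class of the form $A_i D\, I(i,1)\, D \cdots D\, I(i,k)\, D A_i$, where $I(i,j)$ is the letter $E_{\min(i,j),\max(i,j)}$. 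Since the $\gs$'s match one-to-one and the $\gs_j$ are the only repeated letters inside a class region, the two occurrences of $A_i$ must map to the same $\gs_j$, confining the whole block to a single vertex's adjacency list; and because the \emph{same} letters $E_{i,j}$ appear in both pattern parts, every selected edge touching class $i$ must lie in that one list. That is precisely the cross-pair consistency your sketch lacks, and it also explains why the wildcard budget is $k'=2\binom{k}{2}+k(k+2)$ rather than your $O(k^2)$ count: the class blocks consume wildcards too. In short: right skeleton (\pn{Multicolored Clique}, long-jump wildcards, square-freeness forcing), but the core gadget reconciling bounded $\#\Sigma_t$ with vertex consistency is missing, and the substitute you propose provably fails.
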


We will show the theorem by a  parameterized reduction from \textsc{Multicolored
 Clique}. Let $G = (V, E)$ be a $k$-partite graph with partition
$V_1,\dots,V_k$ of $V$. Let $E_{i,j}=\SB \{u,v\} \in E \SM  u \in
V_i \textup{ and } v \in V_j \SE$ for every $i$ and $j$ with $1 \leq
i < j \leq k$. 
Again, as we stated in the preliminaries we can assume that $|V_i|=n$ and
$|E_{i,j}|=m$ for every $i$ and $j$ with $1 \leq i < j \leq k$.

Let $V_i=\{v_1^i,\dotsc,v_n^i\}$ and $E_{i,j}=\{e_1^{i,j},\dotsc,e_m^{i,j}\}$, and $k'=2\binom{k}{2}+k(k+2)$.
We construct a text $t$ over alphabet $\Sigma_t$ and a pattern $p$
over alphabet $\Sigma_p$ from $G$ and $k$ such that
$p$ $k'$-GF/GP-matches $t$ using a function $f$ with $\max_{p \in
 \Sigma_p}|f(p)|=1$ if and only if $G$ has a $k$-clique. 
The alphabet $\Sigma_t$ consists of:
\begin{itemize}
\item the letter $\gs$ (used as a separator);
\item the letter $+$ (used to forced the wildcards);
\item one letter $a_e$ for every $e \in E$ (representing the
 edges of $G$);
\item one letter $\gs_i$ for every $i$ with $1 \leq i \leq n$ (used as
 special separators that group edges from the same vertex);
\item the letters $l_{i,j}$, $r_{i,j}$, $l_i$, $r_i$ for every $i$ and $j$ with
 $1 \leq i < j \leq k$ (used as dummy letters to ensure injectivity
 for GPM);
\item the letter $d_e^v$ and $d^v$ for every $e \in E$ and $v \in V(G)$ with
 $v \in e$ (used as dummy letters to ensure injectivity for GPM).
\end{itemize}
We set $\Sigma_p=\{\gs,D\} \cup \SB E_{i,j} \SM 1 \leq i < j \leq k \SE$.

For a vertex $v \in V$ and $j$ with $1 \leq j
\leq k$ we denote by $E_j(v)$ the set of edges of $G$ that are
incident to $v$ and whose other endpoint is in $V_j$. 
Furthermore, for a vertex $v \in V(G)$, we write $\edges(v)$ to
denote the text $\textbf{el}(v,E_1(v))\dotsb \textbf{el}(v,E_k(v))d^v$,
where $\textbf{el}(v,E')$, for vertex $v$ and a set $E'$ of edges with $E'=\{e_1,\dotsc,e_l\}$, 
is the text $d_{e_1}^v e_{e_1} d_{e_2}^v e_{e_2} \dotsb d_{e_l}^v
a_{e_l}$.

We first define the following preliminary text and pattern
strings. Let $t_1$ be the text:
\begin{quote}
 \begin{center}
   $\gs l_{1,2} a_{e^{1,2}_1} \dotsb a_{e^{1,2}_m} r_{1,2} \gs \dotsb \gs l_{1,k} a_{e^{1,k}_1} \dotsb a_{e^{1,k}_m} r_{1,k}$
   
   $\gs l_{2,3} a_{e^{2,3}_1} \dotsb a_{e^{2,3}_m} r_{2,3} \gs \dotsb \gs l_{2,k} a_{e^{2,k}_1} \dotsb a_{e^{2,k}_m} r_{2,k}$
   
   $\dotsb$
   
   $\gs l_{k-1,k} a_{e^{k-1,k}_1} \dotsb a_{e^{k-1,k}_m} r_{k-1,k}$
 \end{center}
\end{quote}
Let $t_2$ be the text:
\begin{quote}
 \begin{center}
   $\gs l_1 \gs_1 \edges(v_1^1) \gs_1 \dotsb \gs_n \edges(v_n^1) \gs_n
   r_1$
   
   $\dotsb$
   
   $\gs l_k \gs_1 \edges(v_1^k) \gs_1 \dotsb \gs_n \edges(v_n^k)
   \gs_n r_k \gs$
 \end{center}
\end{quote}
Let $p_1$ be the pattern:
\begin{quote}
 \begin{center}
   
   $\gs D E_{1,2} D \gs \dotsc \gs D E_{1,k} D$
   
   $\gs D E_{2,3} D \gs \dotsc \gs D E_{2,k} D$
   
   $\dotsb$
   
   $\gs D E_{k-1,k} D$
 \end{center}
\end{quote}

\newcommand{\pih}{\mathbf{p}}

For $i$, $j$ with $1 \leq i,j \leq k$, let $I(i,j)$ be 
the letter $E_{i,j}$ if $i < j$, the letter $E_{j,i}$ if $i>j$ and the empty string if
$i=j$. We define $\pih(1)$ to be the
pattern:
\begin{quote}
 \begin{center}
   $A_1 D I(1,2) D I(1,3) \dotsb \dotsb D I(1,k) D A_1$
 \end{center}
\end{quote}
, we define $\pih(k)$ to be the
pattern:
\begin{quote}
 \begin{center}
   $A_k D I(k,1) D I(k,2) \dotsb \dotsb D I(k,k-1) D A_k$
 \end{center}
\end{quote}
, and for every $i$ with $1 < i < k$, we define
$\pih(i)$ to be the pattern:
\begin{quote}
 \begin{center}
   $A_i D I(i,1) D I(i,2) \dotsb
   D I(i,i-1) D I(i,i+1) \dotsb D I(i,k) D
   A_i$
 \end{center}
\end{quote}
Then $p_2$ is the pattern:
\begin{quote}
 \begin{center}

   $\gs L_1 \pih(1) R_1 \gs \dotsb \gs L_k \pih(k) R_k \gs$
 \end{center}
\end{quote}
Let $r=2(k'+1)$. For $l \in \Sigma_p \cup \Sigma_t$ and $i \in
\naturals$ we write $\rep(l,i)$ to denote the text consisting of
repeating the letter $l$ exactly $i$ times.
We also define $t_0$ to be the text $\gs\rep(+,r)$ and $p_0$ to be
the pattern $\gs\rep(D,r)$.
Then, $t$ is the concatenation of $t_0$, $t_1$ and $t_2$
and $p$ is a concatenation of $p_0$, $p_1$ and $p_2$.

This completes the construction of $t$ and $p$. Clearly, $t$ and $p$
can be constructed from $G$ and $k$ in fpt-time (even polynomial
time). Furthermore, $\# \Sigma_t=r$, 
$\# \Sigma_p=r+k'$,
$|\Sigma_p|=\binom{k}{2}+k+2$
and hence bounded by $k$,
as required. It remains to show that $G$ has a $k$-clique if and only
if $p$ $k'$-GF/GP-matches $t$ using a function $f$ with $\max_{p \in \Sigma_p}|f(p)|=1$. 

\begin{LEM}\label{lem:GFM-mobile-2-hard-1}
 If $G$ has a $k$-clique then $p$ $k'$-GF/GP-matches $t$ using a
 function $f$ with $\max_{p \in \Sigma_p}|f(p)|=1$.
\end{LEM}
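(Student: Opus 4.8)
The plan is to turn a $k$-clique of $G$ directly into an explicit matching, just as in the forward direction of the previous reduction (Lemma~\ref{lem:GFM-questionmarksize-hard-1}): fix clique vertices $v^1_{h_1},\dots,v^k_{h_k}$ (with $v^i_{h_i}\in V_i$) and clique edges $e^{i,j}_{h_{i,j}}$ for $1\le i<j\le k$, and read off from them a function $f$ together with a placement of $k'$ wildcards. I would define $f$ to fix the separators ($f(\gs)=\gs$, and likewise on the remaining separator letters), to set $f(D)=+$, to send each edge letter to its chosen edge, $f(E_{i,j})=a_{e^{i,j}_{h_{i,j}}}$, and to map the per-colour anchor letters to suitable dummy letters of $t_2$. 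Every $f$-value is a single text letter, so $\max_{p\in\Sigma_p}|f(p)|=1$, as required, and the choice $f(D)=+$ makes $p_0=\gs\,\rep(D,r)$ match $t_0=\gs\,\rep(+,r)$ verbatim, using no wildcard.

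Given $f(D)=+$ and that $+$ occurs only in $t_0$, I realise every occurrence of $D$ inside $p_1$ and $p_2$ by a wildcard, and these wildcards carry the whole construction. In $p_1$, inside the block $\gs D E_{i,j} D$ I let the first $D$-wildcard swallow the prefix $l_{i,j}a_{e^{i,j}_1}\dots a_{e^{i,j}_{h_{i,j}-1}}$ of the corresponding segment of $t_1$, let $E_{i,j}$ land exactly on the chosen-edge letter $a_{e^{i,j}_{h_{i,j}}}$, and let the second $D$-wildcard swallow the suffix $a_{e^{i,j}_{h_{i,j}+1}}\dots a_{e^{i,j}_m}r_{i,j}$; both images are non-empty since they contain $l_{i,j}$ and $r_{i,j}$, respectively. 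In $p_2$, for each colour $i$ the wildcards skip to the sub-block $\edges(v^i_{h_i})$ of the selected vertex, and inside it the wildcard $D$'s absorb the dummy letters $d^v_e$, $d^v$ and the non-selected edge letters, while each $I(i,j)$ lands on the letter $a_{e^{i,j}_{h_{i,j}}}$ of the selected edge of $E_j(v^i_{h_i})$. The crucial point is consistency with $p_1$: both uses of $E_{i,j}$ demand the same text letter $a_{e^{i,j}_{h_{i,j}}}$, and this letter really occurs in both $\edges(v^i_{h_i})$ and $\edges(v^j_{h_j})$ precisely because $e^{i,j}_{h_{i,j}}$ is a clique edge joining the two selected vertices, which is where the clique hypothesis is used.

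It then remains to check the count and the concatenation. No wildcard is spent on $p_0$; the block structure of $p_1$ spends two wildcards per pair and hence $2\binom{k}{2}$; and each colour block of $p_2$ spends its $k$ internal $D$-wildcards together with the two flanking wildcards, i.e. $k+2$ per colour and $k(k+2)$ in total. Summing gives exactly $k'=2\binom{k}{2}+k(k+2)$. Verifying $f(p')=t$ is then a region-by-region inspection ($t_0$, each $(i,j)$-segment of $t_1$, and each colour segment of $t_2$), using that the selected edges occur in the colour-sorted order in which the $I(i,j)$ read them off within $\edges(v^i_{h_i})$.

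Finally, for Max-GPM I would check that $f$ extended to the wildcards is injective. The $f$-images of the pattern-alphabet letters are pairwise distinct single letters ($\gs$, $+$, the distinct edge letters $a_e$, and the anchor letters), and each wildcard image is made distinct by containing a dummy letter occurring nowhere else in $t$: the $l_{i,j}$ and $r_{i,j}$ for the two $p_1$-wildcards of pair $(i,j)$, and the $d^v_e$, $d^v$ (and the group separators $\gs_s$) for the $p_2$-wildcards, which is exactly the purpose for which these dummy letters were introduced. I expect the main obstacle to be precisely this bookkeeping inside $p_2$: describing, per colour block, how the wildcards partition $\edges(v^i_{h_i})$ around the selected-edge letters so that the concatenation comes out exactly right, every wildcard image is non-empty, and (for GPM) all images are pairwise distinct; everything else is routine verification.
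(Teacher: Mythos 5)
Your proposal is correct and takes essentially the same approach as the paper: wildcards on every occurrence of $D$ in $p_1$ and $p_2$, the assignments $f(\gs)=\gs$, $f(D)=+$, $f(E_{i,j})=a_{e^{i,j}_{h_{i,j}}}$, anchors sent to the repeated group separators of the selected vertices, and wildcard images absorbing the dummy and unselected-edge letters --- exactly the mapping the paper imports from Lemma~\ref{lem:GFM-mobile-1-hard-1}, with your count $2\binom{k}{2}+k(k+2)=k'$ and injectivity bookkeeping filling in what the paper dismisses as ``straightforward to check.'' The one imprecision is calling the targets of the $A_i$ ``dummy letters of $t_2$'': they must be the separators $\gs_{h_i}$, since those are the only letters occurring twice inside a colour block, which your own description of how the wildcards skip to $\edges(v^i_{h_i})$ already presupposes.
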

\begin{proof}\sloppypar
 Let $\{v_{h_1}^1, \dotsc,v_{h_k}^k\}$ be the vertices and 
 $\SB e_{h_{i,j}}^{i,j} \SM 1 \leq i < j \leq k \SE$ be the edges of
 a $k$-clique of $G$ with $1 \leq h_i \leq n$ and $1 \leq h_{i,j}
 \leq m$ for every $i$ and $j$ with $1 \leq i < j \leq k$. 

 We put $k'$ wildcards on the last $k'$ occurrences of $D$ in $p$.
 The mapping of these wildcards is defined very similar to the
 mapping of the letters $L_{i,j}$, $R_{i,j}$, $L_i$, $R_i$, and
 $D_{i,j}$ in the proof of Lemma~\ref{lem:GFM-mobile-1-hard-1}
 and will not be repeated here. Using this mapping ensures that every
 wildcard is mapped to an non-empty substring of $t$ and no two
 wildcards are mapped to the same substring of $t$.
 
 We define the function $f$ that $k'$-GF/GP-matches $p$ to $t$ as follows:
 We set $f(\gs)=\gs$ and $f(D)=+$. Moreover, for every $i$ and
 $j$ with $1 \leq i < j \leq k$, we set
 $f(E_{i,j})=a_{e_{h_{i,j}}^{i,j}}$ and $f(A_i)=\gs_{i}$.

 It is straightforward to check that $f$ together with above mapping
 for the wildcards $k'$-GF/GP-matches $p$ to $t$.
\end{proof}

\begin{CLM}\label{clm:GFM-mobile-2-hard-1}
 Let $f$ be a function that $k'$-GF/GP-matches $p$ to $t$ with
 $\max_{p \in \Sigma_p}|f(p)|=1$, then:
 $f(\gs)=\gs$ and $f(D)=+$.
 Moreover, all wildcards have to be placed on all the $k'$
 occurrences of $D$ in $p'$.
\end{CLM}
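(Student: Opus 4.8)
The plan is to follow the same two-step template as Claims~\ref{clm:GFM-questionmarksize-hard-s21} and~\ref{clm:GFM-questionmarksize-hard-s22}: first pin down the images of the structural letters $\gs$ and $D$, then locate the wildcards. The essential difference from that earlier setting is that here $\max|f(?)|$ is \emph{not} among the parameters, so a single wildcard may be mapped to an arbitrarily long substring of $t$. Consequently the positional argument of Claim~\ref{clm:GFM-questionmarksize-hard-s21} (``the first non-wildcard letter is mapped near the front of $t$'') is unavailable, and the crux must instead be a global counting argument based on letter frequencies. The argument I describe uses neither injectivity nor non-injectivity of $f$, so it applies uniformly to Max-GFM and Max-GPM.

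First I would determine $f(D)$. By construction $D$ is by far the most frequent letter of the pattern, occurring $\#\Sigma_p=r+k'$ times, whereas $+$ is the unique letter of the text attaining the maximum frequency $\#\Sigma_t=r$; every other letter of $t$ occurs strictly fewer than $r$ times. Since at most $k'$ occurrences of $D$ may be replaced by wildcards, at least $(r+k')-k'=r$ occurrences survive as genuine pattern letters, and, because $\max_{p\in\Sigma_p}|f(p)|=1$, each such occurrence is mapped to the single letter $f(D)$ at a distinct position of $t$. Hence $f(D)$ occurs at least $r$ times in $t$, which forces $f(D)=+$. Feeding this back, the at least $r$ surviving copies of $D$ must be matched to the $r$ available copies of $+$, so there are exactly $r$ non-wildcard and exactly $k'$ wildcard occurrences of $D$; as the total number of wildcards is at most $k'$, it follows that every wildcard is placed on an occurrence of $D$.

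Next I would show that these $k'$ wildcards are exactly the occurrences of $D$ in $p'=p_1p_2$. The prefix $p_0=\gs\,\rep(D,r)$ consists of $r+1$ positions, each mapped to a non-empty substring, so its image is an initial segment of $t$ of length at least $r+1$; in particular the whole block $\rep(+,r)$ of $t$ lies inside the image of $p_0$, and $p'=p_1p_2$ is mapped into the suffix of $t$ that begins after position $r+1$, a region containing no $+$. Since $f(D)=+$, any non-wildcard occurrence of $D$ inside $p'$ would have to be matched to a $+$ lying in this suffix, which is impossible; therefore every occurrence of $D$ in $p'$ is a wildcard. As $p'$ contains exactly $k'$ occurrences of $D$, these account for all $k'$ wildcards, so no occurrence of $D$ in $p_0$ is a wildcard and the $r$ prefix copies of $D$ are precisely the non-wildcard ones, mapped onto $\rep(+,r)$.

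Finally, $f(\gs)=\gs$ falls out immediately: all wildcards lie inside $p'$, so the very first position of $p$ is the genuine pattern letter $\gs$, which is mapped to the length-$1$ prefix $t_1=\gs$ of the text, giving $f(\gs)=\gs$. The step I expect to be the main obstacle is the first one: without a bound on $\max|f(?)|$ I cannot localise individual letters positionally, so the whole proof hinges on the frequency design, namely that $D$ and $+$ are the unique ``heavy'' letters of $p$ and $t$ and that, even after deleting a full wildcard budget of $k'$ occurrences, the remaining $r$ copies of $D$ still outnumber the occurrences in $t$ of every letter other than $+$. Making this margin work out exactly (so that the $k'$ wildcards are forced to be precisely the $D$'s of $p'$, with none to spare) is the delicate point, and it is where the choices $r=2(k'+1)$ and the counts of $D$ in $p_0$ and in $p'$ are used.
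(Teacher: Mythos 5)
Your proof is correct, and it reaches the key conclusion $f(D)=+$ by a genuinely different route than the paper. The paper's proof is local and structural: it observes that the only squares in $t$ lie inside $t_0=\gs\rep(+,r)$ (the concatenation of $t_1$ and $t_2$ is square-free), and then uses pigeonhole on the $r=2(k'+1)$ trailing $D$'s of $p_0$: at least two consecutive ones escape the wildcard budget of $k'$, their image $f(D)f(D)$ is a square and hence must lie in $t_0$, and since it sits among the $+$'s this forces $f(D)=+$. You instead argue globally by letter frequencies: at least $(r+k')-k'=r$ occurrences of $D$ survive, each occupying a distinct single position of $t$ (here you use both $\max_{p\in\Sigma_p}|f(p)|=1$ and the fact that a GF-match tiles $t$ with disjoint non-empty images), so the letter $f(D)$ occurs at least $r$ times in $t$; since $+$ is the unique letter of $t$ with $r$ occurrences --- the runner-up is $\gs$ with only $\binom{k}{2}+k+2<r$ occurrences --- this forces $f(D)=+$. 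Both arguments are sound, and your remaining steps (the image of $p_0$ has length at least $r+1$ because wildcards also map to non-empty strings, so the $D$'s of $p'$ cannot reach any $+$ and must all be wildcards, exhausting the budget, whence $f(\gs)=\gs$) coincide with the paper's. The trade-off: the paper's square-freeness argument is the template it reuses across its other reductions (cf.\ Lemma~\ref{clm:GFM-questionmarksize-hard-s21} and Claim~\ref{clm:GFM-occt-max-1}) and needs no global frequency analysis; your counting argument dispenses with square-freeness entirely but requires verifying the frequency gap (that every text letter other than $+$ occurs strictly fewer than $r$ times), a fact the construction indeed provides but which you assert rather than check, and it depends more essentially on $|f(D)|=1$, whereas the square argument pins $f(D)$ to a run of $+$'s even before that hypothesis is invoked.
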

\begin{proof}
 We first show that $f(D)=+$. Observe that the only squares in the
 string $t$ are contained in $t_0$ (recall the definition of square-free from
 Section~\ref{sec:prel}). It follows that every two consecutive
 occurrences of pattern letters in $p_0$ have to be mapped to a
 substring of $t_0$. Because there are $2(k'+1)$ occurrences of $D$
 in $p_0$ it follows that at least two consecutive occurrences of
 $D$ in $p_0$ are not replaced with wildcards and hence $D$ has to be
 mapped to a substring of $t_0$. Furthermore, since all occurrences
 of $D$ are at the end of $p_0$, we obtain that $D$ has to be mapped
 to $+$, as required. Because all occurrences of $D$ in $p'$ have to
 be mapped to substrings of the concatenation of $t_1$ and $t_2$, but
 these strings
 do not contain the
 letter $+$, it follows that all the $k'$ occurrences of $D$ in $p_1$
 and $p_2$
 have to be replaced by wildcards. Since we are only allowed to use
 at most $k'$ wildcards, this shows the second statement of the
 claim. Since no wildcards are used to replace letters in $p_0$ it
 now easily follows that $f(\gs)=\gs$.
\end{proof}

\begin{LEM}\label{lem:GFM-mobile-2-hard-1}
 If $p$ $k'$-GF/GP-matches $t$ using a function $f$ with $\max_{p \in
   \Sigma_p}|f(p)|=1$, then $G$ has a $k$-clique.
\end{LEM}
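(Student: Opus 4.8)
The plan is to build on Claim~\ref{clm:GFM-mobile-2-hard-1}, which already fixes $f(\gs)=\gs$, $f(D)=+$, and places all $k'$ wildcards on the occurrences of $D$ in $p_1p_2$; consequently the only characters of $p$ that are mapped, non-wildcarded, to a single text letter are the separators $\gs$, the edge letters $E_{i,j}$, and the vertex letters $A_i$. The engine of the whole argument is a counting lock on $\gs$: the pattern $p$ and the text $t$ contain the same number ($\binom{k}{2}+k+2$) of occurrences of $\gs$. Since $f(\gs)=\gs$, no wildcard sits on a $\gs$, and $t$ contains no more $\gs$'s than $p$ forces, I would argue that no wildcard and no other pattern letter may output a $\gs$; hence the $\gs$'s of $p$ are matched in order, bijectively, to the $\gs$'s of $t$. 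This rigidly aligns the three parts block by block: $p_0$ onto $t_0$, the $\binom{k}{2}$ blocks $D\,E_{i,j}\,D$ of $p_1$ onto the corresponding class blocks $l_{i,j}a_{e^{i,j}_1}\dotsb a_{e^{i,j}_m}r_{i,j}$ of $t_1$ in the same order, and each color block $\gs L_i\,\pih(i)\,R_i\,\gs$ of $p_2$ onto the color-$i$ region $l_i\gs_1\edges(v^i_1)\gs_1\dotsb\gs_n\edges(v^i_n)\gs_n r_i$ of $t_2$.

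Next I would read off the chosen edges. As $f(E_{i,j})$ is a single fixed character and $E_{i,j}$ occurs inside class block $(i,j)$ of $p_1$, its image lies among $l_{i,j},r_{i,j},a_{e^{i,j}_1},\dotsc,a_{e^{i,j}_m}$. But $E_{i,j}$ also appears in $p_2$, as the slot $I(i,j)$ of $\pih(i)$ and the slot $I(j,i)$ of $\pih(j)$, where by the alignment its image is confined to the $\edges(\cdot)$ portions of $t_2$; the only letters common to a class block of $t_1$ and the $\edges(\cdot)$ portions are the edge letters $a_e$. Hence $f(E_{i,j})=a_{e^{i,j}}$ for a unique $e^{i,j}\in E_{i,j}$, the edge selected for the pair $(i,j)$.

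The crux is the analysis of a single color block of $p_2$. Within the color-$i$ region the two occurrences of $A_i$ map to the same character $f(A_i)$ and must bracket the image of the content $D\,I(i,1)\,D\dotsb D\,I(i,k)\,D$ of $\pih(i)$. The only letters that occur twice in this region are the grouping separators $\gs_\ell$, each appearing exactly as $\gs_\ell\,\edges(v^i_\ell)\,\gs_\ell$; therefore $f(A_i)=\gs_{h_i}$ for some $h_i$, the content of $\pih(i)$ maps precisely onto $\edges(v^i_{h_i})$, and $A_i$ has selected the vertex $v^i_{h_i}\in V_i$. Now each slot $I(i,j)$, with image $a_{e^{i,j}}$, must occur inside $\edges(v^i_{h_i})=\textbf{el}(v^i_{h_i},E_1(v^i_{h_i}))\dotsb\textbf{el}(v^i_{h_i},E_k(v^i_{h_i}))d^{v^i_{h_i}}$; since $e^{i,j}\in E_{i,j}$ joins colors $i$ and $j$, the letter $a_{e^{i,j}}$ can appear in $\edges(v^i_{h_i})$ only inside the sub-block $\textbf{el}(v^i_{h_i},E_j(v^i_{h_i}))$, so $e^{i,j}$ is incident to $v^i_{h_i}$. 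Repeating this on $\pih(j)$ shows $e^{i,j}$ is incident to $v^j_{h_j}$ as well. As $e^{i,j}$ has one endpoint in $V_i$ and one in $V_j$, it must equal $\{v^i_{h_i},v^j_{h_j}\}$; this holds for every pair $i<j$, so $\{v^1_{h_1},\dotsc,v^k_{h_k}\}$ induces a $k$-clique.

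The most delicate step is this vertex-selection argument: because $\max|f(?)|$ is not a parameter here, the wildcards may swallow arbitrarily long substrings, and I must ensure that the two copies of $A_i$ can only land on the two ends of one group $\gs_{h_i}\edges(\cdot)\gs_{h_i}$ and that every $I(i,j)$ really falls in the color-$j$ sub-block of the chosen vertex. The global $\gs$-alignment together with the fact that $f(A_i)$ is a single fixed letter is exactly what makes these placements rigid. The dummy letters $l_{i,j},r_{i,j},l_i,r_i,d^v_e,d^v$ are irrelevant to the clique extraction beyond being absorbed by wildcards; as in the forward direction (Lemma~\ref{lem:GFM-mobile-2-hard-1}) they also let distinct wildcards receive distinct images, so the entire argument carries over unchanged to Max-GPM.
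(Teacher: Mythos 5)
Your proof is correct and follows essentially the same route as the paper's: the $\gs$-counting alignment of blocks, the identification of $f(A_i)$ with some $\gs_j$ because the $\gs_\ell$ are the only letters repeated inside a colour region of $t_2$, and the extraction of the clique from the fact that the single letter $f(E_{i,j})$ must simultaneously be an edge letter of $E_{i,j}$ (from $p_1$) and occur in $\edges(v^i_{h_i})$ and in $\edges(v^j_{h_j})$ (from $p_2$). Your write-up is somewhat more explicit than the paper's at two points --- ruling out $l_{i,j}, r_{i,j}$ as possible images of $E_{i,j}$ and spelling out the common-endpoint argument --- but the decomposition and all key ideas coincide.
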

\begin{proof}
 Let $f$ be a function that $k'$-GF/GP-matches $p$ to $t$ with
 $\max_{p \in \Sigma_p}|f(p)|=1$. 
 Because of
 Claim~\ref{clm:GFM-mobile-2-hard-1}, we know that
 $f(\gs)=\gs$ and that no
 occurrence of $\gs$ in $p$ is replaced by a wildcard.
 Because $t$ and $p$ have the same number of occurrences of $\gs$,
 it follows that the $i$-th occurrences of $\gs$ in $p$ has to be
 mapped to the $i$-th occurrence of $\gs$ in $t$. We obtain that:
 \begin{enumerate}
 \item[(1)] For every $i$, $j$ with $1 \leq i < j \leq k$, the substring
   $D E_{i,j} D$ of $p$ has to be mapped to the substring
   $l_{i,j} a_{e^{i,j}_1} \dotsb a_{e^{i,j}_m} r_{i,j}$ of $t$.
 \item[(2)] For every $i$ with $1 \leq i \leq k$,  the substring
   $L_i \pih(i) R_i$ of $p$ has to be mapped to the substring
   $l_i \gs_1 \edges(v_1^i) \gs_1 \dotsb \gs_n \edges(v_n^i) \gs_n r_i$ of $t$.
 \end{enumerate}
 Because for every $i$ with $1 \leq i \leq k$ the letters $\gs_j$
 are the only letters that occur more than once in the substring 
 $l_i \gs_1 \edges(v_1^i) \gs_1 \dotsb \gs_n \edges(v_n^i) \gs_n r_i$
 of $t$, we obtain from (2) that $A_i$ has to be mapped to $\gs_j$ for some
 $j$ with $1 \leq j \leq n$. Consequently:
 \begin{enumerate}
 \item[(3)] for every $i$ with $1 \leq i \leq k$,  the substring
   $\pih(i)$ of $p$
   has to be mapped to a substring
   $\gs_j \edges(v_j^i) \gs_j$ of $t$ for some $j$ with $1 \leq j
   \leq n$.
 \end{enumerate}
 It follows from (1) that for every $i$, $j$ with $1 \leq i < j
 \leq k$, $f(E_{i,j})$ is mapped to an edge between $V_i$ and
 $V_j$. 
 Furthermore, because of (3) it follows that for every $i$
 with $1 \leq i \leq k$, it holds that the edges mapped to any
 $E_{l,r}$ with $1 \leq l < r \leq k$ such that $l=i$ or $r=i$ have
 the same endpoint in $V_i$. Hence, the set of edges mapped to 
 the letters $E_{i,j}$ for $1 \leq i <j \leq k$ form a $k$-clique
 of $G$.
\end{proof}
This concludes the proof of Theorem~\ref{the:GFM-mobile-2}.

\begin{THE}\label{the:GFM-occt-max}
  (Max-)GFM and (Max-)GPM are \W{1}\hy hard parameterized by $\#\Sigma_t$,
  $\# \Sigma_p$, $\max_i|f(p_i)|$, $\# ?$, and $\max |f(?)|$.
\end{THE}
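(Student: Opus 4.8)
The plan is to give an fpt\hy reduction from \pn{Multicolored Clique} directly to (plain) GFM and GPM that uses \emph{no wildcards at all}. Since the produced instance then has $\#?=0$, both $\#?$ and $\max|f(?)|$ are trivially bounded, so it suffices to keep the three parameters $\#\Sigma_t$, $\#\Sigma_p$, and $\max_i|f(p_i)|$ small while leaving $|\Sigma_t|$ and $|\Sigma_p|$ unbounded; a single such construction then yields the claimed hardness for the plain \emph{and} the \textnormal{Max-}variants at once (this is exactly what the ``(Max-)'' in the statement signals). The guiding idea is to replace the long ``skip'' wildcards of Theorem~\ref{the:GFM-mobile-2} by \emph{distinct filler letters}: because $|\Sigma_p|$ may now grow with the instance, I can afford one fresh filler per position that has to be skipped, each occurring only once (so its contribution to $\#\Sigma_p$ is $1$) and each mapped to a string of length at most $2$ (so $\max_i|f(p_i)|\le 2$). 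The genuine choices are still carried by a bounded number of \emph{selector} letters $E_{i,j}$, which occur only $O(1)$ times and are mapped to a single edge\hy letter.

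Concretely, for every pair $(i,j)$ the text contains an \emph{edge menu} $a_{e_1^{i,j}}\dots a_{e_m^{i,j}}$ flanked on each side by $m-1$ fresh padding letters, and the pattern contains a gadget consisting of $m-1$ distinct fillers, then the selector $E_{i,j}$, then $m-1$ further distinct fillers. For every part $i$ the text contains a \emph{vertex menu} listing, for each $v\in V_i$ and each other part $j$, the edges of $E_j(v)$ interleaved with fresh dummy letters (exactly as in Theorem~\ref{the:GFM-mobile-2}), and the pattern contains a gadget that first selects one vertex of $V_i$ and then, inside the chosen vertex block, reuses the \emph{same} selectors $E_{i,j}$ to pick one incident edge towards each part $j$. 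Thus each $E_{i,j}$ occurs once in the $(i,j)$\hy edge menu and once inside each of the vertex menus of $i$ and of $j$; since $f$ is a function, its single image is an edge\hy letter that must simultaneously belong to $E_{i,j}$ and be incident, in part $i$ and in part $j$, to the selected vertices. Hence the images of the $E_{i,j}$ form a $k$\hy clique, and conversely a $k$\hy clique gives a matching.

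The step I expect to be the main obstacle is realising a \emph{full\hy range} selection — picking the $h$\hy th of $m$ menu entries with $m$ unbounded — using only fillers of length in $\{1,2\}$ and no empty images, i.e.\ recovering the effect that length\hy $0$ wildcards had in Theorem~\ref{the:GFM-questionmark-hard}. I resolve this by \emph{calibrating the padding}: with $B=m-1$ fillers before the selector and $B$ padding letters before the menu, the length consumed before the selector is forced into $[B,2B]=[m-1,2m-2]$, and making exactly $h-1$ of those fillers have length $2$ places the selector precisely on the $h$\hy th edge; a symmetric argument handles the fillers after the selector, and the same scheme is nested once more for the vertex\hy then\hy edge choice. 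The rest is routine bookkeeping: every text letter occurs only a function of $k$ many times (edge\hy letters at most three times, the global separator $\gs$ at most $O(k^2)$ times, and all padding/dummy letters exactly once), so $\#\Sigma_t$ and likewise $\#\Sigma_p$ stay bounded. For GPM one finally makes $f$ injective just as in Theorem~\ref{the:GFM-mobile-2}, by giving every menu entry a private dummy so that fillers and selectors never share an image; the two directions are then proved by two lemmas analogous to those establishing Theorem~\ref{the:GFM-mobile-2}, together with a claim mirroring Claim~\ref{clm:GFM-mobile-2-hard-1} that pins down the images of the separator letters.
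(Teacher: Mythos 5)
Your proposal is correct and follows essentially the same route as the paper's own proof: a wildcard-free reduction from \pn{Multicolored Clique} in which the repeated ``skip'' letters of the earlier constructions are replaced by sequences of distinct, singly-occurring filler letters whose images have length $1$ or $2$, calibrated so that exactly $h-1$ of the $m-1$ fillers get length $2$ to place the shared selector $E_{i,j}$ on the $h$-th menu entry, with the clique certified by reusing each $E_{i,j}$ in the edge menu and in both incident vertex menus. The only cosmetic difference is how the separator's image is pinned down (the paper uses a $\gs\gs$ prefix, you invoke a claim in the style of Theorem~\ref{the:GFM-mobile-2}), which does not change the substance of the argument.
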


We will show the above theorem by a  parameterized reduction from \textsc{Multicolored
 Clique}. Let $G = (V, E)$ be a $k$-partite graph with partition
$V_1,\dots,V_k$ of $V$. Let $E_{i,j}=\SB \{u,v\} \in E \SM  u \in
V_i \textup{ and } v \in V_j \SE$ for every $i$ and $j$ with $1 \leq
i < j \leq k$. 
Again, as we stated in the preliminaries we can assume that $|V_i|=n$ and
$|E_{i,j}|=m$ for every $i$ and $j$ with $1 \leq i < j \leq k$.

Let $V_i=\{v_1^i,\dotsc,v_n^i\}$ and
$E_{i,j}=\{e_1^{i,j},\dotsc,e_m^{i,j}\}$. 
For a vertex $v \in V_i$ and $j$ with $1 \leq j \leq k$ we denote by
$E_j(v)$ the set of edges of $G$ that are incident to $v$ and whose
other endpoint is in $V_j$. 

We construct a text $t$ over alphabet $\Sigma_t$ and a pattern $p$
over alphabet $\Sigma_p$ from $G$ and $k$ such that
the following two conditions hold:
\begin{enumerate}
\item[(C1)] the parameters $\# \Sigma_t$ and $\# \Sigma_p$ are bounded
 by $k$ (note the parameters $\# ?$ and $\max |f(?)|$ are bounded
 since we consider GFM and GPM).
\item[(C2)] $p$ GF/GP-matches $t$ using a function $f$ with $\max_{p
   \in \Sigma_p}|f(p)|\leq 2$ if and only if $G$
 has a $k$-clique.
\end{enumerate}

Let $r=2kn(n-1)+2n+(k-1)m-1$.
The alphabet $\Sigma_t$ consists of (1)
the letter $\gs$, (2)
the letters $l^{i,j}_l$ and $r^{i,j}_l$ for every $1 \leq i <
j \leq k$ and $1 \leq l \leq m-1$, (3)
the letters $l^{v,j}_l$ and $r^{v,j}_l$ for every $v \in V_i$,
$1 \leq j \leq k$, and $1 \leq l \leq n-1$, where
$1 \leq i \leq k$ and $j\neq i$,
(4) the letters $l^{i}_l$ and $r^i_l$ for every $1 \leq i \leq k$
and $1 \leq l \leq r$,
(5) the letter $e_l^{i,j}$ for every $1 \leq i < j \leq k$ and
$1 \leq l \leq m$,
and (6) the letter $\gs_i$ for every $1 \leq i \leq n$.

The alphabet $\Sigma_p$ consists of (1)
the letter $\gs$, (2)
the letters $L^{i,j}_l$ and $R^{i,j}_l$ for every $1 \leq i <
j \leq k$ and $1 \leq l \leq m-1$, 
(3)
the letters $\textup{LL}^{i,j}_l$ and $\textup{RR}^{i,j}_l$ for every $1 \leq i,
j \leq k$ with $i \neq j$, and $1 \leq l \leq n-1$, 
(4) the letters $L^{i}_l$ and $R^i_l$ for every $1 \leq i \leq k$
and $1 \leq l \leq r$,
(5) the letter $E_{i,j}$ for every $1 \leq i < j \leq k$,
and (6) the letter $A_i$ for every $1 \leq i \leq n$.

For a symbol $l$ and $i \in \naturals$, we write
$\enu(l,i)$ to denote the text $l_1 \dotsb l_i$.

Furthermore, for a vertex $v \in V(G)$ and $i$ with $1 \leq i \leq
k$, 
we write $\edges(v,i)$ to
denote the text $\textbf{el}(E_i(v))$,
where $\textbf{el}(E')$ (for a set of edges $E'$)  is a list of all
the letters in $\Sigma_t$ that correspond to the edges in $E'$.

We first define the following preliminary text and pattern
strings. For $i$ and $j$ with $1 \leq i < j \leq k$, we denote by 
$\tij(i,j)$ the text $\enu(l^{i,j},m-1) \enu(e^{i,j},m)
\enu(r^{i,j},m-1)$. We define $t_1$ to be the text:
\begin{quote}
 \begin{center}
   $\gs \tij(1,2) \gs \dotsb \gs \tij(1,k)$

   $\gs \tij(2,3) \gs \dotsb \gs \tij(2,k)$
   
   $\dotsb$

   $\gs \tij(k-1,k) \gs$
 \end{center}
\end{quote}
For a vertex $v \in V_i$, and $j$ with $1 \leq j \leq k$,
we denote by $\tvj(v,j)$ the text
$\enu(l^{v,j},n-1)\edges(v,j) \enu(r^{v,j},n-1)$ if $j\neq
i$ and the empty text if $j=i$. Furthermore, we denote by
$\tv(v)$ the text $\tvj(v,1) \dotsb \tvj(v,k)$.
Let $t_2$ be the text:
\begin{quote}
 \begin{center}
   $\enu(l^1,r) \gs_1 \tv(v_1^1) \gs_1 \dotsb \gs_n \tv(v_n^1)
   \gs_n \enu(r^1,r)$
   
   $\dotsb$
   
   $\gs \enu(l^k,r) \gs_1 \tv(v_1^k) \gs_1 \dotsb \gs_n \tv(v_n^k) \gs_n \enu(r^k,r)$
 \end{center}
\end{quote}

For $i$ and $j$ with $1 \leq i < j \leq k$, we denote by 
$\pij(i,j)$ the pattern $\enu(L^{i,j},m-1) E_{i,j} \enu(R^{i,j},m-1)$.
Let $p_1$ be the pattern:
\begin{quote}
 \begin{center}
   
   $\gs \pij(1,2) \gs \dotsc \gs \pij(1,k)$
   
   $\gs \pij(2,3) \gs \dotsc \gs \pij(2,k)$
   
   $\dotsb$
   
   $\gs \pij(k-1,k) \gs$
 \end{center}
\end{quote}
For $i$, $j$ with $1 \leq i,j \leq k$, let $I(i,j)$ be 
the letter $E_{i,j}$ if $i < j$, the letter $E_{j,i}$ if $i>j$ and the empty string if
$i=j$. Furthermore, let $\pijs(i,j)$ be the pattern 
$\enu(\textup{LL}^{i,j},n-1) I(i,j) \enu(\textup{RR}^{i,j},n-1)$ if $i \neq j$ and the
empty pattern otherwise.
Let $p_2$ be the pattern:
\begin{quote}
 \begin{center}
   $\enu(L^1,r) A_1 \pijs(1,1) \dotsb \pijs(1,k) A_1 \enu(R^1,r)$
   
   $\dotsb$
   
   $\gs \enu(L^k,r) A_k \pijs(k,1) \dotsb \pijs(k,k) A_k \enu(R^k,r)$
 \end{center}
\end{quote}
We also define $t_0$ to be the text $\gs\gs$ and $p_0$ to be
the pattern $\gs\gs$.
Then, $t$ is the concatenation of $t_0$, $t_1$ and $t_2$
and $p$ is a concatenation of $p_0$, $p_1$ and $p_2$.

This completes the construction of $t$ and $p$. Clearly, $t$ and $p$
can be constructed from $G$ and $k$ in fpt-time (even polynomial
time). Furthermore, because $\# \Sigma_t=\binom{k}{2}+k+2$, 
$|\# \Sigma_p|=\binom{k}{2}+k+2$, condition (C1) is satisfied.
To show the remaining condition (C2) we need the following
intermediate lemmas and claims. 

\begin{LEM}\label{lem:GFM-GPM-occ-1}
 If $G$ has a $k$-clique then $p$ GF/GP-matches $t$ using a function
 $f$ with $\max_{p \in \Sigma_p}|f(p)|=2$.
\end{LEM}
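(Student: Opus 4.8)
The plan is to start from a $k$-clique of $G$, say on the vertices $v_{h_1}^1,\dots,v_{h_k}^k$ with connecting edges $e_{h_{i,j}}^{i,j}$ for $1\le i<j\le k$, and to build an explicit function $f$ together with an alignment of $p$ onto $t$ witnessing the match. On the ``semantic'' letters I would set $f(\gs)=\gs$, $f(E_{i,j})=e^{i,j}_{h_{i,j}}$ for all $1\le i<j\le k$, and $f(A_i)=\gs_{h_i}$ for all $1\le i\le k$; then $p_0=\gs\gs$ matches $t_0=\gs\gs$. Every remaining pattern letter lives in one of the enumeration blocks $\enu(L^{i,j},\cdot)$, $\enu(R^{i,j},\cdot)$, $\enu(\textup{LL}^{i,j},\cdot)$, $\enu(\textup{RR}^{i,j},\cdot)$, $\enu(L^i,\cdot)$, $\enu(R^i,\cdot)$, and I would map each such letter to a substring of length $1$ or $2$, which immediately gives $\max_{p\in\Sigma_p}|f(p)|=2$.

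Next I would fix the alignment block by block, exploiting that the $\gs$-separators occur in the same order and number in $p$ and in $t$. In $p_1$ the block $\pij(i,j)=\enu(L^{i,j},m-1)\,E_{i,j}\,\enu(R^{i,j},m-1)$ must cover $\tij(i,j)=\enu(l^{i,j},m-1)\,\enu(e^{i,j},m)\,\enu(r^{i,j},m-1)$; sending $E_{i,j}$ to $e^{i,j}_{h_{i,j}}$, I let the $m-1$ letters of $\enu(L^{i,j},m-1)$ cover the prefix $l^{i,j}_1\cdots l^{i,j}_{m-1}e^{i,j}_1\cdots e^{i,j}_{h_{i,j}-1}$ of length $m+h_{i,j}-2$, which is realizable by pieces of length $1$ or $2$ exactly because $1\le h_{i,j}\le m$, and symmetrically for $\enu(R^{i,j},m-1)$. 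In $p_2$ the block $\enu(L^i,r)\,A_i\,\pijs(i,1)\cdots\pijs(i,k)\,A_i\,\enu(R^i,r)$ is aligned so that the two copies of $A_i$ land on the two copies of $\gs_{h_i}$ bracketing $\tv(v_{h_i}^i)$ (this also makes the single value $f(A_i)=\gs_{h_i}$ consistent at both occurrences): the front padding $\enu(L^i,r)$ covers everything up to the first $\gs_{h_i}$, the middle covers $\tv(v_{h_i}^i)$, and $\enu(R^i,r)$ covers the rest. Inside the middle, $\pijs(i,j)=\enu(\textup{LL}^{i,j},n-1)\,I(i,j)\,\enu(\textup{RR}^{i,j},n-1)$ covers $\tvj(v_{h_i}^i,j)$, where $I(i,j)$ is sent to the letter of the clique edge between $V_i$ and $V_j$, which does occur in $\edges(v_{h_i}^i,j)$ since that edge is incident to $v_{h_i}^i$; the $\textup{LL}/\textup{RR}$ letters then absorb the $l^{v,j}/r^{v,j}$ dummies and the remaining edge letters, feasibly because $|E_j(v_{h_i}^i)|\le n$.

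The first quantitative thing to check is that each enumeration block of $c$ pattern letters covers a text stretch of length between $c$ and $2c$. For the $p_1$ and middle-$p_2$ blocks this is precisely the two index inequalities above. The genuinely numerical step is the outer padding: the text before the first $\gs_{h_i}$ (and after the second) must have length between $r$ and $2r$. Using $|\tv(v)|=2(k-1)(n-1)+\deg(v)$ together with $\sum_{v\in V_i}\deg(v)=(k-1)m$, the prefix/suffix length is at most $2n+2n(k-1)(n-1)+(k-1)m$, and the choice $r=2kn(n-1)+2n+(k-1)m-1$ is exactly what keeps this at most $2r$ (while it is trivially at least $r$). Hence $\enu(L^i,r)$ and $\enu(R^i,r)$ can always be stretched to fit. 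This bookkeeping is the main routine obstacle.

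Finally, for Max-GPM I must verify that $f$ is injective, and this is where the dummy alphabet earns its keep and where the real care lies. The padding images can be kept pairwise distinct because $l^{i,j},r^{i,j},l^{v,j},r^{v,j},l^i,r^i$ are indexed by the pair, vertex, or part hosting them, so substrings from different blocks begin with different letters and substrings within a block begin with distinct letters; moreover no padding image can equal a clique-edge letter, since the $L$-type pieces only see edge letters strictly before the chosen one and the $R$-type pieces only those strictly after. The delicate point is the injectivity of the semantic letters themselves: $\gs$ and the distinct edge letters $e^{i,j}_{h_{i,j}}$ cause no trouble, but the values $f(A_i)=\gs_{h_i}$ are pairwise distinct only if the selected clique vertices have pairwise distinct within-part indices, which is not automatic. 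Securing this—whether by treating the vertex separators as part-specific, by a relabelling/padding of the parts, or by arguing it may be assumed without loss of generality—is the step I expect to be the main obstacle for the injective (GPM) case, whereas the GFM case needs none of this extra work.
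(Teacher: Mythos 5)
Your construction is essentially the paper's own proof: the same assignments $f(\gs)=\gs$, $f(E_{i,j})=e^{i,j}_{h_{i,j}}$, $f(A_i)=\gs_{h_i}$, and the same device of stretching each enumeration letter over a text substring of length $1$ or $2$ so that the semantic letters land on their target positions. The paper just states the stretching explicitly (e.g., $|f(L^{i,j}_l)|=2$ exactly for $l\leq h_{i,j}-1$ and $|f(L^{i,j}_l)|=1$ otherwise), whereas you verify the feasibility inequalities, including the role of the constant $r$, which the paper hides behind ``it is now straightforward to check.'' For the GFM half, your argument is complete to at least the paper's level of detail.

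On the GPM half, the obstacle you single out is real, but be aware that the paper does not overcome it either: its proof makes exactly the assignment $f(A_i)=\gs_{h_i}$ and never addresses injectivity, so whenever two clique vertices share a within-part index ($h_i=h_j$ for $i\neq j$) the exhibited $f$ is not injective. This is a repairable defect of the paper's proof, not an idea you failed to reconstruct; your suggested repair of part-indexed separators (using letters $\gs_{i,j}$ in part $i$'s block of $t_2$) works, since $\#\Sigma_t$ only decreases, $|\Sigma_t|$ is not among the parameters of Theorem~\ref{the:GFM-occt-max}, and the reverse direction (Lemma~\ref{lem:GFM-occt-max-2}) is unaffected because these letters remain the only repeated ones inside a part's block. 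One warning, though: your other injectivity claim---that padding images from different blocks are distinct because they ``begin with different letters''---is too quick. Edge letters are shared between $t_1$ and $t_2$ (each $e^{i,j}_l$ occurs once in $t_1$ and up to twice in $t_2$), so a padding letter mapped in $t_1$ to a single non-clique edge letter, or to two consecutive edge letters, can receive the same string as a padding letter in $t_2$; with the paper's explicit length assignment such coincidences can genuinely occur. Excluding them needs a further argument (e.g., choosing the length-$1$/length-$2$ tilings so that no edge letter is a singleton image and no pair of edge letters is paired identically in two blocks), so this point is an honest gap in both your write-up and in the paper's proof.
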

\begin{proof}
 Let $\{v_{h_1}^1, \dotsc,v_{h_k}^k\}$ be the vertices and 
 $\SB e^{i,j}_{h_{i,j}} \SM 1 \leq i < j \leq k \SE$ be the edges of
 a $k$-clique of $G$ with $1 \leq h_j \leq n$  and $1 \leq h_{i,j}
 \leq m$ for every $i$ and $j$ with $1 \leq i < j \leq k$. 

 We first give the GF/GP-matching function $f$ for the letters in $\Sigma_p$ that
 occur more than once in $p$ as follows:
 We set $f(\gs)=\gs$, 
 $f(E_{i,j})=e^{i,i}_{h_{i,j}}$, and
 $f(A_i)=\#_{h_i}$,
 for every $i$ and $j$ with $1 \leq i < j \leq k$. 
 Informally, we will map the remaining
 letters in $\Sigma_p$ to substrings of $t$ of length
 between $1$ and $2$ in such a way that the occurrences of the
 letters $\gs$, $E_{i,j}$, and $A_i$ are placed over the right
 positions in the text $t$.
 More formally, we define $f$ for the remaining letters in
 $\Sigma_p$ as follows:
 \begin{itemize}
 \item For every $1 \leq i < j \leq k$, we define
   $f(L_l^{i,j})$ in such a
   way that $|f(L_l^{i,j})|=2$ for every $1 \leq l \leq h_{i,j}-1$
   and $|f(L_l^{i,j})|=1$ for every $h_{i,j}-1 < l \leq m-1$.
 \item For every $1 \leq i < j \leq k$, we define
   $f(R_l^{i,j})$ in such a
   way that $|f(R_l^{i,j})|=1$ for every $1 \leq l \leq h_{i,j}+1$
   and $|f(L_l^{i,j})|=2$ for every $h_{i,j}+1 < l \leq m-1$.
 \item For every $1 \leq i,j \leq k$ with $i \neq j$, we define
   $f(\textup{LL}^{i,j}_l)$ in such a way that
   $f(\textup{LL}^{i,j}_l)=2$ for every $1 \leq l \leq s-1$, where $s$
   is the position of $e^{i,j}_{h_{i,j}}$ in $t(v_{h_i},j)$ and
   $f(\textup{LL}^{i,j}_l)=1$ for every $s < l \leq n-1$.
 \item For every $1 \leq i,j \leq k$ with $i \neq j$, we define
   $f(\textup{RR}^{i,j}_l)$ in such a way that
   $f(\textup{RR}^{i,j}_l)=1$ for every $1 \leq l \leq s+1$, where $s$
   is the position of $e^{i,j}_{h_{i,j}}$ in $t(v_{h_i},j)$ and
   $f(\textup{RR}^{i,j}_l)=1$ for every $s+1 < l \leq n-1$.
 \item For every $1 \leq i \leq k$, we define
   $f(L_l^{i})$ in such a
   way that $|f(L_l^{i})|=2$ for every $1 \leq l \leq s-1$, where 
   $s$ is position of $\gs_{h_i}$ in the substring
   $\gs_1 \tv(v_1^i) \gs_1 \dotsb \gs_n \tv(v_n^i)
   \gs_n$ of $t$
   and $|f(L_l^{i,j})|=1$ for every $s < l \leq r$.
 \item For every $1 \leq i \leq k$, we define
   $f(R_l^{i})$ in such a
   way that $|f(R_l^{i})|=1$ for every $1 \leq l \leq s+1$, where 
   $s$ is position of $\gs_{h_i}$ in the substring
   $\gs_1 \tv(v_1^i) \gs_1 \dotsb \gs_n \tv(v_n^i)
   \gs_n$ of $t$
   and $|f(R_l^{i,j})|=2$ for every $s+1 < l \leq r$.
 \end{itemize}
 It is now straightforward to check that $f$ GF/GP-matches $p$ to $t$
 and $\max_{p \in \Sigma_p}|f(p)|=2$, as required.
\end{proof}

To prove the reverse direction we need the following
intermediate claim.
\begin{CLM}\label{clm:GFM-occt-max-1}
 For any function $f$ that $k'$-GF/GP-matches $p$ to $t$ it holds that:
 $f(\es)=\es$, $f(\gs)=\gs$, and $f(Q)=+$.
 Moreover, all wildcards have to be placed on all the $k'$
 occurrences of $Q$ in $p$.
\end{CLM}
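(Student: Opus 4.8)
The plan is to mirror the proof of Claim~\ref{clm:GFM-mobile-2-hard-1}, whose forcing prefix plays exactly the role needed here. The whole argument rests on two structural facts about $t$, which I would record first: the symbol $+$ occurs only inside the prefix $t_0$, and outside $t_0$ the text $t$ is square-free. Granting these, I would derive the four assertions in the following dependency order: first $f(Q)=+$, then the placement of all wildcards on the $k'$ occurrences of $Q$ lying outside $p_0$, and finally the two separator identities $f(\es)=\es$ and $f(\gs)=\gs$.

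First I would establish $f(Q)=+$, the heart of the claim. Because $p_0$ contains strictly more than $2k'$ occurrences of $Q$ while at most $k'$ pattern letters may be replaced by wildcards, the pigeonhole principle forces at least two \emph{consecutive} occurrences of $Q$ inside $p_0$ to survive un-replaced. By square-freeness of $t$ outside $t_0$, any two consecutive non-wildcard pattern letters must be matched within $t_0$ (otherwise $f$ would produce a square in the square-free part of $t$), so these two $Q$'s, and hence $Q$ itself, are mapped into $t_0$; since all occurrences of $Q$ sit at the end of $p_0$ and $t_0$ ends in a run of $+$'s, the only admissible image is $f(Q)=+$.

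From $f(Q)=+$ I would read off both the wildcard placement and the separator identities. Every occurrence of $Q$ outside $p_0$ must be matched to a substring of the concatenation of $t_1$ and $t_2$; but $+$ does not occur there, so each such occurrence cannot be matched by $f$ directly and must be replaced by a wildcard. There are exactly $k'$ such occurrences, and as at most $k'$ wildcards are available these are \emph{precisely} the wildcards used; in particular no letter of $p_0$ is turned into a wildcard. Hence the separators $\es$ and $\gs$ occurring in $p_0$ are matched by $f$ itself, and the structure of $t_0$ leaves them no image other than $\es$ and $\gs$, giving $f(\es)=\es$ and $f(\gs)=\gs$.

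The step I expect to be the main obstacle is the structural premise rather than the deductions resting on it. I must verify that outside the prefix $t_0$ the text $t$ contains no square and no symbol $+$, so that the ``two consecutive survivors map into $t_0$'' step is sound; this requires inspecting the edge-blocks $\tij(i,j)$ of $t_1$, the vertex-blocks $\tv(\cdot)$ of $t_2$, and the separators $\gs$ and $\gs_i$ to rule out accidental repetitions, exactly as in the square-free analysis underlying Claim~\ref{clm:GFM-mobile-2-hard-1}. I would also double-check the count of $Q$-occurrences in $p_0$ against the wildcard budget so that the pigeonhole step really leaves two \emph{adjacent} surviving occurrences.
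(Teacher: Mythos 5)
Your proposal is correct and takes essentially the same route as the paper's own proof of this claim: square-freeness of the concatenation $t_1t_2$ plus a pigeonhole on the $2(k'+1)$ trailing occurrences of $Q$ in $p_0$ yields two adjacent surviving occurrences and hence $f(Q)=+$, the absence of $+$ in $t_1t_2$ then forces the $k'$ available wildcards onto exactly the remaining occurrences of $Q$, and the separator identities follow because no letter of $p_0$ is replaced. Be aware, though, that both your argument and the paper's presuppose a forcing prefix of the form $t_0=\gs\rep(+,r)$ and $p_0=\gs\rep(Q,r)$ carried over from Theorems~\ref{the:GFM-questionmarksize-hard} and~\ref{the:GFM-mobile-2} (with $Q$ playing the role of $D$), whereas the construction actually given for Theorem~\ref{the:GFM-occt-max} sets $t_0=p_0=\gs\gs$ and its alphabets contain no letters $Q$, $+$, or $\es$ and admit no wildcards at all, so the claim as stated is vestigial there --- Lemma~\ref{lem:GFM-occt-max-2} never uses it and instead re-derives $f(\gs)=\gs$ directly from the $\gs\gs$ prefix.
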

\begin{proof}
 We first show that $f(Q)=+$. Observe that the concatenation of the
 strings $t_1$ and $t_2$ is
 square-free (recall the definition of square-free from
 Section~\ref{sec:prel}). It follows that every two consecutive
 occurrences of pattern letters in $p_0$ have to be mapped to a
 substring of $t_0$. Because there are $2(k'+1)$ occurrences of $Q$
 in $p_0$ it follows that at least two consecutive occurrences of
 $Q$ in $p_0$ are not replaced with wildcards and hence $Q$ has to be
 mapped to a substring of $t_0$. Furthermore, since all occurrences
 of $Q$ are at the end of $p_0$, we obtain that $Q$ has to be mapped
 to $+$, as required. Because all occurrences of $Q$ in $p_1$ and $p_2$ have to
 be mapped to substrings of the concatenation of $t_1$ and $t_2$
 but neither $t_1$ nor $t_2$ contain the
 letter $+$, it follows that all the $k'$ occurrences of $Q$ in
 $p_1$ and $p_2$
 have to be replaced by wildcards. Since we are only allowed to use
 at most $k'$ wildcards, this shows the second statement of the
 claim. Since no wildcards are used to replace letters in $p_0$ it
 now also follows that $f(\es)=\es$ and $f(\gs)=\gs$.
\end{proof}
\begin{LEM}\label{lem:GFM-occt-max-2}
 If $p$ GF/GP-matches $t$ using a a function $f$ with $\max_{p \in
   \Sigma_p}|f(p)|=2$, then $G$ has
 a $k$-clique.
\end{LEM}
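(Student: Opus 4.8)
The plan is to take an arbitrary matching function $f$ with $\max_{p\in\Sigma_p}|f(p)|\le 2$ and show that, block by block, it must \emph{select} one edge for every pair of colour classes and one vertex for every colour class, and that these selections are forced to be mutually consistent, so that the selected vertices form a $k$-clique. Since every GP-match is in particular a GF-match, it suffices to argue for arbitrary (not necessarily injective) $f$; injectivity is only needed in the forward direction (Lemma~\ref{lem:GFM-GPM-occ-1}).

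Playing the role of Claim~\ref{clm:GFM-occt-max-1}, I would first pin down the separator. The two leading occurrences of $\gs$ in $p$ force $f(\gs)f(\gs)$ to be a prefix of $t=\gs\gs\gs l^{1,2}_1\cdots$; a length-$2$ image is impossible because the length-$4$ prefix $\gs\gs\gs l^{1,2}_1$ is not a square, whereas a length-$1$ image gives $f(\gs)f(\gs)=\gs\gs$, i.e.\ $f(\gs)=\gs$. Since $p$ and $t$ contain the same number $\binom{k}{2}+k+2$ of occurrences of $\gs$ and the $\gs$-letters of $p$ already account for all of them, no other pattern letter can have $\gs$ in its image; hence the $i$-th occurrence of $\gs$ in $p$ maps to the $i$-th in $t$, and each maximal $\gs$-free block of $p$ maps onto the corresponding block of $t$. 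In particular $\pij(i,j)$ maps onto $\tij(i,j)$ and the $i$-th block of $p_2$ maps onto the $i$-th block of $t_2$.

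Next I would read off the two selections. Inside $\tij(i,j)=\enu(l^{i,j},m-1)\enu(e^{i,j},m)\enu(r^{i,j},m-1)$ the flanking letters $L^{i,j}_l,R^{i,j}_l$ occur nowhere else and merely absorb slack, so $E_{i,j}$ must land on the edge-letter segment and $f(E_{i,j})$ names an edge between $V_i$ and $V_j$. Inside the $i$-th block of $t_2$ the only letters occurring more than once are the separators $\gs_s$, each bracketing $\tv(v_s^i)$; since $A_i$ occurs twice with a common image, that image is $\gs_{s_i}$ for a single $s_i$, which selects the vertex $v_{s_i}^i$ and forces the enclosed part $\pijs(i,1)\dotsb\pijs(i,k)$ to map onto $\tv(v_{s_i}^i)$. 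Consequently, for each $j\neq i$ the sub-block $\pijs(i,j)$, which contains a single occurrence of the pattern letter $I(i,j)$, maps into $\tvj(v_{s_i}^i,j)$, whose edge-letters are exactly those of $\edges(v_{s_i}^i,j)$; therefore $f(I(i,j))$ is an edge incident to $v_{s_i}^i$ with other endpoint in $V_j$.

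Finally I would combine the readings. Fix $i<j$: the letter $E_{i,j}$ occurs as $I(i,j)$ in the $i$-th block of $p_2$ and as $I(j,i)$ in the $j$-th block, with one common image $f(E_{i,j})$; by the previous paragraph this image names an edge incident to $v_{s_i}^i$ and an edge incident to $v_{s_j}^j$, and since the edge-letters $e^{i,j}_l$ are globally unique, the edge named by $f(E_{i,j})$ is incident to both, so $\{v_{s_i}^i,v_{s_j}^j\}\in E$. Ranging over all $i<j$ yields the $k$-clique $\{v_{s_1}^1,\dotsc,v_{s_k}^k\}$. The main obstacle is the bookkeeping hidden in ``merely absorb slack'': because images may have length $2$, I must verify that the padding counts ($m-1$ around each $E_{i,j}$, $n-1$ around the occurrence of $I(i,j)$ inside each $\pijs(i,j)$, and the large $r=2kn(n-1)+2n+(k-1)m-1$ copies of $L^i,R^i$) are exactly large enough that the dummy letters can shift $E_{i,j}$ and $A_i$ onto any admissible edge-letter and separator, yet cannot let $E_{i,j}$ straddle two edge-letters or let $A_i$ cover anything but a single $\gs_s$. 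Making this length accounting precise, so that each selection is well defined and consistent across the two blocks in which $E_{i,j}$ appears, is the crux of the argument.
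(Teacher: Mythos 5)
Your proposal is correct and follows essentially the same route as the paper's proof: pin down $f(\gs)=\gs$ from the $\gs\gs$ prefix, align the occurrences of $\gs$ by counting, read off one edge per pair from each $\pij(i,j)$-block and one vertex per class from $A_i$ (the only repeated letters in its block being the $\gs_j$), and combine the two readings into a clique. The only difference is cosmetic: where you call for careful length accounting to rule out $E_{i,j}$ straddling two edge letters, the paper dispatches this with a one-line w.l.o.g.\ remark that mapping $E_{i,j}$ to more than one edge only makes its later occurrences harder to match.
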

\begin{proof}
 Let $f$ be the function that GF/GP-matches $p$ to $t$ with 
 $\max_{p \in \Sigma_p}|f(p)|=2$. We first
 show that $f(\gs)=\gs$. 
 Suppose for a contradiction that $f(\gs)\neq \gs$
 Because $t$ and $p$ start with $\gs\gs$ it follows that $f(\gs)$
 is a string that starts with $\gs\gs$. However, $t$ does not
 contain any other occurrence of the string $\gs\gs$ and hence the remaining
 occurrences of $\gs$ in $p$ cannot be matched by $f$. 

 Because $t$ and $p$ have the same number of occurrences of $\gs$
 , it follows that the $i$-th occurrences of $\gs$ in $p$ 
 has to be mapped to the $i$-th occurrence of $\gs$
 in $t$. We obtain that:
 \begin{enumerate}
 \item[(1)] for every $i$, $j$ with $1 \leq i < j \leq k$, the substring
   $\pij(i,j)$ of $p$ has to be mapped to the substring
   $\tij(i,j)$ of $t$.
 \item[(2)] for every $i$ with $1 \leq i \leq k$,  the substring
   $\enu(L^i,r) A_i \pijs(i,1) \dotsb \pijs(i,k) A_i \enu(R^i,r)$
   of $p$ has to be mapped to the substring
   $\enu(l^i,r) \gs_1 \tv(v_1^i) \gs_1 \dotsb \gs_n \tv(v_n^i)
   \gs_n \enu(r^i,r)$ of $t$.
 \end{enumerate}
 Because for every $i$ with $1 \leq i \leq k$ the letters $\gs_j$
 are the only letters that occur more than once in the substring 
 $\enu(l^i,r) \gs_1 \tv(v_1^i) \gs_1 \dotsb \gs_n \tv(v_n^i)
 \gs_n \enu(r^i,r)$
 of $t$, we obtain that $A_i$ has to be mapped to $\gs_j$ for some
 $j$ with $1 \leq j \leq n$. Consequently:
 \begin{enumerate}
 \item[(3)] for every $i$ with $1 \leq i \leq k$,  the substring
   $A_i \pijs(i,1) \dotsb \pijs(i,k) A_i$
   of $p$
   has to be mapped a substring
   $\gs_j \tv(v_j^i) \gs_j$ of $t$ for some $j$ with $1 \leq j
   \leq n$.
 \end{enumerate}
 It follows from (1) that for every $i$, $j$ with $1 \leq i < j
 \leq k$, the function $f$ maps $E_{i,j}$ to edges between
 $V_i$ and $V_j$. W.l.o.g. we can assume that $E_{i,j}$ is mapped
 to exactly one such edge because mapping it to many edges makes it
 only harder to map the following occurrences of $E_{i,j}$ in $p$.
 Because of (3) it follows that for every $i$
 with $1 \leq i \leq k$, it holds that the edges mapped to any
 $E_{l,r}$ with $1 \leq l < r \leq k$ such that $l=i$ or $r=i$ have
 the same endpoint in $V_i$. Hence, the set of edges mapped to all
 the letters $E_{i,j}$ for $1 \leq i <j \leq k$ form a $k$-clique
 of $G$. 
\end{proof}
This concludes the proof of Theorem~\ref{the:GFM-occt-max}.

\begin{THE}\label{the:GFM-questionmarksize-hard}
  Max-GFM and Max-GPM are \W{1}\hy hard parameterized by $\#
  \Sigma_p$, $|\Sigma_p|$, $\max_i |f(p_i)|$, and $\# ?$ (even if $\max_i |f(p_i)|=1$).
\end{THE}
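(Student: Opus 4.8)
The plan is to give a parameterized reduction from \pn{Multicolored Clique}, in the same spirit as the construction behind Theorem~\ref{the:GFM-mobile-2}, but now exploiting that $\max|f(?)|$ is \emph{not} among the parameters. Concretely, the reduction will use only $O(k^2)$ wildcards, yet each of them will be allowed to cover an arbitrarily long (in fact $\Theta(nm)$) block of the text; this is exactly the ingredient that Theorem~\ref{the:fpt-max-gfm-wild} forbids, so the construction must make essential use of long wildcard images while keeping $\#\Sigma_p$, $|\Sigma_p|$, $\max_i|f(p_i)|$ and $\# ?$ bounded in $k$.

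First I would build the text $t$, for a $k$-partite graph $G$ with parts $V_1,\dots,V_k$ (with $|V_i|=n$ and $|E_{i,j}|=m$), from two lists delimited by a separator $\gs$: a \emph{by-pair} list that, for each $i<j$, enumerates the edge-letters $a_{e^{i,j}_1}\dots a_{e^{i,j}_m}$, and a \emph{by-vertex} list that, for each part $i$ and each $v\in V_i$, lists the edges incident to $v$ grouped by the opposite part and fenced by dedicated vertex-separators $\gs_1,\dots,\gs_n$. The pattern $p$ would use a single selector letter $E_{i,j}$ per pair and a single selector letter $A_i$ per part; I would arrange that $f(E_{i,j})$ is one edge-letter and $f(A_i)$ is one vertex-separator, so that indeed $\max_i|f(p_i)|=1$. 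Between and around the selectors I would place a number of wildcards that is constant in $n$ and $m$, whose only job is to swallow the prefixes and suffixes of the blocks lying before and after the chosen edge (resp.\ vertex). To force $f(\gs)=\gs$ and to force all wildcards onto these selector positions, I would prepend the same anchoring gadget as in Theorem~\ref{the:GFM-mobile-2} (a run of filler letters that can only be matched inside a short anchor prefix of $t$), which also guarantees that no separator of $p$ is itself replaced by a wildcard. The total wildcard budget is $\# ?=O(k^2)$.

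The two directions then run as expected. For the forward direction, from a $k$-clique with vertices $v^i_{h_i}$ and edges $e^{i,j}_{h_{i,j}}$ I set $f(E_{i,j})=a_{e^{i,j}_{h_{i,j}}}$ and $f(A_i)=\gs_{h_i}$, and choose the wildcard image lengths so that each occurrence of $E_{i,j}$ lands on its chosen edge both in the by-pair block and in the corresponding copies inside the two incident by-vertex blocks, and each $A_i$ lands on the separator of its chosen vertex; everything else is absorbed by the (long) wildcards. For the reverse direction, once $f(\gs)=\gs$ is forced and no $\gs$ is a wildcard, the equal number of separators in $t$ and $p$ makes the $i$-th separator of $p$ align with the $i$-th of $t$, pinning each selector block of $p$ to its intended block of $t$. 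Since $E_{i,j}$ has a single fixed image and occurs in the pair-$(i,j)$ selector \emph{and} inside the vertex-gadgets of both $V_i$ and $V_j$, that one edge-letter must simultaneously be an edge of $E_{i,j}$ and be incident to the vertices chosen by $A_i$ and $A_j$; ranging over all pairs yields a $k$-clique. For Max-GPM I would pad all blocks with distinct dummy letters, exactly as in Theorem~\ref{the:GFM-mobile-2}, so that distinct pattern letters and distinct wildcards receive distinct images.

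The remaining bookkeeping is routine: $|\Sigma_p|=1+\binom{k}{2}+k$, each of $\gs$, $E_{i,j}$, $A_i$ and the filler letter occurs $O(k^2)$ times so $\#\Sigma_p=O(k^2)$, $\max_i|f(p_i)|=1$, and $\# ?=O(k^2)$, all bounded by functions of $k$, whereas $\max|f(?)|=\Theta(nm)$ is unbounded. I expect the main obstacle to be the rigidity of the reverse direction: because a wildcard may now be arbitrarily long, I must rule out that it ``swallows'' a separator or otherwise shifts the $\gs$-alignment, and this is exactly what forcing $f(\gs)=\gs$ (via the anchor) together with the separator-counting argument achieves. I would finally remark that, since the parameters listed here form a subset of those in Theorem~\ref{the:GFM-mobile-2} and that reduction already certifies $\max_i|f(p_i)|=1$, the statement in fact follows immediately from Theorem~\ref{the:GFM-mobile-2}; the explicit construction above is worth giving only because it isolates the role of the unbounded parameter $\max|f(?)|$.
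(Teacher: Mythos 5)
Your proposal is correct, but it proves the theorem by a genuinely different route than the paper. The paper's own proof of Theorem~\ref{the:GFM-questionmarksize-hard} is a direct, self-contained reduction that is considerably lighter than what you describe: its pattern alphabet has one \emph{vertex}-selector letter $V_i$ per part (plus four punctuation letters, so $|\Sigma_p|=k+4$), the text consists only of the by-pair edge list $\gs\, l_{i,j} \es \vertt(e^{i,j}_1)\es\dotsb\es\vertt(e^{i,j}_m)\es r_{i,j}\,\gs$, and each pattern block has the form $\gs D \es V_i \ies V_j \es D \gs$, with the $k'=2\binom{k}{2}$ occurrences of the filler letter $D$ forced onto wildcards by the same square-free anchor you invoke. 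Crucially, no by-vertex gadgets, no part-selectors $A_i$, and no dummy letters are needed: cross-pair consistency comes for free because the \emph{same} letter $V_i$ occurs in every block involving part $i$, so once $\es V_i \ies V_j \es$ is pinned onto $\es\vertt(e)\es$ for some $e\in E_{i,j}$, the selected endpoints automatically agree across all pairs. Your construction instead transplants the heavier machinery of Theorem~\ref{the:GFM-mobile-2} (edge-selectors $E_{i,j}$, part-selectors $A_i$, by-vertex blocks, dummies); it works, but in the reverse direction you still owe the argument that $A_i$, occurring twice with a length-one image, must map to a letter that repeats within its block, hence to some $\gs_j$ --- you leave this implicit. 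On the other hand, your closing observation is the cleanest argument of all, and the paper never makes it explicit: the parameter set here is a subset of that of Theorem~\ref{the:GFM-mobile-2} (drop $\#\Sigma_t$), hardness under a superset of parameters implies hardness under any subset (exactly the monotonicity the paper's table-verification program relies on), and the mobile-2 reduction already certifies $\max_i|f(p_i)|=1$; so the statement is formally subsumed by Theorem~\ref{the:GFM-mobile-2}. What the paper's standalone proof buys in exchange is a much smaller pattern alphabet and independence from that (notationally rougher) construction.
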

We will show the above theorem by a  parameterized reduction from \textsc{Multicolored
  Clique}. Let $G = (V, E)$ be a $k$-partite graph with partition
$V_1,\dots,V_k$ of $V$. Let $E_{i,j}=\SB \{u,v\} \in E \SM  u \in
V_i \textup{ and } v \in V_j \SE$ for every $i$ and $j$ with $1 \leq
i < j \leq k$. 
As we stated in the preliminaries we can assume that $|V_i|=n$ and
$|E_{i,j}|=m$ for every $i$ and $j$ with $1 \leq i < j \leq k$.

Let $V_i=\{v_1^i,\dotsc,v_n^i\}$,
$E_{i,j}=\{e_1^{i,j},\dotsc,e_m^{i,j}\}$, and $k'=2\binom{k}{2}$. 
We construct a text $t$ over alphabet $\Sigma_t$ and a pattern $p$
over alphabet $\Sigma_p$ from $G$ and $k$ in polynomial time such that:
\begin{enumerate}
\item[(C1)] the parameters $\# \Sigma_p$, $|\Sigma_p|$, and $\# ?$ can be
  bounded as a function of $k$.
\item[(C2)] $p$ $k'$-GF/GP-matches $t$ using a function $f$ with
  $\max_{p \in \Sigma_p}|f(p)|=1$ if and only if $G$ has a $k$-clique.
\end{enumerate}
We set
$\Sigma_t=\{\es,\ies,\gs,+\} \cup \SB l_{i,j},r_{i,j} \SM 1 \leq i < j
\leq k \SE \cup \SB v_i^j \SM 1 \leq i \leq n
\textup{ and }1 \leq j \leq k \SE$ and 
$\Sigma_p=\{\es,\ies,\gs, D\} \cup \SB V_i \SM 1 \leq i \leq k \SE$.

For an edge $e \in E$ between $v_l^i$ and $v_s^j$ where $1 \leq i
< j \leq k$ and $1 \leq l,s \leq n$, we write $\vertt(e)$ to
denote the text $v_l^i \ies v_s^j$. For $l \in \Sigma_p \cup \Sigma_t$ and $i \in
\naturals$ we write $\rep(l,i)$ to denote the text consisting of
repeating the letter $l$ exactly $i$ times.
We first define a preliminary text $t'$ as follows.
\begin{quote}
  \begin{center}
    $\gs l_{1,2}\es \vertt(e_1^{1,2}) \es \dotsb \es \vertt(e_m^{1,2})
    \es r_{1,2} \gs \dotsb \gs l_{1,k} \es \vertt(e_1^{1,k}) \es \dotsb \es
    \vertt(e_m^{1,k}) \es r_{1,k}$
    
    $\gs l_{2,3} \es \vertt(e_1^{2,3}) \es \dotsb \es \vertt(e_m^{2,3})
    \es r_{2,3} \gs \dotsb \gs l_{2,k}\es \vertt(e_1^{2,k}) \es \dotsb \es
    \vertt(e_m^{2,k}) \es r_{2,k}$
    
    $\dotsb$
    
    $\gs l_{k-1,k} \es \vertt(e_1^{k-1,k}) \es \dotsb \es
    \vertt(e_m^{k-1,k}) \es r_{k-1,k}\gs$
  \end{center}
\end{quote}
We also define a preliminary pattern $p'$ as follows.
\begin{quote}
  \begin{center}
    
    $\gs D \es V_1 \ies V_2 \es D \gs \dotsc \gs D \es V_1 \ies V_k
    \es D$
    
    $\gs D \es V_2 \ies V_3 \es D \gs \dotsc \gs D \es V_2 \ies V_k
    \es D$
    
    $\dotsb$
    
    $\gs D \es V_{k-1} \ies V_k \es D \gs$
  \end{center}
\end{quote}
Let $r=2(k'+1)$. Then $t$ is obtained by preceding $t'$ with the
text $t''$ defined as follows.
\begin{quote}
  \begin{center}
    $\gs\es\ies\rep(+,r)$
  \end{center}
\end{quote}
Similarly, $p$ is obtained by preceding $p'$ with the
text $p''$ defined as follows.
\begin{quote}
  \begin{center}
    $\gs\es\ies\rep(D,r)$
  \end{center}
\end{quote}
This completes the construction of $t$ and $p$. Clearly, $t$ and $p$
can be constructed from $G$ and $k$ in fpt-time (even polynomial
time). Furthermore, because $\# \Sigma_p=r+k'=2(k'+1)+k'=3k'+1$,
$|\Sigma_p|=k+4$, and $\# ?=k'$, condition (C1) above is satisfied.
To show the remaining condition (C2),
we need the following intermediate lemmas.
\begin{LEM}\label{lem:GFM-questionmarksize-hard-1}
  If $G$ has a $k$-clique, then $p$ $k'$-GF/GP-matches to $t$ using
  a function $f$ with $\max_{p \in \Sigma_p}|f(p)|=1$.
\end{LEM}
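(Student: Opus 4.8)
The plan is to exhibit an explicit matching that mirrors the intended semantics of the gadget: the non-wildcard letters are matched rigidly by a length-one function $f$, while the $k'$ occurrences of $D$ inside $p'$ become wildcards that ``swallow'' the portions of each edge-block lying on either side of the chosen edge. Concretely, given a clique on $\{v_{h_1}^1,\dots,v_{h_k}^k\}$ realised by the edges $e_{h_{i,j}}^{i,j}$, I would define $f(\gs)=\gs$, $f(\es)=\es$, $f(\ies)=\ies$, $f(D)=+$, and $f(V_i)=v_{h_i}^i$ for every $i$. All images have length one, so $\max_{p\in\Sigma_p}|f(p)|=1$ as required, and the five kinds of images are pairwise distinct, so the restriction of $f$ to $\Sigma_p$ is injective (which we will need for Max-GPM).

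For the prefix $p''=\gs\es\ies\rep(D,r)$ I would match $\gs\es\ies$ letter-by-letter and send each of the $r$ unchanged copies of $D$ to one of the $r$ copies of $+$ in $t''=\gs\es\ies\rep(+,r)$ via $f(D)=+$; no wildcards are spent here. I then place all $k'=2\binom{k}{2}$ wildcards on the occurrences of $D$ inside $p'$, two per edge-block, so the Hamming distance is exactly $k'$, as permitted.

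The heart of the argument is the block-by-block verification of $p'$ against $t'$. Fix a block $\gs D\,\es V_i\ies V_j\es D$ of $p'$ and its counterpart $\gs\, l_{i,j}\es\vertt(e_1^{i,j})\es\cdots\es\vertt(e_m^{i,j})\es r_{i,j}$ in $t'$ (the $\gs$-separators, never wildcarded, force this correspondence). Writing $h=h_{i,j}$, I would map the leading $D$-wildcard to the prefix $l_{i,j}\es\vertt(e_1^{i,j})\es\cdots\es\vertt(e_{h-1}^{i,j})$ (just $l_{i,j}$ when $h=1$) and the trailing $D$-wildcard to the suffix $\vertt(e_{h+1}^{i,j})\es\cdots\es\vertt(e_m^{i,j})\es r_{i,j}$ (just $r_{i,j}$ when $h=m$); the two flanking pattern letters $\es$ then land on the separators immediately surrounding $\vertt(e_h^{i,j})$, and $V_i\ies V_j$ must match $\vertt(e_h^{i,j})$. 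The crucial point is that the clique property guarantees the endpoints of $e_{h_{i,j}}^{i,j}$ are exactly $v_{h_i}^i$ and $v_{h_j}^j$, so $\vertt(e_h^{i,j})=v_{h_i}^i\ies v_{h_j}^j=f(V_i)\ies f(V_j)$, and this holds simultaneously in all $k-1$ blocks mentioning color $i$ because the same vertex $v_{h_i}^i$ is selected throughout.

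The step I expect to require the most care is confirming that this assignment is legitimate for Max-GPM, i.e.\ that no two wildcards, and no wildcard and no $f(c)$, receive the same string. Here the guard letters do the work: every leading wildcard of block $(i,j)$ begins with the unique letter $l_{i,j}$ and every trailing wildcard ends with the unique letter $r_{i,j}$, so the $k'$ wildcard images are pairwise distinct and, being either the singletons $l_{i,j},r_{i,j}$ or longer strings containing such a guard, none coincides with a length-one image of $f$ on $\Sigma_p$. This simultaneously shows that every wildcard is mapped to a \emph{non-empty} substring, so the construction applies to the original problems without recourse to the empty-string variant. Assembling the prefix match with the $\binom{k}{2}$ block matches yields $f(p')=t'$ on the wildcarded pattern, completing this direction.
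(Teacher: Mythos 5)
Your proof is correct and takes essentially the same route as the paper's: the same length-one function $f$ (with $f(\gs)=\gs$, $f(\es)=\es$, $f(\ies)=\ies$, $f(D)=+$, $f(V_i)=v_{h_i}^i$), the same placement of the $k'$ wildcards on the occurrences of $D$ in $p'$, and the same prefix/suffix images $l_{i,j}\es\vertt(e_1^{i,j})\dotsb$ and $\dotsb\vertt(e_m^{i,j})\es r_{i,j}$ for the two wildcards of each edge-block. The only difference is that you explicitly carry out the non-emptiness and injectivity verification via the guard letters $l_{i,j}$, $r_{i,j}$, which the paper states without proof as ``no two wildcards are mapped to the same substring \dots, as required.''
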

\begin{proof}
  Let $\{v_{h_1}^1, \dotsc,v_{h_k}^k\}$ be the vertices and 
  $\SB e^{i,j}_{h_{i,j}} \SM 1 \leq i < j \leq k \SE$ be the edges of
  a $k$-clique of $G$ with $1 \leq h_j \leq n$  and $1 \leq h_{i,j}
  \leq m$ for every $i$ and $j$ with $1 \leq i < j \leq k$. 

  We put $k'$ wildcards on the last $k'$ occurrences of $D$ in $p$.  
  Informally, these wildcards are mapped in such a way that for every
  $1 \leq i < j \leq k$
  the substring $\es V_i \ies V_j \es$ of the pattern $p$ is mapped 
  to the substring $\es \vertt(e_{h_{i,j}}^{i,j}) \es$ of the text $t$.
  More formally, for $i$ and $j$ with $1 \leq i < j \leq k$ let
  $q=(\sum_{o=1}^{o<i}(k-o))+j$. We
  map the wildcard on the $2(q-1)$-th occurrence of the letter $D$ in $p'$
  with the text $l_{i,j} \es \vertt(e^{i,j}_1) \es \dotsb \es
  \vertt(e^{i,j}_{h_{i,j}-1})$ 
  and similarly we map the wildcard on the $(2(q-1)+1)$-th
  occurrence of the letter $D$ in $p'$ with the text
  $ \vertt(e^{i,j}_{h_{i,j}+1}) \es \dotsb \es
  \vertt(e^{i,j}_{m})\es r_{i,j}$. Note that in this way every
  wildcard is mapped to a non-empty substring of $t$ and no two
  wildcards are mapped to the same substring of $t$, as required.

  We then define the $k'$-GF/GP-matching function $f$ as follows: 
  $f(\es)=\es$, $f(\ies)=\ies$, $f(\gs)=\gs$, 
  $f(V_i)=v_{h_i}^i$, 
  $f(D)=+$,
  for every $i$ and $h_i$ with $1 \leq i \leq k$ and $1 \leq h_i \leq
  n$. 
  It is straightforward to check that $f$ together with the mapping for the
  wildcards maps the pattern $p$ to the text $t$. 
\end{proof}
\begin{LEM}\label{clm:GFM-questionmarksize-hard-s21}
  Let $f$ be a function that $k'$-GF/GP-matches $p$ to $t$ with
  $\max_{p \in \Sigma_p}|f(p)|=1$, then:
  $f(\es)=\es$, $f(\ies)=\ies$, $f(\gs)=\gs$, and $f(D)=+$.
  Moreover, all wildcards have to be placed on all the $k'$
  occurrences of $D$ in $p'$.
\end{LEM}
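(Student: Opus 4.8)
The plan is to mirror the structure of Claim~\ref{clm:GFM-mobile-2-hard-1}, adapting it to the prefix gadgets $p''=\gs\es\ies\rep(D,r)$ and $t''=\gs\es\ies\rep(+,r)$ with $r=2(k'+1)$. The argument proceeds in three stages: first pin down $f(D)=+$ by a square argument on $p''$; then use a crude length bound to push the image of $p'$ into the part of $t$ that contains no $+$, which forces every $D$ in $p'$ to be a wildcard; and finally read off the images of $\gs,\es,\ies$ directly from the first three letters of $t$.

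For the first stage I would note that $p''$ contains $r=2(k'+1)$ occurrences of $D$, while at most $k'$ wildcards are available, so at least $r-k'=k'+2$ of these occurrences are left unreplaced. Since at most $k'$ wildcards split the $r$ positions into at most $k'+1$ maximal runs, a pigeonhole on runs yields two \emph{consecutive} occurrences of $D$ in $p''$ that are both non-wildcards. Because $\max_{p\in\Sigma_p}|f(p)|=1$, these two letters map to $f(D)f(D)$, a length-two square occurring as a substring of $t$. I would then verify that the only squares in $t$ lie inside $\rep(+,r)$: every pair of adjacent letters in the prefix $\gs\es\ies$, at the junction between $t''$ and $t'$, and throughout $t'$ consists of distinct letters (in $t'$ the separators $\gs,\es,\ies$ are never adjacent to a copy of themselves, the dummy letters $l_{i,j},r_{i,j}$ differ from their neighbours, and inside each $\vertt(e)=v_l^i\ies v_s^j$ the two vertex letters are separated by $\ies$). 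Hence $f(D)f(D)=++$, i.e.\ $f(D)=+$.

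For the second stage I would use that every pattern letter maps to a non-empty string, so the image of $p''$, which has $3+r$ letters, occupies a prefix of $t$ of length at least $r+3$; consequently the image of $p'$ is contained in the suffix of $t$ starting at position $r+4$, which is exactly $t'$ and therefore contains no $+$. Since $f(D)=+$, no occurrence of $D$ in $p'$ can be left unreplaced, so all $k'$ of them must be wildcards. As only $k'$ wildcards are allowed, these account for every wildcard, so no other letter of $p$ is replaced; in particular neither any $D$ of $p''$ nor any of $\gs,\es,\ies$ is a wildcard. Finally, as $\gs,\es,\ies$ are the first three letters of $p$ and each maps to a single letter, they must map onto the first three letters $\gs,\es,\ies$ of $t$, giving $f(\gs)=\gs$, $f(\es)=\es$, and $f(\ies)=\ies$, which together with $f(D)=+$ completes the statement.

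The step I expect to demand the most care is the square verification that underlies $f(D)=+$: one must confirm that neither $t'$ nor the boundary it forms with $t''$ accidentally produces a repeated consecutive letter, since a single stray length-two square outside $\rep(+,r)$ would collapse the argument. By contrast, the length bound in the second stage is robust, since it relies only on each image being non-empty; thus any wildcards that might hypothetically sit inside $p''$ cannot shorten the prefix below $r+3$, and the positional conclusion holds regardless of where the wildcards are placed a priori.
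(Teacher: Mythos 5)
Your proof is correct and takes essentially the same route as the paper's: pigeonhole on the $r=2(k'+1)$ occurrences of $D$ in $p''$ combined with square-freeness of $t'$ (reduced, via $\max_{p\in\Sigma_p}|f(p)|=1$, to checking adjacent identical letters) to force $f(D)=+$, then the absence of $+$ in $t'$ to force all $k'$ wildcards onto the occurrences of $D$ in $p'$, and finally reading off $f(\gs)$, $f(\es)$, $f(\ies)$ from the unreplaced prefix. Your explicit non-emptiness/length bound showing that the image of $p'$ lies inside $t'$ merely makes precise a step the paper leaves implicit.
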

\begin{proof}
  We first show that $f(D)=+$. Observe that the string $t'$ is
  square-free (recall the definition of square-free from
  Section~\ref{sec:prel}). It follows that every two consecutive
  occurrences of pattern letters in $p''$ have to be mapped to a
  substring of $t''$. Because there are $2(k'+1)$ occurrences of $D$
  in $p''$ it follows that at least two consecutive occurrences of
  $D$ in $p''$ are not replaced with wildcards and hence $D$ has to be
  mapped to a substring of $t''$. Furthermore, since all occurrences
  of $D$ are at the end of $p''$, we obtain that $D$ has to be mapped
  to $+$, as required. Because all occurrences of $D$ in $p'$ have to
  be mapped to substrings of $t'$ and $t'$ does not contain the
  letter $+$, it follows that all the $k'$ occurrences of $D$ in $p'$
  have to be replaced by wildcards. Since we are only allowed to use
  at most $k'$ wildcards, this shows the second statement of the
  lemma. Since no wildcards are used to replace letters in $p''$ it
  now easily follows that $f(\es)=\es$, $f(\ies)=\ies$ and $f(\gs)=\gs$.
\end{proof}
\begin{LEM}\label{lem:GFM-questionmarksize-hard-2}
  If $p$ $k'$-GF/GP-matches to $t$ using
  a function $f$ with $\max_{p \in \Sigma_p}|f(p)|=1$, then
  $G$ has a $k$-clique.
\end{LEM}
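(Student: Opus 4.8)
The plan is to build directly on Lemma~\ref{clm:GFM-questionmarksize-hard-s21}, which already gives $f(\es)=\es$, $f(\ies)=\ies$, $f(\gs)=\gs$, $f(D)=+$, and tells us that the only letters of $p$ turned into wildcards are the $k'$ occurrences of $D$ inside $p'$. In particular, $\gs$ is mapped to the single letter $\gs$ and no occurrence of $\gs$ is a wildcard. So the first step is a counting argument: both $t$ and $p$ contain $\binom{k}{2}+2$ occurrences of $\gs$ (one in $t''$ resp.\ $p''$ and $\binom{k}{2}+1$ in $t'$ resp.\ $p'$), hence the $i$-th occurrence of $\gs$ in $p$ is forced to match the $i$-th occurrence of $\gs$ in $t$. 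This yields a block-to-block correspondence: for each pair $i<j$ the $\gs$-delimited pattern block $D\,\es\,V_i\,\ies\,V_j\,\es\,D$ must be matched to the text block $l_{i,j}\,\es\,\vertt(e_1^{i,j})\,\es\dotsb\es\,\vertt(e_m^{i,j})\,\es\,r_{i,j}$.

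Next I would analyse one such block. Here both $D$'s are wildcards mapped to non-empty substrings, while $f(\es)=\es$, $f(\ies)=\ies$, and $f(V_i),f(V_j)$ are single letters. Equating the image $f(D)\,\es\,f(V_i)\,\ies\,f(V_j)\,\es\,f(D)$ with the text block, the literal separators force the non-wildcard core $\es\,f(V_i)\,\ies\,f(V_j)\,\es$ to appear as a contiguous substring of the block. The key observation is that the only contiguous substrings of the form $\es\,X\,\ies\,Y\,\es$ with $X,Y$ single letters are exactly the windows $\es\,\vertt(e_l^{i,j})\,\es=\es\,v_a^i\,\ies\,v_b^j\,\es$, because in the text every $\ies$ sits between a single $V_i$-letter and a single $V_j$-letter and is flanked by $\es$'s. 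Consequently $f(V_i)=v_a^i$ and $f(V_j)=v_b^j$ for the endpoints of some edge $e_l^{i,j}\in E_{i,j}$, i.e.\ $\{f(V_i),f(V_j)\}\in E$.

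Finally, since $f(V_i)$ is one fixed letter of $\Sigma_t$ independent of the block, the endpoint it selects in $V_i$ is the same across all pairs involving $i$. Thus setting $v_{h_i}:=f(V_i)\in V_i$ we get $\{v_{h_i},v_{h_j}\}\in E_{i,j}$ for every $i<j$, so $\{v_{h_1},\dotsc,v_{h_k}\}$ is a $k$-clique of $G$, as required. (For Max-GPM the injectivity of $f$ is automatically consistent with this assignment and needs no extra work.)

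I expect the main obstacle to be the middle step: ruling out that the wildcards ``cheat'' by splitting a block so that the core $\es\,f(V_i)\,\ies\,f(V_j)\,\es$ lands somewhere other than an edge window. The clean way to close this is to exploit that $f(\es)$ and $f(\ies)$ are forced to be the single letters $\es$ and $\ies$, together with the fact that within a text block $\ies$ occurs \emph{only} inside the gadgets $\vertt(\cdot)$; matching the fixed separators position by position then leaves no freedom. The remaining parts, namely the $\gs$-count and the extraction of the clique, are routine.
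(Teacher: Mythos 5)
Your proof is correct and follows essentially the same route as the paper's: invoke Lemma~\ref{clm:GFM-questionmarksize-hard-s21}, use the equal number of occurrences of $\gs$ in $p$ and $t$ to force the $i$-th occurrence of $\gs$ in $p$ onto the $i$-th occurrence of $\gs$ in $t$, then read off an edge from each block (using $f(\es)=\es$, $f(\ies)=\ies$ and $\max_{p\in\Sigma_p}|f(p)|=1$), with the fact that $f(V_i)$ is a single fixed letter giving consistency across blocks and hence a $k$-clique. Your middle step only spells out in more detail what the paper compresses into a single sentence.
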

\begin{proof}\sloppypar
  Let $f$ be a function that $k'$-GF/GP-matches $p$ to
  $t$ such that $\max_{p \in \Sigma_p}|f(p)|=1$. 
  We claim that the set $\SB f(V_i) \SM 1 \leq i \leq k \SE$ is a
  $k$-clique of $G$. Because of
  Lemma~\ref{clm:GFM-questionmarksize-hard-s21}, we know that
  $f(\gs)=\gs$ and that no
  occurrence of $\gs$ in $p$ is replaced by a wildcard. Since the
  number of occurrences of $\gs$ in $t$ is equal to the number of
  occurrences of $\gs$ in $p$, we obtain that the $i$-th occurrence of
  $\gs$ in $p$ is mapped to the $i$-th occurrence of $\gs$ in
  $t$. Consequently, for every $i$ and $j$ with $1 \leq i < j \leq k$,
  we obtain that the substring $\es V_i\ies V_j\es$ is mapped to a substring
  of the string $l_{i,j} \es \vertt(e_1^{i,j}) \es \dotsb \es
  \vertt(e_m^{i,j}) \es r_{i,j}$ in $t$. Again, using
  Lemma~\ref{clm:GFM-questionmarksize-hard-s21}
  and the fact that $\max_{p \in \Sigma_p}|f(p)|=1$, 
  we obtain that both $V_i$ and $V_j$ are mapped
  to some letter $v^i_l$ and $v_s^j$ for some $l$ and $s$ with $1 \leq
  l,s\leq n$ such that $\{v^i_l,v^j_s\} \in E$. Hence, 
  $\SB f(V_i) \SM 1 \leq i \leq k \SE$ is a $k$-clique of $G$. 
\end{proof}
Because Condition (C2) is implied by Lemmas~\ref{lem:GFM-questionmarksize-hard-1}
and~\ref{lem:GFM-questionmarksize-hard-2}, this concludes
the proof of Theorem~\ref{the:GFM-questionmarksize-hard}.

\begin{THE}\label{the:GFM-mobile-1}
  (Max-)GFM and (Max-)GPM are \W{1}\hy hard parameterized by $\#\Sigma_t$,
  $\# \Sigma_p$, $|\Sigma_p|$, $\# ?$, and $\max |f(?)|$.
\end{THE}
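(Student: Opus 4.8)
The plan is to reduce from \textsc{Multicolored Clique}, reusing the ``two-grouping'' layout behind Theorems~\ref{the:GFM-mobile-2} and~\ref{the:GFM-occt-max} but moving the selection mechanism off the wildcards (and off the many short enumerator letters of Theorem~\ref{the:GFM-occt-max}) and onto a \emph{constant} number of ordinary pattern letters that are now allowed to map to \emph{long} substrings. Concretely I would build $t=t_0t_1t_2$ and $p=p_0p_1p_2$ over the same style of alphabets: in $t_1$ the edges of each colour pair $(i,j)$ appear as consecutive letters $a_{e^{i,j}_1}\dots a_{e^{i,j}_m}$ between brackets $l_{i,j},r_{i,j}$ and are separated from other pairs by $\gs$; in $t_2$, for every colour $i$ and vertex $v\in V_i$, the edges incident to $v$ are listed group-by-group by the colour of the other endpoint, with colour-specific per-vertex separators (so that the $f(A_i)$ will be distinct) and dummy letters $d^v_e$ placed between consecutive edge letters of a group (the dummies also give injectivity for GPM). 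The decisive feature inherited from the mobile construction is that consecutive edge letters are adjacent in $t_1$ but always separated by a dummy in a $t_2$-group. The pattern copies Theorem~\ref{the:GFM-mobile-2}: $p_1$ carries one block $\gs L_{i,j}E_{i,j}R_{i,j}$ per colour pair, and $p_2$ carries, for each colour $i$, a block $\gs L_i A_i\,\pih(i)\,A_i R_i$ whose interior $\pih(i)$ interleaves the selectors $I(i,j)=E_{i,j}$ with absorber letters $D_{i,j}$---these are exactly the letters $L_{i,j},R_{i,j},L_i,R_i,D_{i,j}$ whose mappings the proof of Theorem~\ref{the:GFM-mobile-2} already anticipates. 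The only genuine change from Theorem~\ref{the:GFM-occt-max} is that merely $O(k^2)$ absorber letters are used, each occurring boundedly often, so that $|\Sigma_p|$, $\#\Sigma_p$ and $\#\Sigma_t$ are all $O(k^2)$; and since no wildcards are needed, both $\#?$ and $\max|f(?)|$ can be taken equal to $0$, which bounds the last two parameters and yields the result simultaneously for the plain and the \textsc{Max} problems.

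For the forward direction, from a $k$-clique with vertices $v^1_{h_1},\dots,v^k_{h_k}$ and clique edges $e^{i,j}_{h_{i,j}}$, I would put $f(\gs)=\gs$ and $f(E_{i,j})=a_{e^{i,j}_{h_{i,j}}}$, map each $A_i$ to the separator flanking $v^i_{h_i}$, and let every absorber map to the uniquely determined (possibly long) stretch of text between two fixed anchors: $L_{i,j}$ swallows $l_{i,j}a_{e^{i,j}_1}\dots a_{e^{i,j}_{h_{i,j}-1}}$ and $R_{i,j}$ the complementary suffix up to $r_{i,j}$; inside $\pih(i)$ the letters $D_{i,j}$ swallow the dummy-interleaved edges of $v^i_{h_i}$ on either side of the chosen edge, while $L_i,R_i$ swallow the vertex blocks before and after $v^i_{h_i}$. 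Since every absorber occurs only once and the brackets, dummies and separators are pairwise distinct, $f$ is well defined and, for GPM, injective, and a direct check gives $f(p)=t$; this is the routine half.

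For the reverse direction I would first anchor the separator, forcing $f(\gs)=\gs$ from a $\gs\gs$-prefix arranged to be the unique occurrence of $\gs\gs$ in $t$, exactly as in Lemma~\ref{lem:GFM-occt-max-2}; since $t$ and $p$ carry equally many copies of $\gs$, the $i$-th $\gs$ of $p$ is pinned to the $i$-th $\gs$ of $t$, so each pattern block lands in the matching text block and no absorber can swallow a $\gs$. Inside the colour-$i$ block of $t_2$ the only repeated letters are the per-vertex separators, forcing $f(A_i)$ to be one of them and thereby pinning $\pih(i)$ into the groups of a single vertex $v^i_{s_i}$. The crux is then to prove that $f(E_{i,j})$ has length $1$: from $p_1$ its image is a dummy-free run of consecutive edge letters of $t_1$, whereas from its occurrences in $\pih(i)$ and $\pih(j)$ the very same image must sit inside the dummy-interleaved groups of $v^i_{s_i}$ and $v^j_{s_j}$ in $t_2$, and as consecutive edge letters of a $t_2$-group are always separated by a dummy, the image can only be a single letter $a_e$. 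That $e$ then joins $V_i$ and $V_j$, is incident to $v^i_{s_i}$, and is incident to $v^j_{s_j}$, so $e=\{v^i_{s_i},v^j_{s_j}\}\in E$ and the chosen vertices form a $k$-clique. I expect this single-edge forcing---the clash between the dummy-free grouping of $t_1$ and the dummy-interleaved grouping of $t_2$ under the single-valuedness of $f$---to be the main obstacle, with the bookkeeping that confines each long absorber to its own block and certifies GPM-injectivity a close second.
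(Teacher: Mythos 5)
Your proposal is correct and follows essentially the same route as the paper's own proof of Theorem~\ref{the:GFM-mobile-1}: the same reduction from \textsc{Multicolored Clique} with the $t_0t_1t_2$/$p_0p_1p_2$ layout, brackets $l_{i,j},r_{i,j}$, dummy-interleaved per-vertex edge groups, absorbers $L_{i,j},R_{i,j},L_i,R_i,D_{i,j}$, no wildcards, and the same anchoring chain ($f(\gs)=\gs$ via the $\gs\gs$ prefix, pinning blocks by counting occurrences of $\gs$, forcing $f(A_i)$ onto the repeated vertex separators). If anything, your write-up is tighter at two points where the paper is loose: you make the single-letter forcing of $f(E_{i,j})$ explicit via the adjacency clash between the dummy-free listing in $t_1$ and the dummy-interleaved groups in $t_2$ (the paper waves this through with a ``w.l.o.g.'' assumption), and your colour-specific vertex separators avoid a small GPM-injectivity glitch in the paper's forward direction, where $A_i$ and $A_{i'}$ are both mapped to the same shared separator whenever the chosen clique vertices happen to have the same index within their parts.
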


We will show the theorem by a  parameterized reduction from \textsc{Multicolored
  Clique}. Let $G = (V, E)$ be a $k$-partite graph with partition
$V_1,\dots,V_k$ of $V$. Let $E_{i,j}=\SB \{u,v\} \in E \SM  u \in
V_i \textup{ and } v \in V_j \SE$ for every $i$ and $j$ with $1 \leq
i < j \leq k$. 
Again, as we stated in the preliminaries we can assume that $|V_i|=n$ and
$|E_{i,j}|=m$ for every $i$ and $j$ with $1 \leq i < j \leq k$.

Let $V_i=\{v_1^i,\dotsc,v_n^i\}$ and $E_{i,j}=\{e_1^{i,j},\dotsc,e_m^{i,j}\}$.
We construct a text $t$ and a pattern $p$ from $G$ and $k$ such that
$p$ GF/GP-matches $t$ if and only if $G$ has a $k$-clique. 
The alphabet $\Sigma_t$ consists of:
\begin{itemize}
\item the letter $\gs$ (used as a separator);
\item one letter $a_e$ for every $e \in E$ (representing the
  edges of $G$);
\item one letter $\gs_i$ for every $i$ with $1 \leq i \leq n$ (used as
  special separators that group edges from the same vertex);
\item the letters $l_{i,j}$, $r_{i,j}$, $l_i$, $r_i$ for every $i$ and $j$ with
  $1 \leq i < j \leq k$ (used as dummy letters to ensure injectivity
  for GPM);
\item the letter $d_e^v$ and $d^v$ for every $e \in E$ and $v \in V(G)$ with
  $v \in e$ (used as dummy letters to ensure injectivity for GPM).
\end{itemize}
We set $\Sigma_p=\{\gs\} \cup \SB E_{i,j}, L_{i,j}, R_{i,j},L_i,R_i,A_i \SM 1 \leq i < j \leq k
\SE \cup \SB D_{i,j} \SM 1 \leq i \leq k \textup{ and } 1 \leq j \leq k+1 \SE$.

For a vertex $v \in V$ and $j$ with $1 \leq j
\leq k$ we denote by $E_j(v)$ the set of edges of $G$ that are
incident to $v$ and whose other endpoint is in $V_j$. 
Furthermore, for a vertex $v \in V(G)$, we write $\edges(v)$ to
denote the text $\textbf{el}(v,E_1(v))\dotsb \textbf{el}(v,E_k(v))d^v$,
where $\textbf{el}(v,E')$, for vertex $v$ and a set $E'$ of edges with $E'=\{e_1,\dotsc,e_l\}$, 
is the text $d_{e_1}^v a_{e_1} d_{e_2}^v a_{e_2} \dotsb d_{e_l}^v
a_{e_l}$.

We first define the following preliminary text and pattern
strings. Let $t_1$ be the text:
\begin{quote}
  \begin{center}
    $\gs l_{1,2} a_{e^{1,2}_1} \dotsb a_{e^{1,2}_m} r_{1,2} \gs \dotsb \gs l_{1,k} a_{e^{1,k}_1} \dotsb a_{e^{1,k}_m} r_{1,k}$
    
    $\gs l_{2,3} a_{e^{2,3}_1} \dotsb a_{e^{2,3}_m} r_{2,3} \gs \dotsb \gs l_{2,k} a_{e^{2,k}_1} \dotsb a_{e^{2,k}_m} r_{2,k}$
    
    $\dotsb$
    
    $\gs l_{k-1,k} a_{e^{k-1,k}_1} \dotsb a_{e^{k-1,k}_m} r_{k-1,k}$
  \end{center}
\end{quote}
Let $t_2$ be the text:
\begin{quote}
  \begin{center}
    $\gs l_1 \gs_1 \edges(v_1^1) \gs_1 \dotsb \gs_n \edges(v_n^1) \gs_n
    r_1$
    
    $\dotsb$
    
    $\gs l_k \gs_1 \edges(v_1^k) \gs_1 \dotsb \gs_n \edges(v_n^k)
    \gs_n r_k \gs$
  \end{center}
\end{quote}
Let $p_1$ be the pattern:
\begin{quote}
  \begin{center}
    
    $\gs L_{1,2} E_{1,2} R_{1,2} \gs \dotsc \gs L_{1,k} E_{1,k} R_{1,k}$
    
    $\gs L_{2,3} E_{2,3} R_{2,3} \gs \dotsc \gs L_{2,k} E_{2,k} R_{2,k}$
    
    $\dotsb$
    
    $\gs L_{k-1,k} E_{k-1,k} R_{k-1,k}$
  \end{center}
\end{quote}


For $i$, $j$ with $1 \leq i,j \leq k$, let $I(i,j)$ be 
the letter $E_{i,j}$ if $i < j$, the letter $E_{j,i}$ if $i>j$ and the empty string if
$i=j$. We define $\pih(1)$ to be the
pattern:
\begin{quote}
  \begin{center}
    $A_1 D_{1,2} I(1,2) D_{1,3} I(1,3) \dotsb \dotsb D_{1,k}I(1,k) D_{1,k+1} A_1$
  \end{center}
\end{quote}
we define $\pih(k)$ to be the
pattern:
\begin{quote}
  \begin{center}
    $A_k D_{k,1} I(k,1) D_{k,2} I(k,2) \dotsb \dotsb D_{k,k-1} I(k,k-1) D_{k,k+1} A_k$
  \end{center}
\end{quote}
and for every $i$ with $1 < i < k$, we define
$\pih(i)$ to be the pattern:
\begin{quote}
  \begin{center}
    $A_i D_{i,1} I(i,1) D_{i,2} I(i,2) \dotsb
    D_{i,i-1}I(i,i-1)$

    $D_{i,i+1}I(i,i+1) \dotsb D_{i,k} I(i,k) D_{i,k+1}
    A_i$
  \end{center}
\end{quote}
Then $p_2$ is the pattern:
\begin{quote}
  \begin{center}

    $\gs L_1 \pih(1) R_1 \gs \dotsb \gs L_k \pih(k) R_k \gs$
  \end{center}
\end{quote}
We also define $t_0$ to be the text $\gs\gs$ and $p_0$ to be
the pattern $\gs\gs$.
Then, $t$ is the concatenation of $t_0$, $t_1$ and $t_2$
and $p$ is a concatenation of $p_0$, $p_1$ and $p_2$.

This completes the construction of $t$ and $p$. Clearly, $t$ and $p$
can be constructed from $G$ and $k$ in fpt-time (even polynomial
time). Furthermore, $\# \Sigma_t=\binom{k}{2}+k+3$, 
$\# \Sigma_p=\binom{k}{2}+k+3$,
$|\Sigma_p|=k(k+1)+3\binom{k}{2}+3k+1$
and hence bounded by $k$,
as required. It remains to show that $G$ has a $k$-clique if and only
if $p$ GF/GP-matches $t$. 

\begin{LEM}\label{lem:GFM-mobile-1-hard-1}
  If $G$ has a $k$-clique then $p$ GF/GP-matches $t$.
\end{LEM}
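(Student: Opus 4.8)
The plan is to start from a $k$-clique $\{v^1_{h_1},\dots,v^k_{h_k}\}$ together with its clique edges $e^{i,j}_{h_{i,j}}$ (for $1\le i<j\le k$) and to build the matching function $f$ in two layers. First I would pin down the images of the pattern letters that occur more than once, namely $f(\gs)=\gs$, $f(E_{i,j})=a_{e^{i,j}_{h_{i,j}}}$ for $1\le i<j\le k$, and $f(A_i)=\gs_{h_i}$ for $1\le i\le k$; each of these is a single letter of $t$, so they act as \emph{anchors} that fix the crucial positions in the text. The remaining letters $L_i,R_i,L_{i,j},R_{i,j}$ and the $D_{i,j}$ each occur exactly once in $p$, so I am free to send them to (possibly long) substrings of $t$ whose only job is to \emph{absorb} the material lying between the anchors. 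I would then verify $f(p)=t$ by matching the three parts $p_0,p_1,p_2$ against $t_0,t_1,t_2$ separately.

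The part $p_0=\gs\gs\mapsto\gs\gs=t_0$ is immediate, and $p_1$ is easy: the block $\gs L_{i,j} E_{i,j} R_{i,j}$ of $p_1$ is sent to $\gs\,l_{i,j}\,a_{e^{i,j}_1}\cdots a_{e^{i,j}_m}\,r_{i,j}$ by letting $L_{i,j}$ absorb the prefix $l_{i,j}a_{e^{i,j}_1}\cdots a_{e^{i,j}_{h_{i,j}-1}}$, letting $E_{i,j}$ land on the single selected-edge letter $a_{e^{i,j}_{h_{i,j}}}$, and letting $R_{i,j}$ absorb the suffix $a_{e^{i,j}_{h_{i,j}+1}}\cdots a_{e^{i,j}_m}\,r_{i,j}$. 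The interesting part is $p_2$, where the block $\gs L_i\,\pih(i)\,R_i$ must be aligned with $\gs\,l_i\,\gs_1\edges(v^i_1)\gs_1\cdots\gs_n\edges(v^i_n)\gs_n\,r_i$. I would make the two occurrences of $A_i$ that open and close $\pih(i)$ land on the two copies of the separator $\gs_{h_i}$ that bracket $\edges(v^i_{h_i})$, let $L_i$ absorb everything from $l_i$ up to the first of these separators, and let $R_i$ absorb everything from the second separator down to $r_i$. It then remains to map the interior $D_{i,1}I(i,1)\cdots D_{i,k+1}$ of $\pih(i)$ onto $\edges(v^i_{h_i})$: since $I(i,j)$ is (up to orientation) the letter $E_{i,j}$, its image is the selected-edge letter between $V_i$ and $V_j$, and because the clique edge $e^{i,j}_{h_{i,j}}$ is incident to $v^i_{h_i}$ with its other endpoint in $V_j$, that letter genuinely occurs inside $\edges(v^i_{h_i})$; so I would place each $I(i,j)$ on it and let the $D_{i,j}$ swallow the intervening $d$-letters and non-selected edge letters, with the final $D_{i,k+1}$ also absorbing the trailing $d^{v^i_{h_i}}$.

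Two checks finish the argument. Non-emptiness is quick: every absorber image contains at least one dedicated letter ($l_{i,j}$ or $r_{i,j}$ for $L_{i,j},R_{i,j}$; $l_i$ or $r_i$ for $L_i,R_i$; at least the $d$-letter immediately preceding the targeted edge, or $d^{v^i_{h_i}}$, for each $D_{i,j}$), so no pattern letter is sent to the empty string. The step I expect to be the main obstacle is the injective (GP-matching) case. For the single-occurrence letters injectivity is clear, and this is exactly what the dummy letters were introduced for: each $L_i,R_i,L_{i,j},R_{i,j}$ carries its private letter $l_i,r_i,l_{i,j},r_{i,j}$, and each $D_{i,j}$ carries dummy letters $d^v_e,d^v$ tied to the specific vertex $v=v^i_{h_i}$, so all these images are pairwise distinct and disjoint from any single-letter image. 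The genuinely delicate point is distinctness among the anchor images $\gs$, the $a_{e^{i,j}_{h_{i,j}}}$, and the $\gs_{h_i}$: distinctness of the edge letters follows since a clique selects distinct edges, and they are disjoint from the separators, but I would have to argue carefully about the $\gs_{h_i}$, and keeping this bookkeeping consistent with the absorber definitions is where the bulk of the routine-but-fiddly verification lies.
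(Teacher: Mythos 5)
Your construction is essentially the paper's own: the same anchors ($f(\gs)=\gs$, $f(E_{i,j})=a_{e^{i,j}_{h_{i,j}}}$, $f(A_i)=\gs_{h_i}$) and the same absorber images for $L_{i,j},R_{i,j},L_i,R_i$ and the $D_{i,j}$ (the paper spells these out explicitly, e.g.\ $f(L_{i,j})=l_{i,j}a_{e^{i,j}_1}\dotsb a_{e^{i,j}_{h_{i,j}-1}}$, and sends each $D_{i,j}$ to the substring of $\edges(v^i_{h_i})$ strictly between consecutive selected-edge letters), after which it declares the verification straightforward. Note that the paper writes $f(A_i)=\gs_i$; your $f(A_i)=\gs_{h_i}$ is the reading that actually makes the pattern align with the text, so on the GF side your write-up is, if anything, more careful than the paper's.

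However, the point you flag as routine-but-fiddly bookkeeping --- distinctness of the images $\gs_{h_1},\dots,\gs_{h_k}$ --- is not bookkeeping: it is a genuine gap, and for the construction as written it cannot be closed. The separator letters $\gs_1,\dots,\gs_n$ are shared by all $k$ blocks of $t_2$, so if the given clique satisfies $h_i=h_{i'}$ for some $i\neq i'$ (nothing in \textsc{Multicolored Clique} forbids this; the clique could be $\{v^1_1,v^2_1,\dots,v^k_1\}$), then $f(A_i)=f(A_{i'})=\gs_{h_i}$ and $f$ is not injective. Worse, by the same ``the $\gs_j$ are the only letters repeated inside a block'' argument used in the reverse direction, any valid matching is forced to send $A_i$ to the single separator letter bracketing the selected vertex of $V_i$, so for such cliques no injective matching exists at all: the GP half of the lemma is false for this construction. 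This flaw is inherited from (and shared with) the paper, whose proof hides it behind ``straightforward to check''; it does not affect the GFM half, where your argument is complete. The repair is easy --- make the separators block-specific (letters $\gs^i_j$ for block $i$, which keeps $\#\Sigma_t$ bounded), or preprocess the \textsc{Multicolored Clique} instance so that vertices in distinct parts occupy disjoint index ranges --- after which your injectivity argument (every absorber image contains a private, once-occurring dummy letter; the anchors are then pairwise distinct letters) goes through verbatim.
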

\begin{proof}\sloppypar
  Let $\{v_{h_1}^1, \dotsc,v_{h_k}^k\}$ be the vertices and 
  $\SB e_{h_{i,j}}^{i,j} \SM 1 \leq i < j \leq k \SE$ be the edges of
  a $k$-clique of $G$ with $1 \leq h_i \leq n$ and $1 \leq h_{i,j}
  \leq m$ for every $i$ and $j$ with $1 \leq i < j \leq k$. 

  We define the function $f$ that GF/GP-matches $p$ to $t$ as follows:
  We set $f(\gs)=\gs$ and $f(\es)=\es$. Moreover, for every $i$ and
  $j$ with $1 \leq i < j \leq k$, we set
  $f(E_{i,j})=a_{e_{h_{i,j}}^{i,j}}$, $f(A_i)=\gs_{i}$,
  $f(L_{i,j})=l_{i,j} a_{e^{i,j}_1} \dotsb a_{e^{i,j}_{h_{i,j}-1}}$,
  $f(R_{i,j})=a_{e^{i,j}_{h_{i,j}+1}}\dotsb a_{e^{i,j}_m}r_{i,j}$,
  $f(L_i)=\es l_i \gs_1 \edges(v_1^i) \gs_1 \dotsb \gs_{h_i-1}
    \edges(v_{h_i-1}^i) \gs_{h_i-1}$, and
  $f(R_i)=\gs_{h_i+1} \edges(v_{h_i+1}^i) \gs_{h_i+1} \dotsb
    \gs_{n} \edges(v_{n}^i) \gs_{n} r_i$.

  For every $i$ and $j$ with $i \neq j$, let $e(i,j)$ be the
  edge $e_{h_{i,j}}^{i,j}$ if $i < j$ and the edge
  $e_{h_{j,i}}^{j,i}$, otherwise. Then, the letters $D_{i,j}$ are
  mapped as follows:
  \begin{itemize}
  \item For every $i$ and $j$ with $1
    \leq i \leq k$, $2 \leq j \leq k$, $i \neq j$, and $(i,j)\neq
    (1,2)$,
    we map $f(D_{i,j})$ to the substring of $\edges(v_{h_i}^i)$
    in between the occurrences (and not including these occurrences) 
    of the letters $e(i,j-1)$ and $e(i,j)$.
  \item We map $f(D_{1,2})$ to be the prefix
    of $\edges(v_{h_1}^1)$ ending before the letter $e(1,2)$.
  \item For every $i$ with $2 \leq i \leq k$, we map
    $f(D_{i,1})$ to the prefix of $\edges(v_{h_i}^i)$
    ending before the letter $e(i,1)$.
  \item For every $i$ with $1 \leq i < k$, we map
    $f(D_{i,k+1})$ to the suffix of $\edges(v_{h_i}^i)$
    starting after the letter $e(i,k)$.
  \item We map $f(D_{k,k+1})$ to be the suffix
    of $\edges(v_{h_1}^1)$ starting after the letter $e(k,k-1)$.
  \end{itemize}
  It is straightforward to check that $f$ GF/GP-matches $p$ to $t$.
\end{proof}

\begin{LEM}\label{lem:GFM-mobile-1-hard-1}
  If $p$ GF/GP-matches $t$ then $G$ has a $k$-clique.
\end{LEM}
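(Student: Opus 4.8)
The plan is to follow the same template as the reverse direction of Lemma~\ref{lem:GFM-occt-max-2}: first pin down the separators so that the match splits into independent blocks, and then read off a $k$-clique from the images of the letters $E_{i,j}$. First I would show $f(\gs)=\gs$. Since both $t$ and $p$ begin with $t_0=p_0=\gs\gs$, and this is the unique occurrence of the factor $\gs\gs$ in $t$ (consecutive blocks of $t_1$ and of $t_2$ are separated by a single $\gs$, and the vertex-separators $\gs_l$ are letters distinct from $\gs$), any image $f(\gs)$ of length at least two would begin with $\gs\gs$ and thus force a second occurrence of $\gs\gs$ in $t$, a contradiction. A direct count shows that $t$ and $p$ each contain exactly $\binom{k}{2}+k+3$ occurrences of $\gs$, so the $i$-th occurrence of $\gs$ in $p$ is mapped to the $i$-th occurrence of $\gs$ in $t$. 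This alignment yields the block decomposition: for every $1\le i<j\le k$ the factor $L_{i,j}E_{i,j}R_{i,j}$ of $p_1$ is mapped onto $l_{i,j}a_{e^{i,j}_1}\dotsb a_{e^{i,j}_m}r_{i,j}$, and for every $i$ the factor $L_i\pih(i)R_i$ of $p_2$ is mapped onto $l_i\gs_1\edges(v_1^i)\gs_1\dotsb\gs_n\edges(v_n^i)\gs_n r_i$.

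Next I would analyse the two kinds of blocks. From the $p_1$-block, since $f(L_{i,j})$ and $f(R_{i,j})$ are non-empty and must begin with $l_{i,j}$ and end with $r_{i,j}$ respectively, the image $f(E_{i,j})$ is a non-empty run of consecutive edge-letters of the colour pair $(i,j)$; in particular it consists only of $a$-letters. From the $p_2$-block I would argue, just as for the letters $A_i$ in Lemma~\ref{lem:GFM-occt-max-2}, that the only letters occurring more than once in $\gs_1\edges(v_1^i)\gs_1\dotsb\gs_n\edges(v_n^i)\gs_n$ are $\gs_1,\dots,\gs_n$, because all dummy-letters $d^v_e$, $d^v$ and all edge-letters $a_e$ appearing inside a single colour class are pairwise distinct (an edge incident to $v_l^i$ has its other endpoint outside $V_i$, so $a_e$ appears only once in this block). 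Since $A_i$ occurs exactly twice in $\pih(i)$, this forces $f(A_i)=\gs_{j_i}$ for a single index $j_i$, and the two occurrences must be sent to the two copies of $\gs_{j_i}$ that delimit $\edges(v_{j_i}^i)$. Consequently the interior of $\pih(i)$ is mapped exactly onto $\edges(v_{j_i}^i)$.

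Finally I would combine the two constraints. Because $\edges(v_{j_i}^i)$ contains no two consecutive $a$-letters (every $a_e$ is preceded by its dummy $d^v_e$), the run $f(E_{i,j})$, now located inside $\edges(v_{j_i}^i)$, must in fact be a single edge-letter $a_e$; moreover $e$ is incident to $v_{j_i}^i$, while the $p_1$-block already guarantees $e\in E_{i,j}$. Applying the same reasoning to the $p_2$-block of colour $j$ (where $E_{i,j}$ occurs as $I(j,i)$) shows that $e$ is also incident to $v_{j_j}^j$. Since $e\in E_{i,j}$ has exactly one endpoint in $V_i$ and one in $V_j$, we conclude $e=\{v_{j_i}^i,v_{j_j}^j\}$, so the chosen vertices are adjacent for every pair $i<j$; hence $\{v_{j_1}^1,\dots,v_{j_k}^k\}$ is a $k$-clique. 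For Max-GPM the argument applies verbatim, as a GP-match is in particular a GF-match.

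The main obstacle I expect is the repetition bookkeeping in the middle step: certifying that the string $\gs_1\edges(v_1^i)\gs_1\dotsb\gs_n\edges(v_n^i)\gs_n$ has no nontrivial repeated factor, so that $f(A_i)$ is pinned to a single $\gs_{j_i}$ and the interior of $\pih(i)$ lands on exactly one $\edges(v_{j_i}^i)$. Once this is secured, the ``single edge-letter'' forcing and the clique conclusion are short deductions, and the dummy letters $l_{i,j},r_{i,j},l_i,r_i,d^v_e,d^v$ play no role beyond having been inserted to keep $f$ injective in the forward direction.
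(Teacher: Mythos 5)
Your proof is correct and follows the same skeleton as the paper's own argument: pin down $f(\gs)=\gs$ via the uniqueness of the factor $\gs\gs$ in $t$, align the $\gs$-separators by counting occurrences, obtain the same block decomposition (the paper's facts (1) and (2)), force $f(A_i)=\gs_{j_i}$ from the observation that only the letters $\gs_1,\dots,\gs_n$ repeat inside a colour-class block (the paper's fact (3)), and read off the clique. The one place you genuinely diverge is the final step: the paper disposes of the possibility $|f(E_{i,j})|\geq 2$ with an informal ``w.l.o.g.\ $E_{i,j}$ is mapped to exactly one edge, since mapping it to more edges only makes later occurrences of $E_{i,j}$ harder to match,'' whereas you show that $|f(E_{i,j})|=1$ is \emph{forced}: by the $p_1$-block, $f(E_{i,j})$ consists only of edge-letters, while the string $\edges(v_{j_i}^i)$ into which it must also fit contains no two consecutive edge-letters, since every $a_e$ there is preceded by its dummy $d_e^v$. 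Your version is tighter and removes the only hand-wavy step of the paper's proof, and it also makes the adjacency conclusion cleaner (the single edge $e$ is incident to both chosen vertices, hence equals $\{v_{j_i}^i,v_{j_j}^j\}$). One small inconsistency in your closing remark: you claim the dummy letters $d_e^v$ ``play no role beyond injectivity,'' yet your own argument uses them crucially to rule out consecutive edge-letters; the paper's w.l.o.g.\ formulation is exactly what one is reduced to if that role is discarded.
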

\begin{proof}
  Let $f$ be a function that GF/GP-matches $p$ to $t$. We first
  show that $f(\gs)=\gs$. 
  Suppose for a contradiction that $f(\gs)\neq \gs$.
  Because $t$ and $p$ start with $\gs\gs$ it follows that $f(\gs)$
  is a string that starts with $\gs\gs$. However, $t$ does not
  contain any other occurrence of the string $\gs\gs$ and hence the remaining
  occurrences of $\gs$ in $p$ cannot be matched by $f$.

  Because $t$ and $p$ have the same number of occurrences of $\gs$,
  it follows that the $i$-th occurrences of $\gs$ in $p$ has to be
  mapped to the $i$-th occurrence of $\gs$ in $t$. We obtain that:
  \begin{enumerate}
  \item[(1)] For every $i$, $j$ with $1 \leq i < j \leq k$, the substring
    $L_{i,j} E_{i,j} R_{i,j}$ of $p$ has to be mapped to the substring
    $l_{i,j} a_{e^{i,j}_1} \dotsb a_{e^{i,j}_m} r_{i,j}$ of $t$.
  \item[(2)] For every $i$ with $1 \leq i \leq k$,  the substring
    $L_i \pih(i) R_i$ of $p$ has to be mapped to the substring
    $l_i \gs_1 \edges(v_1^i) \gs_1 \dotsb \gs_n \edges(v_n^i) \gs_n r_i$ of $t$.
  \end{enumerate}
  Because for every $i$ with $1 \leq i \leq k$ the letters $\gs_j$
  are the only letters that occur more than once in the substring 
  $l_i \gs_1 \edges(v_1^i) \gs_1 \dotsb \gs_n \edges(v_n^i) \gs_n r_i$
  of $t$, we obtain from (2) that $A_i$ has to be mapped to $\gs_j$ for some
  $j$ with $1 \leq j \leq n$. Consequently:
  \begin{enumerate}
  \item[(3)] for every $i$ with $1 \leq i \leq k$,  the substring
    $\pih(i)$ of $p$
    has to be mapped to a substring
    $\gs_j \edges(v_j^i) \gs_j$ of $t$ for some $j$ with $1 \leq j
    \leq n$.
  \end{enumerate}
  It follows from (1) that for every $i$, $j$ with $1 \leq i < j
  \leq k$, $f(E_{i,j})$ is mapped to some edges between $V_i$ and
  $V_j$. W.l.o.g. we can assume that $f(E_{i,j})$ is mapped to exactly
  one edge between $V_i$ and $V_j$, because mapping it to more than
  one edge would make the matching of the latter occurrences of
  $E_{i,j}$ in $p$ even harder, i.e., whenever a latter occurrence of
  $E_{i,j}$ in $p$ can be mapped to more than one edge it also can be
  mapped to any of these edges. 
  Furthermore, because of (3) it follows that for every $i$
  with $1 \leq i \leq k$, it holds that the edges mapped to any
  $E_{l,r}$ with $1 \leq l < r \leq k$ such that $l=i$ or $r=i$ have
  the same endpoint in $V_i$. Hence, the set of edges mapped to 
  the letters $E_{i,j}$ for $1 \leq i <j \leq k$ form a $k$-clique
  of $G$.
\end{proof}
This concludes the proof of Theorem~\ref{the:GFM-mobile-1}.

\end{document}